	\numberwithin{equation}{section}
	\newtheorem{thm}{Theorem}[section]
	\newtheorem{lem}[thm]{Lemma}
	\newtheorem{prop}[thm]{Proposition}
	\newtheorem{cj}[thm]{Conjecture}
	\theoremstyle{definition}
	\newtheorem{defin}[thm]{Definition}
	\theoremstyle{remark}
	\newtheorem{rmk}[thm]{Remark}
\DeclareMathOperator{\R}{\mathbb{R}}
\DeclareMathOperator{\C}{\mathbb{C}}
\DeclareMathOperator{\N}{\mathbb{N}}
\DeclareMathOperator{\Z}{\mathbb{Z}}
\DeclareMathOperator{\dd}{d}
\DeclareMathOperator{\tr}{Tr}
\DeclareMathOperator{\Div}{Div}
\DeclareMathOperator{\Arg}{Arg}
\newcommand{\hs}{\mathrm{H}\mathbb{S}^2}
\newcommand{\cP}{\mathbb{P}}
\newcommand{\cB}{\mathbb{B}}
\newcommand{\mz}{\mathcal{Z}}
\newcommand{\mM}{\mathcal{M}}
\newcommand{\mE}{\mathcal{E}}
\newcommand{\mA}{\mathcal{A}}
\newcommand{\mC}{\mathcal{C}}
\newcommand{\mV}{\mathcal{V}}
\newcommand{\bN}{\mathbf{N}}
\newcommand{\bM}{\mathbf{M}}
\newcommand{\sR}{\mathsf{R}}
\newcommand{\sG}{\mathsf{G}}
\newcommand{\sT}{\mathsf{T}}
\newcommand{\Br}{\mathcal{B}}
\newcommand{\Ync}{Y_{\mathrm{nc}}}
\newcommand{\Xpre}{X_{\mathrm{pre}}}
\newcommand{\pz}{\widehat{\mathcal{Z}}}
\newcommand{\MF}{\mathrm{MF}}
\newcommand{\DC}{\mathbf{D}}
\newcommand{\Eop}{\mathsf{E}}
\newcommand{\ii}{\mathtt{i}}
\newcommand{\tnc}{t_{\mathrm{nc}}}
\DeclareRobustCommand{\SkipTocEntry}[5]{}
\begin{document}
\bibliographystyle{myJHEP}
\captionsetup[figure]{labelfont={sc,small},labelformat={default},labelsep=period,font=small}
\captionsetup[table]{labelfont={sc,small},labelformat={default},labelsep=period,font=small}

{\linespread{1.1}
\title[NC resolutions and pre-quotients of CY double covers]{Non-commutative resolutions and pre-quotients of Calabi--Yau double covers}

\author[]{Tsung-Ju Lee$^{\sharp}$, Bong H.~Lian$^{\ast}$, Mauricio Romo$^{\ddag}$,\protect\\ Leonardo Santilli$^{\flat}$}
\address[]{$^{\sharp}$\small Department of Mathematics,\protect\\ \small National Cheng Kung University, Tainan, Taiwan.}
\address[]{$^{\ast,\ddag}$\small Center for Mathematics and Interdisciplinary Sciences,\protect\\ \small Fudan University, Shanghai,
200433, China.}
\address[]{$^{\ast,\ddag}$\small Shanghai Institute for Mathematics and Interdisciplinary Sciences (SIMIS),\protect\\ \small Shanghai,
200433, China.}
\address[]{$^{\ast}$\small Department of Mathematics,\protect\\ \small Brandeis University, Waltham, MA 02453, USA.}
\address[]{$^{\flat}$\small Yau Mathematical Sciences Center,\protect\\ \small Tsinghua University, Beijing, 100084, China.}
\email{$^{\sharp}$tsungju@gs.ncku.edu.tw}
\email{$^{\ast}$lianbong@gmail.com}
\email{$^{\ddag}$mromoj@simis.cn}
\email{$^{\flat}$santilli@tsinghua.edu.cn}

\linespread{1.12}
\begin{abstract}
	Following an earlier proposal \cite{Lee:2023piu} to apply the GLSM formalism to understand the so-called non-commutative resolution, this paper takes one important step further to extend this formalism to a much larger class of non-commutative resolutions. The proposal was initially motivated by the discovery of a new class of mirror pairs singular Calabi--Yau varieties \cite{Hosono:2020gpj}, given by certain branched double covers over toric varieties of MPCP type. The overarching problem was to understand these mirror pairs from the viewpoint of homological mirror symmetry \cite{Kontsevich:1994dn}. In the present paper, we propose two main results along this line. First, one new insight is that the `gauge-fixing' condition on the branching locus of the double cover used in \cite{Hosono:2020gpj} can be relaxed in an interesting way. This turns out to produce GLSMs that describe a much larger class of non-commutative resolutions, leading to $A$-periods for a larger class of non-commutative resolutions, as well as the GKZ systems for their $A$-periods. Second, we show that the $A$-periods can also be realized as $A$-periods of a certain smooth CICY family in a toric variety of MPCP type, such that a suitable finite quotient of this family recovers the double cover CY we have started with. We call this CICY family the `pre-quotient' of the double cover CY. This realization strongly suggests that pre-quotient may provide an important approach for understanding homological mirror symmetry for singular double cover CY varieties and non-commutative resolutions.
\end{abstract}
}

\maketitle
%\vspace{1.5cm}
\clearpage
\tableofcontents
\clearpage 

\section{Introduction}

The main focus of the present paper is to explore some aspects of homological mirror symmetry (HMS) \cite{Kontsevich:1994dn} for certain classes of singular Calabi--Yau. HMS is formulated as an equivalence between a bounded derived category of coherent sheaves and a Fukaya category. Much effort has been made over the past three decades to understand HMS for smooth Calabi--Yau manifolds. The insight from considerations of marginal deformations of superconformal field theories \cite{Lerche:1989uy,Cecotti:1991me} indicates that a special class of quantities, such as chiral/anti-chiral rings or supersymmetric boundary conditions, on a singular Calabi--Yau variety should be equivalent to their counterparts on its crepant resolution, whenever one exists. Mathematically, this idea has been articulated in the crepant resolution conjecture \cite{Ruan:2001xm}. 
Therefore, with the goal of formulating HMS for a singular Calabi--Yau variety, it is natural to look for a notion of categorical resolution of its bounded derived category of coherent sheaves, as well as a suitable replacement for the Fukaya category in the singular setting. For the former, one possibility is to consider a non-commutative resolution, whose origins we can trace physically to \cite{Caldararu:2010ljp} and mathematically to \cite{VandenBergh:2002,Kuznetsov:06}.\par

Following \cite{Lee:2023piu}, this paper represents a further attempt to test these ideas by studying non-commutative resolutions for a larger class of
recently discovered mirror pairs of singular Calabi--Yau varieties \cite{Hosono:2020gpj}. The core idea is that a non-commutative resolution would provide a good model for the derived category of such a singular Calabi--Yau variety. Let us first recall the strategy proposed in \cite{Lee:2023piu}. We began with the initial data: a toric variety $\cB$ of MPCP type, equipped with a nef partition of $-K_{\cB}$. These data describe a particular family of equisingular branched double cover Calabi--Yau varieties $Y$ over $\cB$. The putative mirror Calabi--Yau varieties are likewise double covers over a `mirror' toric variety $\cB^\vee$. In addition, choosing the branching locus to satisfy a certain gauge-fixing condition, following \cite{Hosono:2018kqt,Hosono:2019jet}, would imply that the $B$-periods of this mirror Calabi--Yau family are solutions to a GKZ system. This led to an explicit determination of the $B$-periods. 

Therefore, to test the non-commutative resolution model, the strategy was to find a way to realize the $A$-periods of the singular double covers, starting from the hypothetical non-commutative resolution of $Y$. To this end, we proposed a new $\mathcal{N}=(2,2)$ gauged linear sigma model (GLSM), constructed its so-called hemisphere partition functions following \cite{Hori:2013ika}, and used them to realize the $A$-periods of objects in the category of $B$-branes of the non-commutative resolution. Finally, to connect to the double cover Calabi--Yau family we started with, we showed that the $A$-periods so realized are, in fact, $B$-periods of the mirror Calabi--Yau family \cite{Lee:2023piu}. In the present work, we generalize the construction of Calabi--Yau double covers by allowing the branching loci to be defined by a general nef partition of $-2K_{\mathbb{B}}$. This violates the 'gauge-fixing' condition for the branching locus, hence we cannot apply the mirror construction of \cite{Hosono:2020gpj}, but the definition of non-commutative resolution generalizes straightforwardly. This allows us to compute the $A$-periods following the same prescription; however, in the generic case, we do not have a mirror to compare with. Nevertheless we propose a connection with $A$-periods of smooth Calabi--Yau varieties as we summarize in more detail below.

\subsection{Summary of results}\label{intro:results}
Given the data determining a singular Calabi--Yau double cover $Y\longrightarrow \mathbb{B}$ of a smooth, compact and toric Fano variety $\mathbb{B}$, as explained in \S\ref{sec:doublecover}, we introduce the following indexed lists of vectors in $\mathbb{Z}^s$, with positive entries:
\begin{equation*}
\Theta:=\{ \underline{\theta}^{(1)},\ldots,\underline{\theta}^{(N)}\},\qquad \mathcal{D}:=\{\underline{d}^{(1)},\ldots,\underline{d}^{(\hat{r})}\} .
\end{equation*}
The vectors $\underline{\theta}^{(i)}$ correspond to the $(\mathbb{C}^{*})^{s}$ weights of the GIT quotient representation of $\mathbb{B}$, while the vectors $\underline{d}^{(\alpha)}$ to the weights of the sections $s_{\alpha}$ determining the branching locus of $Y$. More precisely, if we denote $(z_{1},\ldots,z_N)$ the homogeneous coordinates of $\mathbb{B}$, we have
\begin{equation*}
\begin{aligned}
    \lambda\cdot z_i &= \left(\prod_{a=1}^{s}\lambda_{a}^{\theta^{(i)}_{a}}\right) z_i , \\
    s_{\alpha}(\lambda\cdot z)&=\left(\prod_{a=1}^{s}\lambda_{a}^{d^{(\alpha)}_{a}}\right) s_{\alpha}(z),\qquad \lambda\in(\mathbb{C}^{*})^{s}.
\end{aligned}
\end{equation*}
The cardinality of these lists is given by $|\Theta|=N$ and $|\mathcal{D}|=\hat{r}$, i.e.~$\underline{\theta}^{(i)}$ and $\underline{\theta}^{(j)}$ are considered distinct if $i\neq j$, and the same holds between the elements of $\mathcal{D}$. Next, from these vectors we form the set:
\begin{equation*}
\mathscr{J}:=\{ (\underline{\theta}^{(i)},\underline{d}^{(\alpha_{i})})\in \Theta \times \mathcal{D}~|~\underline{d}^{(\alpha_{i})}=\underline{\theta}^{(i)}~\text{and}~\underline{\theta}^{(i)}\neq \underline{\theta}^{(j)}, \underline{d}^{(\alpha_{i})}\neq \underline{d}^{(\alpha_{j})}~\text{for any two pairs}\} ,
\end{equation*}
and define the natural maps $p_j\colon\Theta \times \mathcal{D}$, $j=1,2$ to $\Theta$ and $\mathcal{D}$, respectively. Now define
\begin{eqnarray*}
\mathscr{H}:=\left\{(\underline{d}^{(\alpha)},\underline{d}^{(a_{\alpha})})\in \mathcal{D}\times  \mathcal{D}~\middle|~
\begin{aligned}
    &\underline{d}^{(a_{\alpha})}\in\mathcal{D}\setminus p_2(\mathscr{J}) ~\text{satisfies}~2\underline{d}^{(a_{\alpha})}=d^{(\alpha)}~\text{and}
    \nonumber\\
    &\underline{d}^{(\alpha)}\neq \underline{d}^{(\beta)},\underline{d}^{(a_{\alpha})}\neq \underline{d}^{(a_{\beta})}~\text{for any two pairs}.
\end{aligned}
\right\}
\end{eqnarray*}
and we define the maps $\pi_{j}\colon \mathcal{D} \times \mathcal{D}$, $j=1,2$ to the first and second factor of $\mathcal{D}$, respectively.\par 
Our main result is given in Theorem \ref{thm:smoothX}, which establishes that the $A$-periods of the noncommutative resolution of $Y$, denoted $\Ync$ and defined in \S\ref{sec:NCmath}, and the $A$-periods of the (smooth) Calabi--Yau complete intersection $X$ satisfy the same GKZ system, after a simple change of variables. The variety $X$ of Theorem \ref{thm:smoothX} is given explicitly as a complete intersection inside the GIT quotient:
\begin{eqnarray*}
\mathbb{C}^{N-\lvert\mathscr{J}\rvert +\hat{r}- \lvert \mathscr{H}\rvert} /\!\!/ (\mathbb{C}^{*})^{s} ,
\end{eqnarray*}
where $(\mathbb{C}^{*})^{s}$ acts with weights $2\underline{\theta}^{(i)}$ for each $\underline{\theta}^{(i)} \in \Theta\setminus p_{1}(\mathscr{J})$ on the first $(N-\lvert\mathscr{J}\rvert)$ coordinates, and with weights $\underline{d}^{(\alpha)} \in \mathcal{D}\setminus\pi_{1}(\mathscr{H})$ on the remaining $(\hat{r}- \lvert \mathscr{H}\rvert)$ coordinates. Then $X$ is given by the complete intersection of the vanishing loci of generic sections of weight $2\underline{d}^{(\alpha)}$, for $\underline{d}^{(\alpha)}\in\mathcal{D}\setminus(p_2(\mathscr{J})\cup\pi_2(\mathscr{H}))$. Furthermore, whenever any set of weights $d^{(\alpha)}_{a}$ is divisible by $2$, for all $\alpha=1,\ldots \hat{r}$, the model can be simplified a bit further by dividing all the weights under the action of the $a^{\text{th}}$ torus by $2$.\par
As an example, consider a double cover of the weighted projective space $\cP^{N-1}_{\vec{\theta}}$ branched over the $N$ coordinate hyperplanes and $r$ hypersurfaces of degrees $\vec{d}=(d^{(1)}, \dots, d^{(r)})$, subject to $\sum_{\alpha=1}^r d^{(\alpha)} = \sum_{i=1}^{N} \theta^{(i)}$. In this case, $\Theta=\vec{\theta}$, $\mathcal{D}=\{d^{(1)}, \dots, d^{(r)}, \theta^{(1)}, \dots, \theta^{(N)}\}$, and $\Theta\setminus p_{1}(\mathscr{J})=\emptyset$. We have that $X=\cP^{r+N-1}_{\vec{d},\vec{\theta}} \left[ d^{(1)}, \dots, d^{(r)}\right] $ is a complete intersection of $r$ hypersurfaces of even degree in the weighted projective space $\cP^{r+N-1}_{\vec{d},\vec{\theta}}$, described as a toric variety $\C^{r+N} /\!\!/\C^{\ast}$ where the torus acts on the affine coordinates with weights $\mathcal{D}$.\par
\medskip
Moreover, we show in \S\ref{sec:prequotient} that in the case gauge fixing (as defined in \S\ref{sec:LLY}) is available, the singular variety $Y$ can be written as a global quotient $\Xpre/\Gamma$ for some finite abelian group $\Gamma$. We conjecture that 
\begin{equation*}
X=\Xpre .
\end{equation*}

\subsection{Outline}
The rest of this work is organized as follows.\par
We start by setting the stage and reviewing the relevant constructions in 
\S\ref{sec:CYpreliminaries} and in \S\ref{sec:GLSM}. The main results are collected in \S\ref{sec:maingeom} and \S\ref{sec:mainGLSM}.\par
In more detail, after briefly introducing the necessary toolkit from toric geometry (\S\ref{sec:toric}), we define the Calabi--Yau double covers of \cite{Hosono:2020gpj} in \S\ref{sec:doublecover}. Their non-commutative resolutions are defined in \S\ref{sec:NCmath}, and their associated GKZ systems are given in \S\ref{sec:GKZdef}. 

In \S\ref{sec:GLSM}, we review the GLSM description of the non-commutative resolutions of singular double covers, and define the corresponding $A$-periods.\par
In \S\ref{sec:maingeom}, we construct the pre-quotient geometries for the singular double covers. Conjectures \ref{conj1} and \ref{conj:prequotient} predict an explicit relation between the $A$-periods of the non-commutative resolution and the $A$-periods of the pre-quotient. The conjecture is substantiated in \S\ref{sec:mathEx} in families of selected examples. 

In \S\ref{sec:mainGLSM}, to every Calabi--Yau double cover of any toric Fano variety we associate a Calabi--Yau complete intersection whose $A$-periods satisfy the same GKZ system, using GLSM techniques. The central result Theorem \ref{thm:smoothX} is stated and explained in \S\ref{sec:GLSMGKZderivation}. Examples of Picard rank one are worked out explicitly in \S\ref{sec:GLSMEx}.\par
The text is supplemented with technical appendices. Appendix \ref{app:Notation} summarizes our notation. Appendix \ref{app:CICYperiod} contains details and proofs of computations for Calabi--Yau complete intersections. Finally, Appendix \ref{app:FImoduli} explains the origin of the GKZ variables in the GLSM framework, and unambiguously fixes the relation between hemisphere partition functions and $A$-periods in full generality.

\addtocontents{toc}{\SkipTocEntry}
\subsection*{Acknowledgements}
We would like to thank Shinobu Hosono, Johanna Knapp and Kentaro Hori for enlightening discussions. MR and LS are grateful to the Department of Mathematics at National Cheng Kung University for hospitality during the completion of this work. LS also thanks SIMIS for hospitality at various stages of this project. TJL is supported by
NSTC grant 112-2115-M-006-016-MY3.
The work of LS was supported by the Shuimu Scholars program of Tsinghua University and by the National Natural Science Foundation of China grant W2433005 ``String theory, supersymmetry, and applications to quantum geometry.''

\section{Calabi--Yau double covers and non-commutative resolutions}
\label{sec:CYpreliminaries}
This preliminary section introduces the main characters. Ingredients from toric geometry are collected in \S\ref{sec:toric}, and we refer to \cite{Borisov:1993mirror,Cox:2011tor} for more extensive treatments. We then introduce Calabi--Yau branched double covers in \S\ref{sec:doublecover}, closely following \cite{Hosono:2020gpj,Lee:2020oyt}. Non-commutative resolutions are introduced in \S\ref{sec:NCmath}. In \S\ref{sec:GKZdef} we define the GKZ system for the Calabi--Yau varieties of interest in this work.
For convenience, we summarize our notation in Appendix \ref{app:Notation}.

\subsection{Toric geometry}
\label{sec:toric}
In this subsection, we briefly recall the Batyrev--Borisov duality for toric complete intersections \cite{Batyrev:1994pg}.\par

\subsubsection{Toric varieties}
Let $\bN =\Z^{n}$ be a lattice of rank $n$
and $\bM := \mathrm{Hom}_{\Z} (\bN, \Z )$ be its dual lattice. Denote by $\bN_{\R} := \bN \otimes_{\Z} \R$ and $\bM_{\R} := \bM \otimes_{\Z} \R$ their real scalar extensions. The canonical pairing between $\bM_{\R}$ and $\bN_{\R}$
will be denoted by the bracket 
$$\langle -, -\rangle\colon \bN_{\R} \times \bM_{\R} \to \R.$$
Recall that a polyhedron is an intersection of a finite number of
half-spaces in Euclidean space.
A bounded polyhedron is called a polytope; it is
the convex hull of a finite number of points in Euclidean space.
A polyhedron (polytope) is called a lattice polyhedron (polytope)
if all of its vertices belong to the relevant integral lattice.
Let $\Delta$ be a lattice polytope in \(\mathbf{M}_{\mathbb{R}}\). 
If \(\mathbf{0}\in\Delta\), 
the polar dual $\Delta^{\ast}$ is defined as the set
\begin{equation*}
	\Delta^{\ast} := \left\{ x \in \bN_{\R}~|~\langle x,y \rangle \ge -1,~\forall y \in \Delta  \right\}.
\end{equation*}

A reflexive polytope is a lattice polytope 
whose polar dual is again a lattice polytope.
A reflexive polytope \(\Delta\) in \(\mathbf{M}_{\mathbb{R}}\) 
determines two distinguished fans:
the normal fan \(\mathcal{N}(\Delta)\) in \(\mathbf{N}_{\mathbb{R}}\)
and the face fan \(\mathcal{F}(\Delta)\) in \(\mathbf{M}_{\mathbb{R}}\).
In particular, we have
\begin{equation*}
    \mathcal{N}(\Delta)=\mathcal{F}(\Delta^{\ast})~\mbox{and}~
    \mathcal{F}(\Delta)=\mathcal{N}(\Delta^{\ast}).
\end{equation*}
Given a fan \(\Sigma\), one can construct a toric variety \(X_{\Sigma}\)
in the usual way. We denote by \(\Sigma(k)\)
the set of all \(k\)-dimensional cones in \(\Sigma\).
According to the construction, the toric divisors in \(X_{\Sigma}\)
are in one-to-one correspondence with \(\Sigma(1)\).

Each polytope \(\Delta\) determines a projective 
toric variety by
\(\mathbb{P}_{\Delta}:=X_{\mathcal{N}(\Delta)}\).

The toric variety \(\mathbb{P}_{\Delta}\)
is often singular. However, in the present circumstance, one can take a
so-called \emph{maximal projective crepant partial (MPCP) desingularization};
it is defined as the projective toric variety \(\cB_{\Delta}\) whose defining fan
is a refinement of \(\mathcal{N}(\Delta)=\mathcal{F}(\Delta^{\ast})\) obtained
from firstly adding new one-cones whose primitive generators belong to 
\(\Delta^{\ast}\cap \mathbf{N}\), and then taking a simplicialization
for the resulting fan. MPCP desingularizations always exist
but are usually not unique. 
\begin{rmk}
    In \cite{Lee:2020oyt,Lee:2023piu}, an MPCP desingularization
    of \(\mathbb{P}_{\Delta}\) % (resp.~\(\mathbb{P}_{\Delta^{\ast}}\))
    is denoted by \(X\). % (resp.~\(X^{\vee}\)).
    The choice of letter \(\mathbb{B}\)
    in the present work 
    is because later on this will be the `base', and 
    also $\cB_{\Delta}$ is really a blow-up of $\cP_{\Delta}$.
\end{rmk}

\subsubsection{Nef partitions}
\label{sec:nefpartition}
Let \(\Delta\) be a 
reflexive polytope in \(\mathbf{M}_{\mathbb{R}}\).
\begin{defin}\label{def:BBnef}
A \emph{nef partition on $\cP_{\Delta}$} is a partition of the set of \(1\)-cones
\begin{equation*}
\mathcal{N}(\Delta)(1)= \mathcal{I}_1 \sqcup \cdots \sqcup \mathcal{I}_r ,
\end{equation*}
such that there exist integral convex piece-wise 
linear functions $\varphi_1, \dots, \varphi_r$ on $\mathcal{N}(\Delta)$ satisfying the conditions
\begin{equation}
\label{eq:Borisovphi}
	\varphi_{\alpha} \left(  \nu \right) = \begin{cases} 1 & \text{if } \nu \in \mathcal{I}_{\alpha}, \\ 0 & \text{otherwise} \end{cases}
\end{equation}
for all \(\alpha=1,\ldots,r\).
\end{defin}
For a \(1\)-cone \(\nu\in\mathcal{N}(\Delta)(1)\), we denote by \(D_{\nu}\) the 
corresponding Weil divisor.
Following \cite{Borisov:1993mirror}, to a nef partition we associate the 
collection $\{E_1, \dots, E_r\}$ of nef divisors 
\begin{equation*}
	E_{\alpha} := \sum_{\nu \in \mathcal{I}_{\alpha}} D_{\nu} 
\end{equation*}
on \(\mathbb{P}_{\Delta}\).
The definition already implies that 
each \(E_{\alpha}\) is Cartier and numerically effective (nef), and
certainly,
\begin{equation*}
    E_{1}+\cdots+E_{r}=-K_{\mathbb{P}_{\Delta}}.
\end{equation*}
Since \(\mathbb{B}_{\Delta}\to \mathbb{P}_{\Delta}\) is crepant,
the pull-back of the decomposition to $\cB_{\Delta}$ also gives
rise to a decomposition of the anti-canonical divisor of \(\mathbb{B}_{\Delta}\)
and it will be denoted by the same symbol.
Moreover, a nef partition on $\cP_{\Delta}$ determines a 
collection $\left\{ \Delta_1, \dots ,\Delta_r \right\}$ of lattice polytopes in \(\bM_{\R}\) such that
\begin{equation*}
	\Delta_1 + \cdots + \Delta_r = \Delta,
\end{equation*}
where \(\Delta_{\alpha}\) is the section
polytope of the divisor \(E_{\alpha}\).
By construction, if we set %it follows that 
\begin{equation*}
    \nabla_{\alpha}=\mathrm{Conv}\left(\{\mathbf{0}\}\cup\{\nu,~\nu\in \mathcal{I}_{\alpha}\}\right),
\end{equation*}
then one can show that 
\begin{equation*}
    \nabla :=\nabla_{1}+\cdots+\nabla_{r}
\end{equation*}
is again a reflexive polytope in \(\mathbf{N}_{\mathbb{R}}\).
Here, we slightly abuse the notation; for a \(1\)-cone \(\nu\), 
we also use the same symbol \(\nu\)
to denote its primitive generator. The decomposition 
\(\nabla =\nabla_{1}+\cdots+\nabla_{r}\) determines a
nef partition
\begin{equation*}
    F_{1}+\cdots+F_{r}=-K_{\mathbb{P}_{\nabla}}
\end{equation*}
on the toric variety \(\mathbb{P}_{\nabla}\).
We can choose a MPCP desingularization \(\mathbb{B}_{\nabla}\to\mathbb{P}_{\nabla}\)
and the pullback of the decomposition gives rise to a nef-partition on \(\mathbb{B}_{\nabla}\).

\subsubsection{Calabi--Yau complete intersections}
The toric data of $\cB_{\nabla}$ equipped with a nef partition $F_1 + \cdots + F_r$ determine a family of 
Calabi--Yau complete intersections in 
$\cB_{\nabla}$ \cite{Batyrev:1994pg}.
\begin{prop}[{\cite[Proposition 4.15]{Batyrev:1994pg}}] 
Let $f^{(\alpha)}$ be a generic Laurent polynomial whose Newton polyhedron is $\nabla_{\alpha}$, $\forall \alpha=1, \dots, r$, and define 
\begin{equation*}
	X:=\overline{ \left\{ f^{(1)}=\cdots=f^{(r)}=0\right\}} \subset \cB_{\nabla} .
\end{equation*}
Then $X$ is Calabi--Yau variety of dimension $\dim (X)=n-r$. Moreover,
\(X\) has at most Gorenstein terminal abelian quotient singularities.
\end{prop}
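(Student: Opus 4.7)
The plan is to verify the three assertions in sequence: the expected dimension $n-r$, the triviality of $K_X$, and the control on singularities.

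For the first two, I would start by observing that each $F_\alpha$ is a Cartier nef divisor on $\cB_\nabla$, being the pullback of an effective nef Cartier divisor from $\cP_\nabla$ under the crepant MPCP resolution. On a projective simplicial toric variety, Demazure's theorem implies that nef Cartier divisors are globally generated, so each linear system $\lvert F_\alpha\rvert$ is basepoint-free. An iterated application of Bertini's theorem to basepoint-free linear systems then produces, for generic $(f^{(1)},\ldots,f^{(r)})$, an irreducible and reduced complete intersection of the expected codimension $r$, so $\dim X = n-r$. The Calabi--Yau property follows at once from adjunction:
\begin{equation*}
    K_X = \bigl( K_{\cB_\nabla} + F_1 + \cdots + F_r \bigr)\bigr|_X = \mathcal{O}_X,
\end{equation*}
because the nef partition satisfies $F_1 + \cdots + F_r = -K_{\cB_\nabla}$.

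The substantive point is the analysis of singularities. By construction, $\cB_\nabla$ is a simplicial projective toric variety whose singular locus is concentrated on toric strata of codimension $\geq 3$ and consists at worst of Gorenstein terminal abelian quotient singularities; Gorenstein-ness is inherited from the reflexive polytope $\nabla$, while terminality is enforced by the MPCP property, which inserts every lattice point of $\nabla\cap\bN$ as a ray generator and so removes, by Reid's combinatorial criterion, any non-terminal stratum. Near any singular point, $\cB_\nabla$ is \'etale-locally $\C^n / G$ for a finite abelian $G \subset \mathrm{GL}_n(\C)$ acting freely in codimension one. I would then lift the generic $f^{(\alpha)}$ to $G$-invariant polynomials on $\C^n$, apply Bertini to the induced basepoint-free invariant system to conclude that the preimage $\widetilde{X}$ is smooth and meets every $G$-fixed locus transversely, and realize $X \cong \widetilde{X}/G$ \'etale-locally.

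The hard part is then to show that $X$ inherits terminality, not merely canonicality. The local model exhibits the singularities of $X$ as abelian quotient singularities whose age data is obtained from that of $G$ on $\C^n$ by restriction to the tangent directions of $\widetilde{X}$, and a priori this restriction can lower ages. To handle this I would exploit the Newton-polytope description of $\lvert F_\alpha\rvert$: every lattice point of $\nabla_\alpha$ contributes a monomial to $f^{(\alpha)}$, so genericity forces the intersection of $X$ with each toric fixed stratum to be proper. One then has to verify, stratum by stratum, that the action induced on the tangent space of $\widetilde{X}$ still has every non-identity element of age $> 1$; this should follow from the fact that $\nabla = \nabla_1 + \cdots + \nabla_r$ is itself a nef partition of a reflexive polytope, which ties the combinatorics of the summands $\nabla_\alpha$ to the codimension of the singular toric strata in a way compatible with Reid's bound.
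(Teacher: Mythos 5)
The paper does not prove this statement at all --- it is imported verbatim from Batyrev--Borisov (the cited Proposition 4.15), so your attempt can only be measured against the standard argument there. Your first two steps (expected dimension via basepoint-freeness of the nef Cartier divisors $F_\alpha$ plus Bertini, and $K_X=\mathcal{O}_X$ by adjunction from $F_1+\cdots+F_r=-K_{\cB_\nabla}$) are fine, as is your description of the ambient space: $\cB_\nabla$ is simplicial with at worst Gorenstein terminal abelian quotient singularities supported on toric strata of codimension $\ge 3$.

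The genuine gap is exactly the step you flag as ``the hard part'' and then do not carry out: you reduce terminality of $X$ to checking, stratum by stratum, that the $G$-action restricted to the tangent space of the local cover $\widetilde{X}$ still has all ages $>1$, and you only assert that this ``should follow'' from the combinatorics of $\nabla=\nabla_1+\cdots+\nabla_r$. That is not a proof, and the proposed mechanism is not how the argument goes; no age bookkeeping on a restricted representation is needed, and Gorenstein-ness of $X$ (which you never address separately) would also be left hanging. The standard route is a toric Bertini/transversality statement: because each $\lvert F_\alpha\rvert$ is generated by torus-invariant sections (monomials indexed by $\nabla_\alpha\cap\bN$), a generic choice of $f^{(1)},\dots,f^{(r)}$ meets every torus orbit of $\cB_\nabla$ transversally. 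Consequently $X$ avoids all strata of dimension $<r$, and at a point of $X$ on a codimension-$k$ stratum one has an analytic-local product decomposition $\cB_\nabla\cong U_\sigma\times\C^{\,n-k}$ with the $r$ equations solvable for $r$ of the stratum (torus) coordinates, giving $X\cong U_\sigma\times\C^{\,n-k-r}$ locally, where $U_\sigma=\C^{k}/G_\sigma$ is the transverse toric singularity of the ambient. So the singularities of $X$ are literally those of $\cB_\nabla$ times a smooth factor --- the group never acts on a proper subspace of its original representation, ages cannot drop, and Gorenstein terminality is inherited wholesale. Without this transversality-and-product argument (or an equivalent), your outline does not establish the singularity statement, which is the only nontrivial content of the proposition. (A minor further slip: the rays inserted by the MPCP refinement of the fan of $\mathbb{P}_\nabla$ are lattice points of the polytope dual to $\nabla$, not of $\nabla$ itself; this does not affect the structure of the argument.)
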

We denote by $d_a^{(\alpha)} $ the degree of $f^{(\alpha)}$ under the action of the $a$\textsuperscript{th} component
of the gauge group $(\C^{\ast})^{s}$. 
We will sometimes collect them into vectors $\underline{d}^{(\alpha)} \in \Z^s$ to lighten the expressions.

\subsubsection{Lattices of relations and toric GITs}
\label{sec:latticeL}
The toric variety \(\mathbb{B}_{\nabla}\) can be viewed as a GIT quotient which
we shall recall now.
Let us denote by \(s=\dim\mathrm{H}^{2}(\cB_{\nabla};\mathbb{Q})\).
Then there exists a short exact sequence 
\begin{equation*}
\begin{tikzcd}
    &0\ar[r] &\mathbf{N}\ar[r] &\mathbb{Z}^{N}
    \ar[r] &\mathrm{Cl}(\mathbb{B}_{\nabla})\ar[r] &0
\end{tikzcd}
\end{equation*}
where \(N\) is the number of 1-cones in the defining fan of \(\mathbb{B}_{\nabla}\).

Applying \(\mathrm{Hom}_{\mathbb{Z}}(-,\mathbb{C}^{\ast})\) to the short exact sequence, we obtain
\begin{equation*}
\begin{tikzcd}
    &1\ar[r] &\mathrm{Hom}_{\mathbb{Z}}(\mathrm{Cl}(\mathbb{B}_{\nabla}),\mathbb{C}^{\ast})
    \ar[r] &(\mathbb{C}^{\ast})^{N}\ar[r] 
    &T_{N}\ar[r]&1.
\end{tikzcd}
\end{equation*}
%To avoid the presence of torsion elements in \(\mathrm{Hom}_{\mathbb{Z}}(\mathrm{Cl}(\mathbb{B}_{\Delta}),\mathbb{C}^{\ast})\),
We assume that \(\mathbb{B}_{\nabla}\)
is \emph{smooth}. In such case, we have
\begin{equation*}
    \mathrm{Cl}(\mathbb{B}_{\nabla})=\mathrm{Pic}(\mathbb{B}_{\nabla})
\end{equation*}
and the latter group is torsion free (cf.~\cite[Proposition 4.2.5]{Cox:2011tor}).
We infer that the `gauge group' 
\(\mathrm{Hom}_{\mathbb{Z}}(\mathrm{Cl}(\mathbb{B}_{\nabla}),\mathbb{C}^{\ast})\) is an algebraic torus 
of dimension \(s=N-n\).

Let \(\chi\colon \mathrm{Hom}_{\mathbb{Z}}(\mathrm{Cl}(\mathbb{B}_{\nabla}),
\mathbb{C}^{\ast})\to \mathbb{C}^{\ast}\) be the character 
induced by an ample divisor on \(\mathbb{B}_{\nabla}\). Then there is an isomorphism
\begin{equation*}
    (\mathbb{C}^{N})^{\mathsf{s}}_{\chi}~\slash~\mathrm{Hom}_{\mathbb{Z}}(\mathrm{Cl}(\mathbb{B}_{\nabla}),\mathbb{C}^{\ast})
    \cong \mathbb{B}_{\nabla}.
\end{equation*}
Here \((\mathbb{C}^{N})^{\mathsf{s}}_{\chi}\) denotes
the set of stable points determined by
the linearized line bundle \(\mathscr{L}_{\chi}:=\mathbb{C}^{N}\times \mathbb{C}_{\chi}\)
(cf.~\cite[Chapter 14]{Cox:2011tor}).

Let \(\{\nu_{1},\ldots,\nu_{N}\}\) be
the set of all non-zero integral points in a reflexive polytope
\(\Delta\subset\bM_{\mathbb{R}}\).
Thus, they live in \(\mathbf{M}\) and they are
primitive generators of the \(1\)-cones 
in the fan defining the dual toric variety \(\mathbb{B}_{\nabla}\).
Regarding 
\begin{equation*}
    \begin{bmatrix}
        \nu_{1} & \cdots & \nu_{N}
    \end{bmatrix}
\end{equation*}
as a matrix, we obtain a surjection \(\mathbb{Z}^{N}\to\mathbb{Z}^{n}\).
Let \(\mathbb{L}\) be the kernel. Then 
\(\mathbb{L}\) is a lattice of rank \(s:=N-n\).
Every element \(\vec{\ell}\in\mathbb{L}\) can be written as an integral vector
\begin{equation*}
    \vec{\ell} = (\ell^{(1)},\ldots,\ell^{(N)})\in\mathbb{Z}^{N}.
\end{equation*}
By the construction (see in particular \cite{Hosono:1995bm}), 
there exist $s$ vectors
\begin{equation*}
	\vec{\ell}_a = (\ell_a ^{(1)} , \dots, \ell_a ^{(N)} ) \in \Z^{N}~\mbox{with}~a=1, \dots, s=N-n
\end{equation*}
which generate \(\mathbb{L}\), and such that 
\begin{equation*}
%\label{eq:Batyrevrel}
	\sum_{i=1}^{N} \ell_a ^{(i)} \nu_{i} =0  , \qquad \forall a=1, \dots, s .
\end{equation*}
In other words, the collection $\{ \vec{\ell}_1,\ldots,\vec{\ell}_s\}$ forms
a basis for \(\mathbb{L}\). \par
For later convenience, we introduce the collection of $N$ vectors of $s$ entries
\begin{equation}
\label{eq:ellweights}
	\underline{\ell}^{(i)} = (\ell_1 ^{(i)} , \dots, \ell_s ^{(i)} ) \in \Z^{s}~\mbox{with}~i=1, \dots, N .
\end{equation}
That is, given the rectangular matrix 
$$\left[\ell^{(i)}_{a} \right]_{\substack{i=1, \dots, N \\ a=1, \dots, s}}\in\mathrm{Mat}_{s\times N}(\mathbb{Z}),$$ 
$\vec{\ell}_a$ and $\underline{\ell}^{(i)}$ denote, respectively, row and column vectors.\par
Using the fact that $\mathcal{I}_{\alpha} \cap \mathcal{I}_{\beta} = \emptyset$ for $\beta \ne \alpha$, the integrality of $\ell_a^{(i)}$ and $d_a ^{(\alpha)}$, and the fact that $\underline{d} ^{(\alpha)}$ originate from the nef partition, up to relabeling one shows 
\begin{lem}It holds that 
\begin{equation}
\label{eq:CYnefcharges}
	\sum_{i \in \mathcal{I}_{\alpha}} \ell_a ^{(i)} = d_a ^{(\alpha)} .
\end{equation}
\end{lem}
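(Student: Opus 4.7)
The plan is to interpret both sides of~\eqref{eq:CYnefcharges} as two descriptions of the same weight under $(\mathbb{C}^{\ast})^{s}$ acting on global sections of one and the same line bundle on $\mathbb{B}_{\nabla}$, and then to match them directly.

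First, I would use the piecewise linear functions $\varphi_{1},\ldots,\varphi_{r}$ of Definition~\ref{def:BBnef} to extend the nef partition from the face fan $\mathcal{F}(\nabla)=\mathcal{N}(\Delta)$ to the refined fan of $\mathbb{B}_{\nabla}$: each integral point $\nu_{i}\in\Delta\cap\mathbf{M}$, viewed as a $1$-cone of the fan of $\mathbb{B}_{\nabla}$, gets assigned to a unique part $\mathcal{I}_{\alpha}$, giving a disjoint partition $\{1,\ldots,N\}=\mathcal{I}_{1}\sqcup\cdots\sqcup\mathcal{I}_{r}$. This is the step where the ``up to relabeling'' caveat is invoked, and it uses $\mathcal{I}_{\alpha}\cap\mathcal{I}_{\beta}=\emptyset$ in an essential way. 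The pullback of the nef partition to $\mathbb{B}_{\nabla}$ then reads
\begin{equation*}
F_{\alpha}=\sum_{i\in\mathcal{I}_{\alpha}}D_{\nu_{i}},\qquad -K_{\mathbb{B}_{\nabla}}=\sum_{\alpha=1}^{r}F_{\alpha},
\end{equation*}
with each $F_{\alpha}$ Cartier and nef, and $f^{(\alpha)}\in H^{0}(\mathbb{B}_{\nabla},\mathcal{O}(F_{\alpha}))$ since the section polytope of $F_{\alpha}$ equals $\nabla_{\alpha}$.

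Next I would invoke the Cox/GIT presentation of $\mathbb{B}_{\nabla}$ reviewed in \S\ref{sec:latticeL}: the toric divisor $D_{\nu_{i}}$ is cut out by a single Cox coordinate $z_{i}$, which by the very definition of the lattice of relations $\mathbb{L}$ transforms under the $a$-th factor of the gauge torus with weight $\ell_{a}^{(i)}$. Tensoring over $i\in\mathcal{I}_{\alpha}$, the line bundle $\mathcal{O}(F_{\alpha})$ carries weight $\sum_{i\in\mathcal{I}_{\alpha}}\ell_{a}^{(i)}$ under the $a$-th $\mathbb{C}^{\ast}$, which is then the weight under which any of its sections, in particular the Laurent polynomial $f^{(\alpha)}$ of Newton polytope $\nabla_{\alpha}$, must transform. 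Comparing with the defining property of $d_{a}^{(\alpha)}$ yields~\eqref{eq:CYnefcharges}.

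The main obstacle, and really the only nontrivial step, is the bookkeeping above: the nef partition is originally given on $\mathcal{N}(\Delta)(1)$, which corresponds only to the vertices of $\nabla$, whereas the fan of $\mathbb{B}_{\nabla}$ contains additional $1$-cones produced by the MPCP refinement. Each new primitive generator lies in the relative interior of a unique cone of $\mathcal{N}(\Delta)$, and requiring the $\varphi_{\alpha}$ to remain integral and convex on the refined fan forces its assignment to exactly one $\mathcal{I}_{\alpha}$; integrality of both $\ell_{a}^{(i)}$ and $d_{a}^{(\alpha)}$ ensures that the extension is well-defined and compatible with the Batyrev--Borisov polytopes. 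Once this consistent relabeling is fixed, the remainder of the proof is pure bookkeeping in $\mathrm{Pic}(\mathbb{B}_{\nabla})$.
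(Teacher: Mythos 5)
Your second and third paragraphs are correct and capture what the paper (which only sketches this lemma in one sentence) actually intends: once one knows that the pullback of the nef partition to $\cB_{\nabla}$ has the form $F_{\alpha}=\sum_{i\in\mathcal{I}_{\alpha}}D_{\nu_{i}}$ for a genuine partition of the $1$-cones, the identity \eqref{eq:CYnefcharges} is just the statement that the class of $\mathcal{O}(F_{\alpha})$ in $\mathrm{Pic}(\cB_{\nabla})\cong\Z^{s}$, computed from the Cox coordinates $z_{i}$ of weight $\underline{\ell}^{(i)}$, must agree with the $(\C^{\ast})^{s}$-degree $\underline{d}^{(\alpha)}$ of any of its sections, in particular of $f^{(\alpha)}$ whose Newton polytope is the section polytope $\nabla_{\alpha}$.

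The gap is in your first step, where you produce the partition itself: you conflate the two sides of the Batyrev--Borisov duality. The functions $\varphi_{\alpha}$ of Definition \ref{def:BBnef} are defined on $\mathcal{N}(\Delta)$, a fan in $\bN_{\R}$ whose $1$-cone generators are lattice points of $\Delta^{\ast}\subset\bN_{\R}$, whereas the fan of $\cB_{\nabla}$ is a refinement of $\mathcal{N}(\nabla)=\mathcal{F}(\nabla^{\ast})$ living in $\bM_{\R}$, with generators $\nu_{i}\in\bM$; so $\varphi_{\alpha}(\nu_{i})$ is simply not defined, and the identity $\mathcal{F}(\nabla)=\mathcal{N}(\Delta)$ that you invoke is false in general (it holds only for $r=1$: one has $\mathcal{N}(\Delta)=\mathcal{F}(\Delta^{\ast})$, and $\Delta^{\ast}=\mathrm{Conv}(\nabla_{1}\cup\cdots\cup\nabla_{r})$ is strictly smaller than $\nabla=\nabla_{1}+\cdots+\nabla_{r}$ once $r\ge 2$). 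For the same reason, the new MPCP generators do not lie in cones of $\mathcal{N}(\Delta)$. The correct device is the \emph{dual} nef partition: the coefficient of $D_{\nu_{i}}$ in the pullback of $F_{\alpha}$ to $\cB_{\nabla}$ is the value at $\nu_{i}$ of the support function of $F_{\alpha}$ (the piecewise linear function dual to the $\varphi$'s; concretely it equals $1$ exactly when $\nu_{i}$ lies in the section polytope $\Delta_{\alpha}$ and $0$ otherwise). These coefficients are nonnegative integers and, because $\sum_{\alpha}F_{\alpha}=-K_{\cB_{\nabla}}=\sum_{i}D_{\nu_{i}}$, they sum over $\alpha$ to $1$ for each $i$; hence each $i$ is assigned to exactly one part, and after relabeling $F_{\alpha}=\sum_{i\in\mathcal{I}_{\alpha}}D_{\nu_{i}}$. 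With this replacement of your first paragraph, the rest of your degree bookkeeping goes through and coincides with the paper's intended argument.
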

In particular, \eqref{eq:CYnefcharges} implies the identity 
\begin{equation*}
%\label{eq:CICYcharged}
	\sum_{i=1}^{N} \ell_a ^{(i)} = \sum_{\alpha=1}^{r} d_a ^{(\alpha)} , \qquad \forall a=1, \dots, s , 
\end{equation*}
equivalent to the condition that $X$ is Calabi--Yau, which in turn is guaranteed by the adjunction formula.\par
We consider the embedding $\mathbf{M} \hookrightarrow \Z^{r} \oplus~\mathbf{M} $ and 
define the liftings \(\tilde{\nu}_{1},\ldots,\tilde{\nu}_{r+N}
\in \Z^{r} \oplus~\mathbf{M}\) according to the formula
\begin{equation}
\begin{aligned}
	\tilde{\nu}_1 &= (\underbrace{1, 0, \dots, 0}_{r} , \underbrace{0, \dots, 0}_{n}) \\
	\tilde{\nu}_2 &= (\underbrace{0, 1 ,\dots, 0}_{r} ,\underbrace{0, \dots, 0}_{n}) \\
	& \hspace{1.8cm}\vdots \\
	\tilde{\nu}_{r} &= (\underbrace{0, 0, \dots, 1}_{r} ,\underbrace{0, \dots, 0}_{n}) \\
	\tilde{\nu}_{r+i} &= (\underbrace{ \varphi_1 (\nu_i), \varphi_2 (\nu_i), \dots, \varphi_r (\nu_i)}_{r} ,\underbrace{ \nu_{i,1}, \dots, \nu_{i,n}}_{n})
\end{aligned}
\label{eq:tildenuI}
\end{equation}
where $\varphi_{\alpha}$, which originate from the nef partition, have been defined in \eqref{eq:Borisovphi}.\par

Again, we regard
\begin{equation}\label{eq:tildenutildeL}
    \begin{bmatrix}
        \tilde{\nu}_{1} & \cdots & \tilde{\nu}_{r+N}
    \end{bmatrix}
\end{equation}
as a matrix, so we obtain a surjection 
\(\mathbb{Z}^{r+N}\to\mathbb{Z}^{r+n}\).
Denote by \(\widetilde{\mathbb{L}}\) the kernel of the surjection above.
It can be shown that \(\widetilde{\mathbb{L}}\cong\mathbb{L}\).

We can regard the integral vectors $\underline{\ell}^{(i)} $ in \eqref{eq:ellweights} as \emph{weights}
in the quotient construction of toric varieties. In which case, we use
\(\underline{\theta}^{(i)}\) to denote the integral vectors \(\underline{\ell}^{(i)}\)
to emphasize that they are regarded as weights in a GIT quotient.
%they characterize a toric variety, through the quotient construction. 

Given \(N\) integral vectors \(\underline{\theta}^{(1)},\ldots,\underline{\theta}^{(N)}\in \mathbb{Z}^{s}\)
(with our convention \(\underline{\ell}^{(i)}=\underline{\theta}^{(i)}\)), 
we can define an action by letting $\lambda=(\lambda_1, \dots, \lambda_s) \in 
(\C^{\ast})^s$ act on $\C^N$ according to
\begin{equation*}
    \lambda \cdot (z_1, \dots, z_N) = \left( \prod_{a=1}^{s} \lambda_a ^{\theta_a ^{(1)}} z_1 , \dots, \prod_{a=1}^{s} \lambda_a ^{\theta_a ^{(N)}} z_N \right).
\end{equation*}
%Here by convention \(\theta^{(i)}_{a}=\ell_{a}^{(i)}\).
To each character \(\chi\colon(\mathbb{C}^{\ast})^{s}\to \mathbb{C}^{\ast}\) we 
associate a \(1\)-dimensional representation \(\mathbb{C}_{\chi}\).
We then have a trivial line bundle \(\mathbb{C}^{N}\times\mathbb{C}_{\chi}\)
and we may talk about the stability.
Denote by $Z_{\chi} \subset \C^N$ the unstable locus
with respect to \(\chi\). We can
form a geometric quotient
\cite{Cox:1993fz}
\begin{equation}
\label{eq:toricquotient}
    \left( \C^N \setminus Z_{\chi}  \right) / (\C^{\ast})^s 
\end{equation}
and it turns out that this is a toric variety.
Conversely, beginning with a projective simplicial toric variety $\mathcal{X}$,
if \(\chi\) is a character corresponding to an ample divisor on $\mathcal{X}$, then 
$\mathcal{X}$ is a geometric quotient, i.e.
\begin{equation*}
    \mathcal{X} \cong \left( \C^N \setminus Z_{\chi}  \right) / (\C^{\ast})^s 
\end{equation*}
For instance, a weighted projective space is the geometric quotient (with \(s=1\))
\begin{equation*}
    \cP^{N-1}_{\vec{\theta}} = \left( \C^N \setminus \{0\} \right) / \C^{\ast},
\end{equation*}
where the homogeneous coordinates have weights $\vec{\theta}=(\theta^{(1)}, \dots, \theta^{(N)})$.\par
In a nutshell, the inclusion $ (\C^{\ast})^s \subset  (\C^{\ast})^N$ induces a surjective map on the character groups 
\begin{equation*}
    \mathrm{Hom}_{\Z} ( (\C^{\ast})^N, \C^{\ast} ) \cong \Z^N \longrightarrow \mathrm{Hom}_{\Z} ( (\C^{\ast})^s, \C^{\ast} ) \cong \Z^s .
\end{equation*}
For any choice of character $\chi \in \Z^s $, one forms the GIT quotient 
\begin{equation}
    \C^N /\!\!/ _{\chi} (\C^{\ast})^s:=(\mathbb{C}^{N})_{\chi}^{\mathsf{s}}\slash 
(\mathbb{C}^{\ast})^{s}.
\end{equation}
It is known 
that the underlying variety is toric, and agrees with \eqref{eq:toricquotient}.

\subsection{Calabi--Yau double covers}
\label{sec:doublecover}
With the setup as in \S\ref{sec:toric}, let $\cB_{\nabla}$ be an MPCP desingularization of the toric Fano variety $\mathbb{P}_{\nabla}$, and let
\begin{equation*}
p\colon Y \xrightarrow{~2:1~}\cB_{\nabla} 
\end{equation*}
be a branched double cover. The branching locus is a divisor $D_{\mathrm{br}}$, and $Y$ has trivial canonical bundle if and only if the linear equivalence
\begin{equation*}
	D_{\mathrm{br}} \sim -2K_{\cB_{\nabla}} 
\end{equation*}
holds.\par

\subsubsection{General nef partitions}
We now provide an analogue of the nef partition of Definition \ref{def:BBnef} for double covers. 
The branching locus is now a divisor 
linearly equivalent to \(-2K_{\mathbb{B}_{\nabla}}\).
Following \S\ref{sec:nefpartition}, we consider a 
decomposition of twice the anti-canonical divisor:
\begin{equation*}
    F_1 + \cdots + F_{\hat{r}} = -2K_{\cB_{\nabla}}.
\end{equation*}
This choice corresponds to a partition of the sum \(
-2K_{\cB_{\nabla}}=2\sum_{\nu\in\Sigma(1)}D_{\nu}\)
into nef Cartier divisors, where \(\Sigma\) is the 
fan defining \(\mathbb{B}_{\nabla}\). Let us write
\begin{equation}
\label{eq:general-nef-partition}
-2K_{\cB_{\nabla}}=2\sum_{\nu\in\Sigma(1)}D_{\nu}
= F_1 + \cdots + F_{\hat{r}} ,
\end{equation}
where each \(F_{\alpha}\) is a nef Cartier divisor 
of the form \(F_{\alpha}=\sum_{\nu\in\Sigma(1)}a_{\alpha, \nu}D_{\nu}\)
with \(a_{\alpha, \nu}\in\{0,1,2\}\). To distinguish from 
the classical nef-partition, which is a decomposition
of the anticanonical divisor, we shall call such
a decomposition \eqref{eq:general-nef-partition}
a \emph{general nef partition}.

\subsubsection{Calabi--Yau double covers with gauge fixing}
\label{sec:LLY}
Let $F_1 + \cdots + F_{r}$ be a nef partition of $-K_{\cB_{\nabla}}$ as in 
\S\ref{sec:nefpartition}. The works \cite{Hosono:2018kqt,Hosono:2019jet,Hosono:2020gpj,Lee:2020oyt} restrict to 
\begin{equation}
\label{eq:rplusN}
    \hat{r}=r+N
\end{equation}
and consider the singular double cover $Y \longrightarrow \cB_{\nabla} $, where $D_{\mathrm{br}}$ is constructed from the section $s=s_1 \cdots s_{r+N}$ with 
\begin{equation}
\label{eq:branchingsec}
	s_{\alpha} \in \mathrm{H}^{0} (\cB_{\nabla} , F_{\alpha}),~\alpha=1,\ldots,r.
\end{equation}
These sections are assumed smooth with 
\begin{equation}
\label{eq:gaugefixing}
	\Div \left( s_{r + i} \right) = D_i,~i=1, \ldots, N.
\end{equation}
\begin{defin}
    In this situation, we will say that the last $N$ entries of the nef partition are \emph{gauge fixed}.
\end{defin}
\par
Under such hypotheses on the branching locus, \cite{Hosono:2020gpj,Lee:2020oyt} 
proposed a mirror Calabi--Yau variety for $Y$.
One can show that the \(B\)-periods for the Calabi--Yau double covers with 
gauge fixing considered
in \cite{Lee:2020oyt} satisfy a certain GKZ system.

In this work we remove the restrictions \eqref{eq:rplusN}--\eqref{eq:gaugefixing} and allow for $\hat{r} \le N$, significantly enlarging the family of double covers considered, compared to the existing literature.

\subsection{Non-commutative resolutions}
\label{sec:NCmath}

The data above (\S\ref{sec:doublecover}) induce a sheaf $\mC_0$ of even parts of Clifford algebras on $\cB_{\nabla}$ \cite{Kuznetsov:08}. We let $\mC_0-\mathrm{mod}$ be the category of sheaves of finitely generated $\mC_0$-modules over $\cB_{\nabla}$.
\begin{defin}The non-commutative resolution $\Ync$ of $Y$ is the non-commutative algebraic variety defined by the pair $(\cB_{\nabla}, \mC_0)$, such that
\begin{equation*}
%\label{eq:DCohY}
	\DC (\Ync) := D^b (\cB_{\nabla}, \mC_0\mathrm{-mod}).
\end{equation*}
\end{defin}
where $D^b (\cB_{\nabla}, \mC_0\mathrm{-mod})$ stands for the derived category of sheaves of finitely generated $ \mC_0$ modules over $\cB_{\nabla}$.
\begin{defin}
	Let $\Ync$ be the non-commutative resolution of a Calabi--Yau singular double cover of a toric Fano variety $\cB_{\nabla}$, and $X$ a complete intersection in another toric variety. If they belong to distinct chambers of the same K\"ahler moduli space, then $(\Ync, X)$ is said to be a \emph{Clifford double mirror} pair.
\end{defin}
Borisov--Li \cite{BorisovLi} provide a constructive way of identifying Clifford double mirror pairs $(\Ync, X)$. We summarize the key idea and refer to \cite{BorisovLi} for the details.\par
Let $X$ be a Calabi--Yau complete intersection specified by the toric data in \S\ref{sec:toric}. If the following relation holds in $\Z^r \oplus \mathbf{M}$,
\begin{equation*}
	\sum_{\alpha=1}^{r} \tilde{\nu}_{\alpha} = \frac{1}{2} \sum_{i=1}^{N} \tilde{\nu}_{r+i},
\end{equation*}
then there exists $\Ync$, specified by the same toric data, such that $(\Ync, X)$ is a double mirror pair. In this situation, Borisov--Li prove the existence of a derived equivalence \cite[Theorem~6.3]{BorisovLi}
\begin{equation*}
	\DC (\Ync) \xrightarrow{ \ \sim \ } D^b \mathrm{Coh} (X) .
\end{equation*}\par
    We emphasize that the hypotheses on $\Ync$ to admit a double mirror are very stringent, as will be evident below. In particular, a necessary condition for $\Ync$ to satisfy the hypotheses of \cite[Theorem~6.3]{BorisovLi} is $\hat{r} =2N$.

\subsection{GKZ systems and Picard--Fuchs equations}
\label{sec:GKZdef}
This subsection introduces the notion of GKZ systems \cite{Gelfand:1990bua}. 
The input data consists of an integral matrix $\mathsf{A}$ and a complex vector \(\beta\), called the exponent, while the output is a system of
partial differential equations on a certain affine space.
There are two collections of differential operators in a GKZ system, 
referred to as box operators and Euler operators; the box operators are determined by 
integral vectors in the lattice relation of columns of $\mathsf{A}$, 
while the Euler operators are determined by rows of $\mathsf{A}$ 
and $\beta$.\par

To keep the presentation clean, we do not aim at generality, and directly define the GKZ systems in the situation of interest in this work. See \cite{Gelfand:1994,Hosono:1995bm} for more details.\par
We first define these data for complete intersections in toric varieties in \S\ref{sec:GKZCICY}, 
and then pass to the definitions for double covers in \S\ref{sec:GKZnc}.\par

\subsubsection{GKZ systems for Calabi--Yau complete intersections}
\label{sec:GKZCICY}
In the setup of \S\ref{sec:toric}, recall the integer vectors $\tilde{\nu}_I \in \Z^{r} \oplus~\mathbf{M}$ from \eqref{eq:tildenuI}. Let
\begin{equation}
\label{eq:gkz-ci}
    \mathsf{A}_X:=
    \begin{bmatrix}
        \tilde{\nu}_{1} & \cdots & \tilde{\nu}_{r+N}
    \end{bmatrix}\in \mathrm{Mat}_{(r+n) \times (r+N) } (\Z )~\mbox{and}~\beta \in \mathbb{Q}^{r+n}.
\end{equation}
Denote by \( \{ c_{I} \}_{I=1,\ldots,r+N}\)
the coordinates on the affine space \(\mathbb{C}^{r+N}\).
Each element \(\tilde{\ell}\in\mathrm{Ker}(\mathsf{A}_X)\)
(recall our conventions around \eqref{eq:tildenutildeL})
can be regarded as an element in \(\mathbb{Z}^{r+N}\).
We can write \(\tilde{\ell}=\tilde{\ell}^{+}-\tilde{\ell}^{-}\) with 
\(\tilde{\ell}^{\pm}\in\mathbb{Z}^{r+N}_{\ge 0}\)
and \(\mathrm{supp}(\tilde{\ell}^{+})\cap \mathrm{supp}(\tilde{\ell}^{-})=\emptyset\).
\begin{defin}The GKZ system associated to these data is the set of partial differential equations by the following two types of operators:
\begin{itemize}
	\item The box operators 
	\begin{equation}
	\label{eq:GKZCICYBox}
	\Box_{\tilde{\ell}} := \prod_{\tilde{\ell}_{I}^{+}>0}\left(\frac{\partial}{\partial c_{I}}\right)^{ \tilde{\ell}_{I}^{+}} - \prod_{\tilde{\ell}_{I}^{-}>0}\left(\frac{\partial}{\partial c_{I}}\right)^{ \tilde{\ell}_{I}^{-}}~\qquad \mbox{for}~\tilde{\ell}\in\widetilde{\mathbb{L}}.
	\end{equation}
	\item The Euler operators
	\begin{equation}
	\label{eq:GKZCICYEuler}
	\Eop_j - \beta_j  := \sum_{I=1}^{r+N} [ \mathsf{A}_X]_{I,j} ~ c_{I}\frac{\partial}{\partial c_{I}} - \beta_j  , \qquad \mbox{for}~j=1, \dots, r+n .
	\end{equation}
\end{itemize}
\end{defin}
GKZ systems of this kind were also studied in \cite{Batyrev:1993V,Hosono:1995bm};
they naturally arise as a system of 
Picard--Fuchs equations for Calabi--Yau hypersurfaces or complete intersections in 
toric varieties.

In fact, we have
\begin{prop}
The \(B\)-periods for the Calabi--Yau complete intersections 
in \(\mathbb{B}_{\Delta}\) arising from
a nef partition satisfy a GKZ system
with 
\(\mathsf{A}_{X}\) given by \eqref{eq:gkz-ci} and 
\begin{equation*}
    \beta = (-1,\ldots,-1,0,\ldots,0)\in\mathbb{Q}^{r+n} ,
\end{equation*}
with the first \(r\) coordinates being \(-1\) and the rest being zero.
\end{prop}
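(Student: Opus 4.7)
The plan is to verify directly that the standard integral representation of the $B$-periods is annihilated by both the box and Euler operators determined by $(\mathsf{A}_{X},\beta)$. Following \cite{Batyrev:1993V,Hosono:1995bm}, we parameterize each Laurent polynomial as
\begin{equation*}
f^{(\alpha)}(c, y) = c_{\alpha} + \sum_{i\in \mathcal{I}_{\alpha}} c_{r+i}\, y^{\nu_{i}} ,\qquad \alpha=1,\ldots,r ,
\end{equation*}
where $y=(y_1,\ldots,y_n)$ are coordinates on the dense open torus of the ambient toric variety. By construction, the exponent of each monomial appearing in $\sum_\alpha \xi_\alpha f^{(\alpha)}$, viewed as a polynomial in the auxiliary variables $(\xi_1,\ldots,\xi_r)$ and $(y_1,\ldots,y_n)$, equals one of the columns $\tilde{\nu}_I$ of $\mathsf{A}_X$ in \eqref{eq:tildenuI}. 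The fundamental $B$-period then admits the standard residue representation
\begin{equation*}
\Pi_{0}(c) = \left(\frac{1}{2\pi\ii}\right)^{n} \oint \prod_{\alpha=1}^{r} \frac{1}{f^{(\alpha)}(c,y)} \, \frac{\dd y_{1}\cdots \dd y_{n}}{y_{1}\cdots y_{n}} ,
\end{equation*}
valid in the region where $c_\alpha$ dominates each $c_{r+i}$; the remaining $B$-periods arise by analytic continuation along alternative vanishing cycles and satisfy the same system of differential equations.

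To handle the box operators, set $J_\alpha := \{\alpha\}\cup\{r+i:i\in\mathcal{I}_{\alpha}\}$ and let $g_\alpha := 1$, $g_{r+i} := y^{\nu_i}$, so that $\partial f^{(\alpha)}/\partial c_I = g_I$ whenever $I\in J_\alpha$ and $0$ otherwise. Iterating the Leibniz rule, for any $\tilde{\ell}^{+}\in\mathbb{Z}^{r+N}_{\ge 0}$ the composite $\prod_I (\partial/\partial c_I)^{\tilde{\ell}^{+}_I}$ applied to $\prod_\alpha (f^{(\alpha)})^{-1}$ produces, for each $\alpha$, the factor $(-1)^{k_\alpha^{+}} k_\alpha^{+}!\, (f^{(\alpha)})^{-(1+k_\alpha^{+})}$ together with the global monomial $\prod_I g_I^{\tilde{\ell}^{+}_I}$, where $k_\alpha^{+} := \sum_{I\in J_\alpha} \tilde{\ell}^{+}_{I}$. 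The kernel condition $\tilde{\ell}\in\mathrm{Ker}(\mathsf{A}_{X})$ decomposes according to the two blocks of $\tilde{\nu}_I$: its first $r$ coordinates give $k_{\alpha}^{+}=k_{\alpha}^{-}$ for every $\alpha$, matching the denominators and combinatorial factors on the two sides of $\Box_{\tilde{\ell}}$; its last $n$ coordinates give $\prod_I g_I^{\tilde{\ell}^{+}_I}=\prod_I g_I^{\tilde{\ell}^{-}_I}$ as Laurent monomials in $y$, matching the numerators. Therefore $\Box_{\tilde{\ell}}\Pi_{0}=0$.

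The Euler operators follow from torus invariance of $\Pi_0$. For the last $n$ operators, the substitution $y_{j}\mapsto \mu_{j}y_{j}$ with $\mu\in (\C^{\ast})^n$ leaves the measure $\dd y_{j}/y_{j}$ invariant and is equivalent to $c_{r+i}\mapsto (\prod_j \mu_j^{\nu_{i,j}}) c_{r+i}$; differentiating in $\mu_k$ at $\mu_k = 1$ gives $\sum_i \nu_{i,k}\, c_{r+i}\partial_{c_{r+i}}\Pi_0 = \Eop_{r+k}\Pi_0 = 0$ for $k=1,\ldots,n$, hence $\beta_j = 0$ in the last $n$ entries. For the first $r$ operators, the rescaling $f^{(\alpha)}\mapsto \lambda_{\alpha}f^{(\alpha)}$, equivalent to $c_I \mapsto \lambda_{\alpha} c_I$ for every $I\in J_\alpha$, transforms $\Pi_0(c)$ into $\lambda_\alpha^{-1}\Pi_0(c)$; differentiating in $\lambda_\alpha$ at $\lambda_\alpha = 1$ yields $(\Eop_{\alpha}+1)\Pi_{0}=0$, i.e.\ $\beta_{\alpha}=-1$ for $\alpha=1,\ldots,r$, as claimed.

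The monomial calculus underlying the box operators and the torus-rescaling argument for the Euler operators are both essentially routine once the residue representation is in place. The step I would expect to require the most care is the one I have glossed over: justifying that the analytic continuations of $\Pi_{0}$ span the full $B$-period sheaf of the CICY family, and that the torus residue on the dense open orbit correctly computes period integrals over vanishing cycles across the MPCP desingularization. Both points are settled by the detailed cycle/residue analysis of \cite{Batyrev:1993V,Hosono:1995bm}, which I would invoke to close the argument.
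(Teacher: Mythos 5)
Your argument is correct, and it is in fact the only proof available here: the paper does not prove this proposition at all, but states it as a known result with a pointer to the literature (Batyrev--van Straten, Hosono et al.), which is exactly the source of the residue-representation computation you reproduce. Your verification matches the paper's conventions precisely: the monomials $c_\alpha$ and $c_{r+i}y^{\nu_i}$, $i\in\mathcal{I}_\alpha$, correspond to the columns $\tilde{\nu}_\alpha$ and $\tilde{\nu}_{r+i}$ of $\mathsf{A}_X$ in \eqref{eq:tildenuI}; the first $r$ rows of the kernel condition give $k_\alpha^+=k_\alpha^-$ and the last $n$ rows give equality of the Laurent monomials, so $\Box_{\tilde{\ell}}$ kills the integrand pointwise; and the two rescaling arguments reproduce $\Eop_\alpha\Pi_0=-\Pi_0$ and $\Eop_{r+k}\Pi_0=0$, i.e.\ $\beta=(-1,\ldots,-1,0,\ldots,0)$ as claimed. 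The one place where your write-up is slightly off in emphasis is the closing reduction to $\Pi_0$: analytic continuation of the fundamental period along paths only produces its monodromy orbit, which need not a priori span all $B$-periods. The cleaner (and standard) justification is that every period $\int_\gamma\Omega$ is, via the residue/tube construction, an integral of the same form $\prod_\alpha(f^{(\alpha)})^{-1}\,\dd y/y$ over a tube cycle in the torus complement of $\bigcup_\alpha\{f^{(\alpha)}=0\}$, and your box and Euler computations apply verbatim to any such cycle; the remaining compactification/completeness issues are indeed what the cited references settle, so your deferral is appropriate but should be phrased in terms of tube cycles rather than continuation of $\Pi_0$.
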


\subsubsection{GKZ systems for Calabi--Yau double covers with gauge fixing}
\label{sec:GKZnc}
It turns out that the \(B\)-periods for Calabi--Yau double covers 
of toric varieties with gauge fixing also 
satisfy a certain type of GKZ systems which is pretty 
much similar to the previous one.
Indeed, the following proposition is proven in \cite{Lee:2020oyt}.
\begin{prop}
The \(B\)-periods for the Calabi--Yau
double covers of \(\mathbb{B}_{\Delta}\) with gauge fixing
satisfy a GKZ system with 
\(\mathsf{A}_{X}\) given by \eqref{eq:gkz-ci} and 
\begin{equation*}
    \beta = (-1/2,\ldots,-1/2,0,\ldots,0)\in\mathbb{Q}^{r+n}
\end{equation*}
with the first \(r\) coordinates being \(-1/2\) and the rest being zero.
\end{prop}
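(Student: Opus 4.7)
The plan is to realize a distinguished $B$-period of $Y$ as an explicit contour integral parametrized by the coefficients $c_I$ of the branching sections, and then verify by direct differentiation that both the box and Euler operators annihilate it. This parallels the classical strategy for Calabi--Yau complete intersections, the essential new ingredient being a square root in the integrand arising from the $2:1$ cover.

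First I would construct the holomorphic volume form. On the unramified locus, it reads
\begin{equation*}
    \Omega_Y = p^{\ast}\!\left(\frac{\omega_{\mathbb{B}_{\nabla}}}{\sqrt{s_1 \cdots s_{r+N}}}\right),
\end{equation*}
where $\omega_{\mathbb{B}_{\nabla}}$ is the meromorphic volume form on $\mathbb{B}_{\nabla}$ with logarithmic poles along the toric boundary. Next I would parametrize the sections: $s_\alpha = \sum_{m} c_{\alpha,m}\, z^m$ for $\alpha=1,\ldots,r$ runs over the monomials dual to $\nabla_\alpha$, while gauge fixing forces $s_{r+i} = c_{r+i}\, z_i$ for $i=1,\ldots,N$. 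After relabeling the coefficients as $\{c_I\}_{I=1,\ldots,r+N}$, a fundamental $B$-period takes the Mellin--Barnes form $\Pi_0(c) = \int_\gamma \Omega_Y$ over a real torus cycle $\gamma$. Third, I would verify the box operators: for any $\tilde{\ell}\in\widetilde{\mathbb{L}}$, the toric relation $\sum_I \tilde{\ell}_I \tilde{\nu}_I = 0$ implies that the monomial products produced by differentiating $\Pi_0$ along $\tilde{\ell}^{+}$ and along $\tilde{\ell}^{-}$ coincide, and the $-1/2$ factors arising from differentiating $s_\alpha^{-1/2}$ cancel pairwise; hence $\Box_{\tilde{\ell}}\Pi_0 = 0$. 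Finally, I would extract the Euler operators from the scaling symmetries: rescaling all coefficients of a given $s_\alpha$ by $\lambda$ sends $\Pi_0 \mapsto \lambda^{-1/2}\Pi_0$ due to the square root, yielding $\beta_\alpha = -1/2$ for the first $r$ Euler operators; torus invariance of $\Pi_0$ under the $\mathbf{M}$-torus on $\mathbb{B}_{\nabla}$ gives $\beta = 0$ in the remaining $n$ slots.

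The main obstacle is the gauge-fixing step in the Euler verification: the sections $s_{r+i}$ are not independent, so one must check that their weights $\tilde{\nu}_{r+i} = (\varphi_1(\nu_i), \ldots, \varphi_r(\nu_i), \nu_i)$ conspire with the coordinate rescalings $z_i \mapsto \lambda z_i$ to reproduce exactly $\beta_\alpha = -1/2$ in each of the first $r$ entries, without polluting them. This amounts to combining the nef-partition identity \eqref{eq:CYnefcharges} with the Calabi--Yau relation $\sum_i \ell_a^{(i)} = \sum_\alpha d_a^{(\alpha)}$, and to checking that the gauge-fixed contributions enter symmetrically across all $\alpha$. A secondary technical subtlety is the rigorous choice of branch for $\sqrt{s_1 \cdots s_{r+N}}$ and of the cycle $\gamma \subset Y$ near a large complex structure limit, so that $\Pi_0$ is a well-defined single-valued holomorphic function of the $c_I$ in a neighborhood.
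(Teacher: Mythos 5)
Your overall strategy is the right one, and it is in fact how the result is established in the reference the paper cites for this proposition (\cite{Lee:2020oyt}; the paper itself gives no proof): write the $B$-period as an Euler-type integral with exponents $-1/2$ and verify the box and Euler operators by direct differentiation. The gap is in matching your integral to the specific data $(\mathsf{A}_X,\beta)$ of the statement. The GKZ system here has exactly $r+N$ variables $c_I$, one per column of $\mathsf{A}_X$: the column $\tilde{\nu}_\alpha=(e_\alpha;0)$ corresponds to the coefficient of the \emph{distinguished} monomial of $s_\alpha$ (the lattice point $0$ of its section polytope, i.e.\ the trivialization of $s_\alpha$ by $\prod_{\nu\in\mathcal{I}_\alpha}z_\nu$), and the column $\tilde{\nu}_{r+i}=(\varphi_1(\nu_i),\ldots,\varphi_r(\nu_i);\nu_i)$ corresponds to the coefficient of the monomial $x^{\nu_i}$ occurring \emph{inside} the generic section $s_\alpha$ with $\varphi_\alpha(\nu_i)=1$. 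They are not the coefficients of the gauge-fixed factors $s_{r+i}=c_{r+i}z_i$: those carry no independent monomial data, and if you let them play the role of the last $N$ variables the period depends on them only through an overall factor $\prod_i c_{r+i}^{-1/2}$, so already for the relation $\tilde{\ell}=(-1,-1,-1,1,1,1)$ in the six-line example over $\cP^2$ the two halves of $\Box_{\tilde{\ell}}$ differ, and the gauge-fixed variables alone contribute $-\tfrac12\lvert\mathcal{I}_\alpha\rvert$ to the $\alpha$-th Euler eigenvalue. Moreover, taking $s_\alpha=\sum_m c_{\alpha,m}z^m$ over \emph{all} monomials gives more than $r+N$ parameters, so the ``relabeling'' step is not available; the family whose $B$-periods satisfy \emph{this} GKZ system is the one whose sections carry exactly the constant term and the monomials $x^{\nu_i}$.

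The second point is precisely the obstacle you flag but do not resolve, and it is where gauge fixing does its work: the last $n$ entries of $\beta$ vanish only because the boundary factor $\prod_i z_i$ cancels against the logarithmic form under the trivialization above. Since $\prod_\alpha\prod_{\nu\in\mathcal{I}_\alpha}z_\nu=\prod_i z_i$ is the section of $-K_{\cB}$ cutting the toric boundary, on the open torus one gets
\begin{equation*}
  \Omega_Y=p^{\ast}\Bigl(\prod_{\alpha=1}^{r}W_\alpha(x)^{-1/2}\,\frac{dx_1}{x_1}\wedge\cdots\wedge\frac{dx_n}{x_n}\Bigr),
  \qquad W_\alpha=\frac{s_\alpha}{\prod_{\nu\in\mathcal{I}_\alpha}z_\nu},
\end{equation*}
with $W_\alpha$ a Laurent polynomial whose Newton polytope contains $0$ and \emph{no} residual monomial twist; then the scaling argument gives $\beta_\alpha=-\tfrac12$ and integration by parts gives $0$ in the torus slots. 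With your normalization $\omega_{\cB}/\sqrt{s_1\cdots s_{r+N}}$ (which, as written, is not even $(\C^{\ast})^{s}$-invariant, so it must be interpreted through such a trivialization), the torus-trivialized integrand acquires a twist $x^{\mu}$ with $\mu\neq 0$ in general, and the torus Euler operators return $\beta_{r+j}=-\mu_j\neq 0$; the stated $\beta$ is only recovered after multiplying by a monomial in the $c_I$. So the proof can be completed along your lines, but only after (i) identifying the $r+N$ variables with the coefficients of the distinguished monomials as above and (ii) performing the trivialization that absorbs the gauge-fixed factors; as written, the box and Euler bookkeeping does not produce the claimed system.
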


\begin{rmk}
    It can be shown that the GKZ system for Calabi--Yau double covers
    with gauge fixing is complete,
    meaning all the solutions are period integrals 
    (cf.~\cite{2023-Lee-period-CYFCI}).
\end{rmk}

\section{GLSMs, non-commutative resolutions, and \texorpdfstring{$A$}{A}-periods}
\label{sec:GLSM}
Our main tools to explore non-commutative resolutions are GLSMs \cite{Witten:1993yc}, which are 2d $\mathcal{N}=(2,2)$ gauge theories. The models of interest to us pertain to the class of so-called hybrid models \cite{Bertolini:2013xga}, discussed in relation to non-commutative resolutions in \cite{Caldararu:2010ljp} (see also e.g. \cite{Sharpe:2012ji,Sharpe:2013bwa}). Recent examples of hybrid GLSM computations include \cite{Erkinger:2020cjt,Erkinger:2022sqs,Katz:2022lyl,Katz:2023zan,Knapp:2024wwh,Knapp:2025hnf}.\par

We define the GLSM data in \S\ref{sec:GLSMgen}, and review the connection between GLSM B-branes and derived categories. In \S\ref{sec:GLSMnc} we review and generalize the construction of \cite{Lee:2023piu} for non-commutative resolution of Calabi--Yau singular double covers using GLSMs. Our prescription in \S\ref{sec:GLSMnc} yields non-commutative resolutions for Calabi--Yau singular double covers with arbitrary general nef partitions. In \S\ref{sec:GLSMHS} we spell out the relation between the hemisphere partition function \cite{Hori:2013ika,Honda:2013uca} of the GLSMs of \S\ref{sec:GLSMnc} and the $A$-period of objects in the category $\DC (\Ync)$.

\subsection{GLSMs and categories of B-branes}
\label{sec:GLSMgen}
\begin{defin}The GLSM data consist of a tuple $(\sG,\rho, \sR, W, t)$, where $\sG$ is a compact Lie group, $\rho \colon\sG \to GL(V)$ is a faithful unitary representation of $\sG$ on a complex vector space $V $, and $\sR\colon U(1)_V \to GL(V)$ is a representation of the group $U(1)_V$ that commutes with the action of $\sG$ on $V$. Moreover, $W\colon V \to \C$ is a holomorphic $\sG$-invariant polynomial, which has weight $2$ under the action of $U(1)_V$ on $V$; and $t$ is such that 
\begin{equation*}
	\exp (t) \in \mathrm{Hom} \left( \pi_1 (\sG), \C^{\ast} \right)^{\pi_0 (\sG)} .
\end{equation*}
\end{defin}
Throughout this work, we will focus on the case $\sG= U(1)^s$ for some $s \in \N$, whence we have $t \in [\R \oplus \ii \left( \R / 2\pi \Z \right)]^s$. Furthermore, we will assume that $\rho $ factors through $SL(V)$.\par
\begin{rmk}In the physics literature, $\sG$ is the gauge group, $\rho $ specifies the chiral matter fields, $U(1)_V$ is the vector R-symmetry, $W$ is the superpotential, and $t$ is the FI-$\theta$ parameter. The working assumption on $\rho $ corresponds to the requirement that the GLSM is `non-anomalous'.
\end{rmk}

\subsubsection{B-branes and derived categories}
We focus on topologically A-twisted GLSMs \cite{Witten:1988xj} on the hemisphere $\hs$, with B-type boundary conditions. Such boundary conditions at fixed $t$ and $W$ are termed B-branes, and denoted $\Br$. They form a triangulated category $\MF_{\sG} (W)$, whose objects $\Br =(\Br_{\text{\rm alg}}, \gamma_t)$ consist of the algebraic data $\Br_{\text{\rm alg}}$ \cite{Herbst:2008jq,Ballard:2016ncw} and an admissible contour $\gamma_t$ \cite{Hori:2013ika}.
\begin{defin}
The algebraic data of the B-brane $\Br$ consist of the quadruple $\Br_{\text{\rm alg}}= (M, \rho_M, \sR_M, \sT)$, where $M=M_0\oplus M_1$ is a $\Z_2$-graded, finite-rank, free $\mathrm{Sym}(V^{\vee})$-module; $\rho_{M} : \sG \to GL(M)$ is a $\Z_2$-even representation of $\sG$ on $M$; and $\sR_{M} : U(1)_V \to GL(M)$ is a $\Z_2$-even representation that commutes with $\rho_M$ and is allowed to have rational weights. Finally, $\sT \in \mathrm{End}_{\mathrm{Sym}(V^{\vee})} (M)$, called a matrix factorization of $W$, is a $\Z_2$-odd endomorphism satisfying $\sT^2 = W \cdot 1_M$.\par
Furthermore, $\rho_M, \sR_M$ must be compatible with $\rho , \sR$, in the sense that for each $\mu\in M$
\begin{equation*}
\begin{aligned}
	\sR_M (\lambda) \sT (\sR (\lambda) \mu ) \sR_M (\lambda)^{-1} &= \lambda \sT (\mu) , \qquad & \forall \lambda \in U(1)_V , \\ 
	\rho_M (g)^{-1} \sT (\rho  (g) \mu ) \rho_M (g) &= \sT (\mu) , \qquad & \forall g \in \sG . 
\end{aligned}
\end{equation*}
\end{defin}
\begin{rmk}In the physics literature, $M$ is called the Chan--Paton vector space, and the matrix factorization $\sT$ is called tachyon condensation profile.
\end{rmk}
To reduce clutter, we postpone the definition of admissible contour after we introduce the models of interest (cf.~Definition \ref{def:gammat}). The K\"ahler moduli space $\mM_{\text{\rm K}}$ of the GLSMs (or more precisely, a real projection of it) is partitioned in chambers \cite{Aspinwall:1993nu}. For $e^{-t}$ kept in the interior of one such chamber, we can neglect this piece of information and focus on $\Br_{\text{\rm alg}}$. The latter was defined more generally in the mathematics literature in \cite{Ballard:2016ncw}.\par
Assume $e^{-t}$ belongs to the interior of a chamber of $\mM_{\text{\rm K}}$. There exist a distinguished triangulated category $\DC_{t} \subset \MF_{\sG} (W)$, which only depends on the chamber, and a projector functor \cite{Herbst:2008jq,Ballard:2016ncw,HalpernLeistner:2014}
\begin{equation}
\label{eq:projfunctor}
	\pi_{t} \colon \MF_{\sG} (W) \longrightarrow \DC_{t} .
\end{equation}
The specifics of $\DC_{t} $ vary with the GLSM under consideration and with the choice of chamber: the prototypical example is $\DC_{t} \cong D^b \mathrm{Coh} (X)$ for a smooth Calabi--Yau variety $X$; in the rest, we will be interested in $\DC_{t}\cong \DC (\Ync)$.
\begin{defin}\label{def:piprojfunct}
We will henceforth denote by $\mathfrak{C} \subset \mM_{\text{\rm K}}$ be the chamber containing an open neighborhood of $e^{-t}=0$ (we set our models so that $e^{-t}=0$ corresponds to the non-commutative resolution). For $e^{-t} \in \mathfrak{C}$, the projection functor \eqref{eq:projfunctor} will be denoted simply 
\begin{equation*}
	\pi \colon \MF_{\sG} (W) \longrightarrow \DC_{t} \vert_{e^{-t} \in \mathfrak{C}} .
\end{equation*}
\end{defin}

\subsection{GLSMs and non-commutative resolutions}
\label{sec:GLSMnc}
The setup is as in \S\ref{sec:CYpreliminaries}. We put $\cB\equiv \cB_{\nabla}$ to lighten the expressions, and denote the weight of its $i^{\text{th}}$ homogeneous coordinate \cite{Cox:1993fz} by $\underline{\theta}^{(i)}$ (cf.~\S\ref{sec:latticeL}).\par 
Consider a $\Z_2$-gerbe $\widetilde{\cB}$ over the toric variety $\cB$ \cite{Pantev:2005zs}. Denoting $\left\{ \phi_i \right\}_{i=1, \dots, N}$ the lift to $\widetilde{\cB}$ of the homogeneous coordinates of $\cB$, these have weights $2\theta^{(i)}_a$ under $\C^{\ast}_a \subset (\C^{\ast})^s$.\par 
Throughout this section, we allow for an arbitrary nef partition 
\begin{equation}
\label{eq:nefgeneral}
    -2 K_{\cB} = \sum_{\alpha=1}^{\hat{r}} F_{\alpha} ,
\end{equation}
$1 \le \hat{r} \le 2N$ (possibly $\hat{r}<N$).\par
The line bundles $F_{\alpha} \longrightarrow \cB$ appearing in \eqref{eq:nefgeneral} lift to line bundles $F_{\alpha} \longrightarrow \widetilde{\cB}$. 
Denoting $z_{\alpha}$ the fibre coordinate on $F_{\alpha}^{-1} \longrightarrow \widetilde{\cB}$, we have that 
\begin{equation*}
	z_{\alpha}^2 s_{\alpha} (\phi) \in \mathrm{H}^0 (\mathcal{O}_{F_{\alpha}^{-1}}) ,
\end{equation*}
where we are denoting with the same symbol the lift to $\widetilde{\cB}$ of the sections \eqref{eq:branchingsec}. We define 
\begin{equation}\label{eq:bundletilde}
	\mV := \mathrm{Tot} \left( \bigoplus_{\alpha=1}^{\hat{r}} F_{\alpha}^{-1}  \longrightarrow \widetilde{\cB} \right) .
\end{equation}
The superpotential $W$ is \cite{Lee:2023piu}
\begin{equation*}
	W= \sum_{\alpha=1}^{\hat{r}} z_{\alpha}^2 s_{\alpha} (\phi)\in \mathrm{H}^{0}(\mathcal{O}_{\mathcal{V}}).
\end{equation*}
This construction provides a GLSM description of the non-commutative resolution of the singular double covers of \S\ref{sec:doublecover} \cite{Caldararu:2010ljp} and their generalization from \eqref{eq:nefgeneral}. Fields and charges of the GLSMs are summarized in \textsc{Table} \ref{tab:GLSMdoublenew}.\par

\begin{table}[th]
\centering
\begin{tabular}{l |c |c | c}
	field & \# & $U(1)_a$ gauge charge & $U(1)_V$ R-charge \\
	\hline
	$\phi_i$ & $i=1, \dots, N$ & $2 \theta_a ^{(i)}$ & $4\varepsilon$ \\
	$z_{\alpha}$ & $\alpha=1, \dots, \hat{r}$ & $- d_a^{(\alpha)}$ & $1-2\varepsilon$\\
	\hline
\end{tabular}
\caption{Field content and charges of the GLSM for non-commutative resolutions of Calabi--Yau double covers. Here $\varepsilon\in \left(0,\frac{1}{2}\right)$.}
\label{tab:GLSMdoublenew}
\end{table}\par

\begin{rmk} The Calabi--Yau double covers of \S\ref{sec:LLY} are recovered by setting $\hat{r}=r+N$ and imposing \eqref{eq:gaugefixing}. The GLSMs just constructed need not be subject to this constraint.
 \end{rmk}

\subsubsection{Sheaves of modules of Clifford algebras}
The construction of $\mV$ in \eqref{eq:bundletilde} is equivalently presented as an orbi-bundle \cite{Adem:2007zz} over the algebraic stack
\begin{equation*}
	\bigoplus_{\alpha=1}^{\hat{r}} \mathcal{O}_{\cB} \left( - \frac{d_1^{(\alpha)}}{2} , \dots,  - \frac{d_s^{(\alpha)}}{2} \right) \ \longrightarrow \ [ \cB / \Z_2 ] .
\end{equation*}
In this setup, it is possible to define a category of matrix coherent factorizations \cite{buchweitz1987maximal,Tu:2010nh,ballard2012resolutions,orlov2004triangulated,efimovcoherent}, denoted $\MF_{\Z_2} (\mV, W)$. Let $ 0 \in \mathfrak{C} \subset \mM_{\text{\rm K}}$ as in Definition \ref{def:piprojfunct}. Then, our conventions are such that 
\begin{equation*}
	\left. \DC_t \right\rvert_{e^{-t} \in \mathfrak{C}} = \MF_{\Z_2} (\mV, W) .
\end{equation*}
To relate this construction with the category $\DC (\Ync)$ defined in \S\ref{sec:NCmath}, we use a result of \cite{Guo:2021aqj}\footnote{See also \cite{Caldararu:2010ljp,Addington:2012zv}, \cite[\S2.4]{Katz:2022lyl}, \cite[\S4]{Katz:2023zan} for related works.}. It is shown therein that there exist 
\begin{enumerate}[(i)]
    \item a sheaf of Clifford algebras $\mA_0$ on $\cB$, and 
    \item a derived equivalence 
    \begin{equation*}
	    \MF_{\Z_2} (\mV, W)  \ \xrightarrow{ \ \sim \ } \ D \left( \cB, \mA_0 \sharp \Z_2 \right) .
    \end{equation*}
\end{enumerate} 
The right-hand side is the derived category of sheaves of $(\mA_0 \sharp \Z_2)$-modules on $\cB$, where the symbol $\sharp$ stands for the smash product. In particular, $\mA_0 \sharp \Z_2$ is a sheaf of even parts of Clifford algebras, providing an explicit realization of $\mC_0$ in \S\ref{sec:NCmath}.\par

\subsection{\texorpdfstring{$A$}{A}-periods of non-commutative resolutions}
\label{sec:GLSMHS}
We now introduce the $A$-periods, which are realized starting with the $\hs$ partition function of the GLSMs \cite{Hori:2013ika}. We restrict our attention to the GLSMs of \S\ref{sec:GLSMnc}, and refer to \cite{hori2019notes} for a thorough overview of hemisphere partition functions.

\subsubsection{Admissible contours}
To define $\gamma_t$, we denote by $\mathfrak{t}_{\C}$ the complexified Cartan subalgebra of $\sG$, with coordinates $\sigma =(\sigma_1, \dots, \sigma_s)$, and let $\mathcal{H}$ be the infinite collection of hyperplanes in $\mathfrak{t}_{\mathbb{C}}$
\begin{equation*}
    \mathcal{H} = \left( \bigcup_{i=1}^{N} \mathcal{H}^{(i)} \right) \cup \left( \bigcup_{\alpha=1}^{\hat{r}}  \tilde{\mathcal{H}}^{(\alpha)} \right) , 
\end{equation*}
where
\begin{equation*}
\begin{aligned}
    \mathcal{H}^{(i)} & := \bigcup_{n^{(i)} \ge 0} \left\{ \langle \underline{\theta}^{(i)}, \sigma \rangle =\ii \left( \frac{n^{(i)}}{2} +\varepsilon\right) \right\} , \\
    \tilde{\mathcal{H}}^{(\alpha)} & :=\bigcup_{n^{(\alpha)} \ge 0} \left\{ \langle \underline{d}^{(\alpha)}, \sigma \rangle = -\ii \left( n^{(\alpha)} + \frac{1}{2}-\varepsilon\right) \right\} .
\end{aligned}
\end{equation*}
Additionally, we define a function $\widetilde{W}_{\text{\rm eff}} : \mathfrak{t}_{\C} \longrightarrow \C $, called effective twisted superpotential, as
\begin{equation*}
\begin{aligned}
	\widetilde{W}_{\text{\rm eff}} (\sigma) = - \langle t, \sigma\rangle  - 2 & \sum_{i=1}^{N} \langle \underline{\theta}^{(i)}, \sigma \rangle \left[ \log \left( \ii 2 \langle \underline{\theta}^{(i)}, \sigma \rangle  \right) -1 \right]  + \sum_{\alpha=1}^{\hat{r}} \langle \underline{d}^{(\alpha)}, \sigma \rangle \left[ \log \left( \ii \langle \underline{d}^{(\alpha)}, \sigma \rangle  \right) -1 \right] .
\end{aligned}
\end{equation*}
\begin{defin}\label{def:gammat}
An admissible contour $\gamma_t$ is a middle-dimensional, Weyl-invariant\footnote{In the present work, since we will only consider $\sG$ abelian, the Weyl group is trivial. For nonabelian $\sG$ this is a delicate condition, and in examples it can also be the case that for strongly coupled phases, Weyl-invariance is not compatible with absolute convergence \cite{EHKR}, making necessary a modification of the definition of admissibility.} cycle in $\mathfrak{t}_{\C} \setminus \mathcal{H} $ such that:
\begin{enumerate}[(i)]
	\item It is a smooth deformation of the real cycle $\left\{ \Im (\sigma)=0, \sigma \in \mathfrak{t}_{\C} \right\}$;
	\item $\widetilde{W}_{\text{\rm eff}} (\sigma) + \ii 2 \pi q (\sigma) \to + \infty$ in all asymptotic directions of $\gamma_t$, for every $q \in \mathfrak{t}^{\vee}_{\C} \cap \Z$.
\end{enumerate}
\end{defin}
The last ingredient we need to define is the brane factor $f_{\Br} : \mathfrak{t}_{\C} \to \C$, for each $\Br \in \MF_{\sG} (W)$, 
\begin{equation*}
	f_{\Br} (\sigma) = \tr_M \left( \sR_M \left( e^{\ii \pi} \right) \rho_M \left( e^{2 \pi \sigma} \right)\right) .
\end{equation*}
It may be written in the form
\begin{equation}
\label{eq:fBsum}
	f_{\Br} (\sigma) = \sum_{k \in \mathfrak{N}_{\Br}} (-1)^{\mathsf{r}_{k}}e^{2 \pi \langle q_k, \sigma \rangle} ,
\end{equation}
where $\mathsf{r}_{k} \in \Z$, $q_k \in \mathfrak{t}^{\vee}$ are integral points, and $\mathfrak{N}_{\Br} \subset \Z$ is a finite set of integers, depending on $\Br$. The finiteness of $\mathfrak{N}_{\Br}$ is guaranteed by fact that $M$ has finite rank, or $\Br$ is quasi-isomorphic to a matrix factorization of finite rank.

\subsubsection{\texorpdfstring{$A$}{A}-periods from GLSMs}
Consider a GLSM as in \S\ref{sec:GLSMnc}. 
\begin{defin}
    For every object $\Br \in \MF_{\sG} (W)$ we define its \emph{partition function} to be
\begin{equation}
\label{eq:Znc}
	\mz_{\Ync} \left( \Br; t \right) = \int_{\gamma_t} \dd^s \sigma~e^{\ii \langle t, \sigma\rangle} ~\prod_{i=1}^{N} \Gamma \left( 2\varepsilon + \ii 2 \langle \underline{\theta}^{(i)}, \sigma \rangle \right) \prod_{\alpha=1}^{\hat{r}} \Gamma \left( \frac{1}{2} - \varepsilon - \ii \langle\underline{d}^{(\alpha)}, \sigma \rangle \right) ~f_{\Br} (\sigma) .
\end{equation}
\end{defin}
Our conventions have been conveniently set up in such a way that the GLSM realizes $\Ync$ in the chamber $\mathfrak{C}$ containing $e^{-t}=0$. In this chamber it is possible to take $\varepsilon \to 0^{+}$ without loss of generality.\par
\begin{defin}
    Introduce redundant variables $\{ x_I\}_{I=1,\dots , \hat{r}+N}$ satisfying 
    \begin{equation}
    \label{eq:xtotnc}
        \prod_{\alpha=1}^{\hat{r}} (-x_I )^{-d^{(\alpha)}_a}\prod_{i=1}^{N} (-x_{\hat{r}+i} )^{\theta^{(i)}_a} = e^{-t_a - \log (4) \sum_{i=1}^{N}\theta^{(i)}_a } , \qquad a=1, \dots, s .
    \end{equation}
    For every $\Br \in \MF_{\sG} (W)$ we define the $A$-period of $\Br$ to be 
    \begin{equation}\label{eq:Anc}
        \pz_{\Ync} \left( \Br ; x \right) := \left(\prod_{\alpha=1}^{\hat{r}} \left. \frac{1}{\sqrt{x_{\alpha}}} \right)\mz_{\Ync} \left( \Br; t \right) \right\rvert_{\text{\eqref{eq:xtotnc}}}
    \end{equation}
    with the right-hand side given in terms of \eqref{eq:Znc} at $\varepsilon \to 0^{+}$, and subject to the replacement \eqref{eq:xtotnc}.
\end{defin}

\begin{rmk}\label{rem:K0}
    The partition functions \eqref{eq:Znc}, and hence the $A$-periods \eqref{eq:Anc}, depend on $\Br \in \MF_{\sG} (W)$ only through the function $f_{\Br}$. In particular, within the chamber $e^{-t} \in \mathfrak{C} \subset \mM_{\text{\rm K}}$, 
    the $A$-periods depend only on the class $[\pi (\Br)] \in K_0 (\Ync)$, where $\pi$ is given in Definition \ref{def:piprojfunct} and $K_0 (\Ync)$ is the Grothendieck group of $\DC (\Ync)$.
    (See, for instance, \cite{Neeman} for definitions and properties of the Grothendieck group of a triangulated category.)
\end{rmk}

\section{Non-commutative resolutions and sheaves on Calabi--Yau complete intersections}
\label{sec:maingeom}

Throughout the current section we impose the assumption in \S\ref{sec:LLY}; only
Calabi--Yau double covers with gauge fixing will be treated. 
In particular, we demand $\hat{r}=r+N$ and impose the condition \eqref{eq:gaugefixing} on the branching locus. This will be relaxed in the next section.

\subsection{From non-commutative resolutions to complete intersections}
We conjecture that for every $\Ync$ constructed above there exists a Calabi--Yau complete intersection $X$, with $\dim (X)= \dim (Y)$, such that the $A$-periods of $\Ync$ and the $A$-periods of $X$ are annihilated by the same GKZ system.\par
The rationale behind Conjecture \ref{conj1} is explained in \S\ref{sec:prequotient}, and strong evidence will be provided in \S\ref{sec:mainGLSM}.

\begin{cj}\label{conj1}
    Fix $e^{-t} \in \mathfrak{C} \subset \mM_{\text{\rm K}}$ and define $\tnc := t + 4\log (2) \left( \sum_{i=1}^{N} \underline{\theta}^{(i)} \right)$.
    There exist a smooth Calabi--Yau complete intersection $X$, together with an isomorphism of abelian groups
    \begin{equation*}
        \Phi \colon K_0 \left( \Ync \right) \longrightarrow K_0 \left( X \right)
    \end{equation*}
    and an overall constant $c \in \mathbb{C}$, such that the equality 
    \begin{equation*}
%    \label{eq:conj1equality}
        \mz_{\Ync} (\Br; t) = c \mz_{X} (\mE ; \tnc) 
    \end{equation*}
    holds for all $\Br$ and for every $\mE \in D^b \mathrm{Coh} (X)$ with 
    \begin{equation*} 
        \Phi \left( [\pi (\Br)] \right) = [\mE].
    \end{equation*}\par
    In particular, there exists a GKZ system annihilating all the $A$-periods $\pz_{\Ync}$ and $\pz_{X}$.
\end{cj}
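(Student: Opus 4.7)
The plan is to establish the partition-function identity directly at the level of the contour integral representation \eqref{eq:Znc}, using the Legendre duplication identity
\begin{equation*}
\Gamma(2z) = \tfrac{2^{2z-1}}{\sqrt{\pi}}\,\Gamma(z)\,\Gamma\bigl(z+\tfrac{1}{2}\bigr),
\end{equation*}
followed by identification of the resulting integrand with that of a smooth CICY. Under the gauge-fixing assumption in force throughout this section, $\hat{r} = r + N$ and $\underline{d}^{(r+i)} = \underline{\theta}^{(i)}$, so the integrand of $\mz_{\Ync}(\Br;t)$ naturally groups into $N$ ``paired'' blocks
\begin{equation*}
\Gamma\bigl(2\varepsilon + 2\ii\langle\underline{\theta}^{(i)},\sigma\rangle\bigr)\,\Gamma\bigl(\tfrac{1}{2}-\varepsilon-\ii\langle\underline{\theta}^{(i)},\sigma\rangle\bigr),
\end{equation*}
together with the $r$ unpaired factors indexed by $\underline{d}^{(1)},\ldots,\underline{d}^{(r)}$. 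The first step is to apply duplication to each paired block's first Gamma, producing $\Gamma(\varepsilon+\ii\langle\underline{\theta}^{(i)},\sigma\rangle)\,\Gamma(\tfrac{1}{2}+\varepsilon+\ii\langle\underline{\theta}^{(i)},\sigma\rangle)$ together with a factor $2^{2\ii\langle\underline{\theta}^{(i)},\sigma\rangle}$. Collected across $i$, these exponentials absorb into a shift of the K\"ahler parameter $t\mapsto\tnc$, while the $\varepsilon$-dependent prefactors $(2^{2\varepsilon-1}/\sqrt{\pi})^{N}$ accumulate into the overall constant $c$.

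The second step is to interpret the transformed integrand as the hemisphere partition function $\mz_X(\mE;\tnc)$ of a smooth CICY $X$ in the toric GIT quotient predicted by Theorem~\ref{thm:smoothX}, or, conjecturally, the pre-quotient $\Xpre$ of \S\ref{sec:prequotient}. The Gamma factors $\Gamma(\varepsilon+\ii\langle\underline{\theta}^{(i)},\sigma\rangle)$ should correspond to coordinate fields of $X$, while the remaining Gamma factors, possibly after a secondary use of duplication or of the reflection formula $\Gamma(z)\Gamma(1-z)=\pi/\sin(\pi z)$ applied to the pair $\Gamma(\tfrac{1}{2}+\varepsilon+\ii\langle\underline{\theta}^{(i)},\sigma\rangle)\,\Gamma(\tfrac{1}{2}-\varepsilon-\ii\langle\underline{\theta}^{(i)},\sigma\rangle)$, should furnish the $p$-field contributions of the CICY hemisphere integral. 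The isomorphism $\Phi\colon K_0(\Ync)\to K_0(X)$ is then defined at the level of brane factors: using the presentation \eqref{eq:fBsum}, the Chan--Paton weights $q_k$ of $\Br$ are transported to weights on $X$ via the change of basis induced by the duplication step. Well-definedness on $K_0$ follows from Remark~\ref{rem:K0} together with additivity of $f_{\Br}$ under direct sums, and that $\Phi$ is an isomorphism (rather than merely a homomorphism) reduces to a rank comparison for the common GKZ solution space.

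Once the integrand identity is established, the shared GKZ system is automatic: any holonomic differential operator in the K\"ahler variables annihilating one side must kill the other, so the annihilating ideal is common. The matrix $\mathsf{A}$ and exponent $\beta$ of this GKZ system are then read off from the arguments of the surviving Gamma factors following \S\ref{sec:GKZdef}.

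The principal obstacle is the bookkeeping of the coefficient of $\log 2$ in the FI shift: a single application of duplication yields $t\mapsto t+2(\log 2)\sum_i\underline{\theta}^{(i)}$, whereas the conjecture stipulates $\tnc=t+4(\log 2)\sum_i\underline{\theta}^{(i)}$. Reconciling this factor of two should require either a compensating rescaling $\sigma\mapsto\sigma/2$ of the contour (whose compatibility with admissibility in Definition~\ref{def:gammat} must be verified on a chamber-by-chamber basis), or a second invocation of the duplication/reflection identities applied to the unpaired Gamma factors $\Gamma(\tfrac{1}{2}-\varepsilon-\ii\langle\underline{d}^{(\alpha)},\sigma\rangle)$ combined with a judicious partition of the $\underline{d}^{(\alpha)}$ into suitable halves. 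Controlling this uniformly across the family of gauge-fixed double covers, beyond the Picard-rank-one examples of \S\ref{sec:GLSMEx}, is the technical heart of the conjecture, and is presumably what prevents its upgrade to a theorem at this stage.
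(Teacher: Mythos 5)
The statement you set out to prove is a \emph{conjecture} in the paper: the authors do not prove it, but support it with the weaker Theorem \ref{thm:smoothX} (the $A$-periods of $\Ync$ and of an explicit complete intersection $X$ are annihilated by the same GKZ system after a change of variables) and with the example checks of \S\ref{sec:GLSMEx}. Your integral strategy is essentially the one used there, but with a misstep in where you apply duplication: you apply it to the factors $\Gamma(2\varepsilon+2\ii\langle\underline{\theta}^{(i)},\sigma\rangle)$, which yields only the shift $t\mapsto t+2\log(2)\sum_i\underline{\theta}^{(i)}$ and produces the ``principal obstacle'' you flag at the end. The paper instead applies the duplication identity in the form $\Gamma(\tfrac12-\ii z)=\sqrt{\pi}\,e^{2\ii z\log 2}\,\Gamma(1-2\ii z)/\Gamma(1-\ii z)$ to \emph{all} $\hat{r}$ half-integer factors $\Gamma(\tfrac12-\varepsilon-\ii\langle\underline{d}^{(\alpha)},\sigma\rangle)$; since the general nef partition of $-2K_{\cB}$ gives $\sum_{\alpha}\underline{d}^{(\alpha)}=2\sum_i\underline{\theta}^{(i)}$, the accumulated exponential is exactly $e^{4\log(2)\,\ii\langle\sum_i\underline{\theta}^{(i)},\sigma\rangle}$ and the stated $\tnc$ comes out on the nose --- no rescaling $\sigma\mapsto\sigma/2$ and no chamber-by-chamber admissibility analysis is needed. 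After that, the paper pairs $\Gamma(2\ii\langle\underline{\theta}^{(i)},\sigma\rangle)\Gamma(1-2\ii\langle\underline{\theta}^{(i)},\sigma\rangle)$ into $\pi/\sin$ factors and invokes Lemma \ref{lemma:GKZlemma}, concluding only the shared GKZ system.

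The genuine gap --- and the reason the statement remains a conjecture rather than a theorem --- is the exact equality $\mz_{\Ync}(\Br;t)=c\,\mz_{X}(\mE;\tnc)$ together with the existence of the isomorphism $\Phi\colon K_0(\Ync)\to K_0(X)$, and your sketch does not close it. Transporting Chan--Paton weights ``via the change of basis induced by the duplication step'' does not define a map on K-theory: the $1/\Gamma$ and $\pi/\sin$ replacements of Lemma \ref{lemma:GKZlemma} are only shown to preserve the annihilating GKZ ideal, not to come from a functor or even a group homomorphism matching brane factors $f_{\Br}$ with $f_{\mE}$ object by object; and a ``rank comparison for the common GKZ solution space'' cannot establish that $\Phi$ is an isomorphism, since $K_0$ is not computed by the GKZ solution space (the periods only see classes through the pairing with solutions, and injectivity/surjectivity of the induced map is exactly what is unproven). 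The paper itself only verifies the equality for distinguished classes --- the holomorphic period and the periods of $\mathcal{O}_X,\mathcal{O}_{\mathrm{D}},\mathcal{O}_{\mathrm{C}},\mathcal{O}_{\mathrm{pt}}$ against topological data --- in Picard-rank-one examples; matching the full lattice of branes is left open, so your argument should be presented as evidence in the same spirit, not as a proof.
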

Some explanations are in order. 
In the statement of the conjecture, $ K_0 \left( \Ync \right):= K_0 \left( \DC (\Ync) \right)$ and $K_0 \left(X \right) := K_0 \left( D^b \mathrm{Coh} (X) \right)$. $[ \cdot ]$ means the image of an object of a triangulated category in the corresponding Grothendieck group, and $\pi (\Br)$ is as in Definition \ref{def:piprojfunct}.
By Remark \ref{rem:K0}, we have that $\mz_{\Ync} (\Br; t)$ is only sensitive to $[\pi (\Br)]$ in the Grothendieck group $K_0 (\Ync)$; likewise $\mz_{X} (\mE ; \tnc)$ is only sensitive to $[\mE] \in K_0 (X)$. The relation between the various categories entering in the statement of Conjecture \ref{conj1} is summarized in the diagram:
\begin{equation*}
    \begin{tikzcd}
        \MF_{\sG} (W) \arrow[d,"\pi"] & \MF_{\sG} (W_X) \arrow[d] \\
        \DC (\Ync) \arrow[d] & D^b \mathrm{Coh} (X) \arrow[d] \\
        K_0(\Ync) \arrow[r, "\Phi"] & K_0(X)
    \end{tikzcd}
\end{equation*}
with the upper vertical arrows given by \eqref{eq:projfunctor} and the lower vertical arrows take the zeroth K-group of a triangulated category.\par
\begin{rmk}
 Generically, we \emph{do not} expect $\Phi$ to lift to a derived equivalence between $\DC (\Ync)$ and $D^b \mathrm{Coh} (X)$.
\end{rmk}

Next, we shall give a geometric construction of a Calabi--Yau complete intersection $\Xpre$, which we conjecture to be the $X$ we seek.

\subsection{Pre-quotient of Calabi--Yau double covers}
\label{sec:prequotient}
For brevity, we put \(\mathbb{B}\equiv \mathbb{B}_{\nabla}\) in the following discussion.
Let us retain the notation in \S\ref{sec:latticeL}. The
toric variety \(\mathbb{B}\) can be represented by a quotient
\begin{equation*}
    \mathbb{B}\cong \mathbb{C}^{N}\sslash_{\chi} (\mathbb{C}^{\ast})^{s}
\end{equation*}
where \(\chi\) is a character corresponding to an ample divisor on \(\mathbb{B}\).

\subsubsection{Pre-quotient construction}

%Recall that the non-commutative resolution is given by
Note that the sections \(s_{1},\ldots,s_{r}\) determine a \((\mathbb{C}^{\ast})^{s}\)-equivariant map
\begin{equation*}
\begin{aligned}
    \mathbb{C}^{N} &\to \mathbb{C}^{r+N}\\
    (z_{1},\ldots,z_{N})&\mapsto  (s_{1}(z),\ldots,s_{r}(z),z_{1},\ldots,z_{N}) ,
\end{aligned}
\end{equation*}
and hence a morphism on their GIT quotients
\begin{equation*}
    \mathbb{B}\cong \mathbb{C}^{N}\sslash_{\chi} (\mathbb{C}^{\ast})^{s}\to \mathbb{C}^{r+N}\sslash_{\chi} (\mathbb{C}^{\ast})^{s}.
\end{equation*}
Here \((\mathbb{C}^{\ast})^{s}\) acts on \(\mathbb{C}^{r+N}\) through
the weights \((\underline{d}^{(1)}, \dots, \underline{d}^{(r)},\underline{\theta}^{(1)}, \dots, \underline{\theta}^{(N)})\).
One can check that this is an injection.

According to \cite[Proposition 14.3.10]{Cox:2011tor}, the GIT quotient
\begin{equation*}
    \mathbb{P}_{(\vec{\underline{d}}, \vec{\underline{\theta}})}:=
    \mathbb{C}^{r+N}\sslash_{\chi} (\mathbb{C}^{\ast})^{s}
\end{equation*}
is a projective toric variety. 

To construct the double covers, we use the ``coordinate squaring map.''
More specifically, denote by \(z_{\alpha}\) (recall
that \(\alpha=1,\ldots,r+N\)) the homogeneous coordinates
of \(\mathbb{P}_{(\vec{\underline{d}}, \vec{\underline{\theta}})}\).
Consider the map
\begin{equation*}
    \mathbb{C}^{r+N}\to \mathbb{C}^{r+N}~\mbox{by}~z_{\alpha}\mapsto z_{\alpha}^{2}.
\end{equation*}
This is obviously \((\mathbb{C}^{\ast})^{s}\) equivariant and yields a finite Galois cover
\begin{equation}
\label{eq:squaring}
    \mathbb{P}_{(\vec{\underline{d}}, \vec{\underline{\theta}})}
    \to \mathbb{P}_{(\vec{\underline{d}}, \vec{\underline{\theta}})}.
\end{equation}
The Galois group of the covering
\eqref{eq:squaring} is given by a quotient 
\((\mathbb{Z}_{2})^{r+N}\slash \mathsf{K}\), where \(\mathsf{K}\) is the subgroup
consisting of elements in \((\mathbb{Z}_{2})^{r+N}\)
that can be realized by an element in \((\mathbb{C}^{\ast})^{s}\).
Here we identify the action of the $\alpha^{\text{th}}$ $\mathbb{Z}_{2}$ 
with multiplying by \(-1\) on \(z_{\alpha}\), i.e.~\(z_{\alpha}\mapsto -z_{\alpha}\).

Consider the fiber product
\begin{equation*}
    \begin{tikzcd}
        &\Xpre\ar[r]\ar[d] &\mathbb{P}_{(\vec{\underline{d}}, \vec{\underline{\theta}})}\ar[d]\\
        &\mathbb{B}\ar[r]&\mathbb{P}_{(\vec{\underline{d}}, \vec{\underline{\theta}})}
    \end{tikzcd}
\end{equation*}
and the right vertical map is the map given in \eqref{eq:squaring}.
Then \(\Xpre\) is a Calabi--Yau complete intersection
of degree \((2\underline{d}^{(1)}, \dots, 2\underline{d}^{(r)})\) in 
\(\mathbb{P}_{(\vec{\underline{d}}, \vec{\underline{\theta}})}\).

Note that the summation map 
\begin{equation*}
    (\mathbb{Z}_{2})^{r+N}\to \mathbb{Z}_{2},~(\omega_{\alpha})\mapsto 
    \sum_{\alpha=1}^{r+N} \omega_{\alpha}
\end{equation*}
descends to the quotient
\begin{equation*}
    (\mathbb{Z}_{2})^{r+N}\slash \mathsf{K}\to \mathbb{Z}_{2}.
\end{equation*}
Let \(\Gamma\) be the kernel of this map. Then \(\Xpre\slash \Gamma\)
is a double cover over \(\mathbb{B}\), and
by the following lemma, the branching locus of
\(\Xpre\slash\Gamma\to \mathbb{B}\) is
equal to the branching locus of the Calabi--Yau double cover
\(Y\to \mathbb{B}\). Due to the absence of torsion
line bundles, we have \(Y\cong \Xpre\slash\Gamma\).
\begin{lem}
    The branching locus of \(\Xpre\slash\Gamma\to \mathbb{B}\)
    is equal to the branching locus of the Calabi--Yau double cover
    \(Y\to \mathbb{B}\).
\end{lem}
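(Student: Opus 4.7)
The plan is to realize $\Xpre/\Gamma \to \mathbb{B}$ concretely as a double cover defined by the square of a natural $\Gamma$-invariant monomial, and to read off its branching divisor directly. One then matches this with the defining section of $Y \to \mathbb{B}$ using the gauge-fixing hypothesis.

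First, I would unpack the fiber product to describe $\Xpre$ explicitly inside $\mathbb{P}_{(\vec{\underline{d}}, \vec{\underline{\theta}})}$. The embedding $\mathbb{B} \hookrightarrow \mathbb{P}_{(\vec{\underline{d}}, \vec{\underline{\theta}})}$ sends $(z_1, \ldots, z_N) \mapsto (s_1(z), \ldots, s_r(z), z_1, \ldots, z_N)$, while the right vertical map in the fiber product square is the coordinate squaring $w_\alpha \mapsto w_\alpha^2$. Hence $\Xpre$ is the closed subvariety of $\mathbb{P}_{(\vec{\underline{d}}, \vec{\underline{\theta}})}$ with homogeneous coordinates $w_1, \ldots, w_{r+N}$, cut out by the $r$ equations
\begin{equation*}
w_\alpha^2 = s_\alpha(w_{r+1}^2, \ldots, w_{r+N}^2), \qquad \alpha = 1, \ldots, r ,
\end{equation*}
and the projection $\Xpre \to \mathbb{B}$ is $(w_1, \ldots, w_{r+N}) \mapsto (w_{r+1}^2, \ldots, w_{r+N}^2)$.

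Second, I would introduce the monomial $P := w_1 w_2 \cdots w_{r+N}$. Under an element $\omega \in (\mathbb{Z}_2)^{r+N}$ acting by signs, $P$ transforms by $(-1)^{\sigma(\omega)}$, where $\sigma(\omega) := \sum_{\alpha=1}^{r+N} \omega_\alpha \bmod 2$ is the sign-sum homomorphism. Since $\Gamma$ is precisely the subgroup on which $\sigma$ vanishes, $P$ is $\Gamma$-invariant and transforms by $-1$ under the residual $\mathbb{Z}_2$ that generates the double cover $\Xpre/\Gamma \to \mathbb{B}$. Using the explicit complete intersection equations, on $\Xpre$ one computes
\begin{equation*}
P^2 = \prod_{\alpha=1}^{r+N} w_\alpha^2 = \left( \prod_{\alpha=1}^{r} s_\alpha(z) \right) \cdot \prod_{i=1}^{N} z_i ,
\end{equation*}
so $\Xpre/\Gamma \to \mathbb{B}$ is realized as the double cover with defining equation $P^2 = s_1(z) \cdots s_r(z) \cdot z_1 \cdots z_N$, whose branching divisor on $\mathbb{B}$ is the zero locus of this section.

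Finally, I would invoke the gauge-fixing assumption \eqref{eq:gaugefixing}. Since $\mathrm{Div}(s_{r+i}) = D_i = \{z_i = 0\}$ and both $s_{r+i}$ and $z_i$ are global sections of $\mathcal{O}_{\mathbb{B}}(D_i)$, they agree up to a nonzero scalar. Hence the divisors $\{s_1 \cdots s_r \cdot z_1 \cdots z_N = 0\}$ and $\{s_1 \cdots s_{r+N} = 0\}$ coincide, and the latter is by definition the branching divisor $D_{\mathrm{br}}$ of $Y \to \mathbb{B}$. The main technical point to monitor is ensuring that $\sigma$ descends to $(\mathbb{Z}_2)^{r+N}/\mathsf{K}$ so that $\Gamma$ has index $2$ and $P$ really cuts out a double cover; this hinges on the Calabi--Yau identity $\sum_\alpha \underline{d}^{(\alpha)} = \sum_i \underline{\theta}^{(i)}$ coming from the nef partition, which forces $\sigma$ to vanish on every element of $\mathsf{K}$. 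Everything else is routine bookkeeping with the explicit equations of $\Xpre$.
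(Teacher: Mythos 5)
Your argument is correct in substance but follows a genuinely different route from the paper. The paper never writes down an equation for $\Xpre/\Gamma$: it reduces the lemma to showing that the branch locus of $\mathbb{P}_{(\vec{\underline{d}},\vec{\underline{\theta}})}/\Gamma \to \mathbb{P}_{(\vec{\underline{d}},\vec{\underline{\theta}})}$ is the union of the toric divisors, and proves this by a direct orbit computation in homogeneous coordinates — if some coordinate $z_\alpha$ vanishes, the extra sign flip on that coordinate acts trivially on the point, so the full Galois orbit collapses to a single $\Gamma$-orbit; restricting to $\mathbb{B}$ and invoking gauge fixing then gives the claim. You instead exhibit the anti-invariant monomial $P=w_1\cdots w_{r+N}$ and realize $\Xpre/\Gamma$ as the classical double cover $P^2=s_1\cdots s_r\,z_1\cdots z_N$; this is more structural (it explains why $\Gamma$ is the kernel of the sign-sum map, makes the descent of that map via the Calabi--Yau identity explicit — a point the paper leaves implicit — and re-derives $Y\cong\Xpre/\Gamma$ at the same time). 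The one step you assert rather than prove is the identification of $\Xpre/\Gamma$ with the hypersurface cover $\{u^2=f\}$: anti-invariance of $P$ and $P^2=f$ cleanly give that the cover is unramified where $f\neq 0$ (the involution would force $P=-P$ at a fixed point), but the reverse inclusion — that every point of $\{f=0\}$ is a branch point — needs either the finite-birational-onto-a-normal-target argument (using that $f$ is reduced so $\{u^2=f\}$ is normal) or exactly the fiberwise orbit computation the paper performs. So your approach buys an explicit model and a cleaner conceptual picture at the cost of a normality/Zariski-main-theorem style justification, while the paper's is purely group-theoretic and avoids any discussion of line bundles or normal double-cover structure.
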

\begin{proof}
    It suffices to check that the branching locus of
    \begin{equation*}
        \Psi\colon\mathbb{P}_{(\vec{\underline{d}}, \vec{\underline{\theta}})}
        \slash \Gamma \to 
        \mathbb{P}_{(\vec{\underline{d}}, \vec{\underline{\theta}})}
    \end{equation*}
    is equal to the union of toric divisors.

    To facilitate the argument, we use homogeneous coordinates.
    For a point \(p\in (\mathbb{C}^{r+N})^{\mathsf{s}}_{\chi}\), we denote by 
    \([p]\) the class of \(p\) under the \((\mathbb{C}^{\ast})^{s}\) action. Now suppose that \(p=(z_{1},\ldots,z_{r+N})\in \mathbb{C}^{r+N}\) represents a point in the union of all toric divisors in $\mathbb{P}_{(\vec{\underline{d}}, \vec{\underline{\theta}})}$, i.e.~\(z_{\alpha}=0\) for some \(\alpha\). We need to check that if 
    \begin{equation*}
        \Psi([(z_{1},\ldots,z_{r+N})])=\Psi([(y_{1},\ldots,y_{r+N})])
    \end{equation*}
    then there is an element \(g\in\Gamma\) such that
    \begin{equation*}
        g\cdot [(z_{1},\ldots,z_{r+N})]=[(y_{1},\ldots,y_{r+N})].
    \end{equation*}
    It is clear that there exists \(h\in (\mathbb{Z}_{2})^{r+N}\slash \mathsf{K}\)
    such that 
    \begin{equation*}
        h\cdot [(z_{1},\ldots,z_{r+N})]=[(y_{1},\ldots,y_{r+N})].
    \end{equation*}
    If \(h\in \Gamma\), then we are done. If \(h\notin \Gamma\), 
    we can choose \(k=(k_{1},\ldots,k_{r+N})\in \mathbb{Z}_{2}^{r+N}\)
    with \(k_{\alpha}=1\) and \(k_{\beta}=0\) for \(\beta\ne\alpha\).
    Then \(h\cdot\bar{k}\in \Gamma\) (here \(\bar{k}\) is the image of \(k\)
    in the quotient \((\mathbb{Z}_{2})^{r+N}\slash \mathsf{K}\)) and
    \begin{equation*}
        (h\cdot\bar{k})\cdot [(z_{1},\ldots,z_{r+N})]=
        h\cdot [(z_{1},\ldots,z_{r+N})]=[(y_{1},\ldots,y_{r+N})]
    \end{equation*}
    and the proof is completed.
\end{proof}

\begin{rmk}
    The toric variety \( \mathbb{P}_{(\vec{\underline{d}}, \vec{\underline{\theta}})}\) constructed above may be very singular, even non-Gorenstein. Here is an example. Consider \(\mathbb{B}=\mathbb{P}^{3}\) and the nef-partition
    \begin{equation*}
        -2K_{\cP^3} = 3H + H +H+H+H+H ;
    \end{equation*}
    namely we consider a double cover of \(\mathbb{P}^{3}\)
    whose branching divisor is a union of five hyperplanes and a cubic in general position.
    Then the toric variety \(\mathbb{P}_{(\vec{\underline{d}}, \vec{\underline{\theta}})}\)
    is the weighted projective space \(\mathbb{P}^{5}_{(3,1^5)}\).
    Moreover, 
    \begin{equation*}
        \Xpre\cong\mathbb{P}^{5}_{(3,1^5)}[2,6] ,
    \end{equation*}
    which is a smooth Calabi--Yau threefold.
\end{rmk}

\subsubsection{Pre-quotient and $A$-periods}
The divisor $\sum_{\alpha=1}^{r+N} F_{\alpha}$ is a simple normal crossing divisor, which guarantees that $Y$ is a normal variety with at worst quotient singularities. 
From our previous discussion, we have the following.
\begin{prop}
    Let notations be as above.
    There exist a projective variety $\Xpre$ and a finite group $\Gamma$ such that 
    \begin{equation*}
        Y \cong \Xpre / \Gamma 
    \end{equation*}
    as coarse moduli spaces.
\end{prop}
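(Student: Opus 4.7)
The plan is to take $\Xpre$ and $\Gamma$ exactly as constructed in the preceding discussion, and then verify the isomorphism $\Xpre/\Gamma \cong Y$ by comparing them as branched double covers of $\mathbb{B}$. The construction has essentially done all the geometric work; what remains is to assemble the pieces and invoke a uniqueness argument.

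First, I would verify that the natural map $\Xpre/\Gamma \to \mathbb{B}$ is a finite degree-$2$ cover. The Galois group of the squaring map $\mathbb{P}_{(\vec{\underline{d}},\vec{\underline{\theta}})} \to \mathbb{P}_{(\vec{\underline{d}},\vec{\underline{\theta}})}$ is $G := (\mathbb{Z}_2)^{r+N}/\mathsf{K}$, and by base change along $\mathbb{B} \hookrightarrow \mathbb{P}_{(\vec{\underline{d}},\vec{\underline{\theta}})}$ this action lifts to $\Xpre$ and preserves the projection to $\mathbb{B}$. Since the signed summation $(\mathbb{Z}_2)^{r+N} \to \mathbb{Z}_2$ is surjective and descends to $G$, its kernel $\Gamma$ has index $2$ in $G$. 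Hence $\Xpre/\Gamma \to \mathbb{B}$ is generically $2$-to-$1$, i.e.~a double cover.

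Second, I would invoke the lemma proved immediately above the statement to identify the branching locus of $\Xpre/\Gamma \to \mathbb{B}$ with that of $Y \to \mathbb{B}$. At this point, both $Y$ and $\Xpre/\Gamma$ are branched double covers of $\mathbb{B}$ with the same branch divisor $D_{\mathrm{br}} \sim -2K_{\mathbb{B}}$, the latter given by the vanishing locus of the section $s_1 \cdots s_{\hat{r}}$.

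Third, I would appeal to the classification of double covers: a branched double cover of a smooth variety is determined by the pair consisting of the branch divisor $D$ and a line bundle $L$ with $L^{\otimes 2} \cong \mathcal{O}_{\mathbb{B}}(D)$. Since $\mathbb{B}=\mathbb{B}_{\nabla}$ is smooth projective toric, $\mathrm{Pic}(\mathbb{B})$ is torsion-free (\cite[Prop.~4.2.5]{Cox:2011tor}); in particular there are no non-trivial square roots of $\mathcal{O}_{\mathbb{B}}(D_{\mathrm{br}})$. Both $Y$ and $\Xpre/\Gamma$ are obtained by adjoining a square root of the same section $s_1 \cdots s_{\hat{r}}$ of $-2K_{\mathbb{B}}$, so they correspond to the same pair $(D_{\mathrm{br}}, L)$ and are therefore isomorphic as coarse moduli spaces.

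The main obstacle is the third step: one must ensure that $\Xpre/\Gamma$ and $Y$ are not just two double covers with the same branch locus, but arise from the same square-root data. This is really a bookkeeping check that the local defining equations $z_{\alpha}^2 = s_{\alpha}$ used to build $\Xpre$ match, after taking the $\Gamma$-quotient, the local equation for $Y$ in the total space of the square-root line bundle of $-2K_{\mathbb{B}}$. Granted the torsion-freeness of $\mathrm{Pic}(\mathbb{B})$, any ambiguity in the choice of $L$ is eliminated and the identification as coarse moduli spaces follows.
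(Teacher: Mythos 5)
Your proposal is correct and follows essentially the same route as the paper: the paper's proof is exactly the preceding construction of $\Xpre$ and $\Gamma$, the lemma identifying the branching loci, and the one-line appeal to the absence of torsion line bundles on $\cB_{\nabla}$ (torsion-free Picard group) to conclude that two double covers with the same branch divisor coincide. Your third step merely spells out explicitly the classification of double covers by the pair $(D_{\mathrm{br}},L)$ with $L^{\otimes 2}\cong\mathcal{O}_{\cB}(D_{\mathrm{br}})$, which is precisely what the paper's "absence of torsion line bundles" remark encodes.
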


\begin{cj}\label{conj:prequotient}
\(\Xpre\) is smooth and $\Xpre=X$.
\end{cj}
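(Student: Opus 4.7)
My plan is to prove the conjecture in two stages: first to identify $\Xpre$ with the variety $X$ of Theorem \ref{thm:smoothX} as a scheme-theoretic complete intersection, and then to establish smoothness. Under the gauge-fixing hypothesis of \S\ref{sec:LLY} that is in force throughout \S\ref{sec:maingeom}, the identification is essentially combinatorial, and smoothness is the real obstacle.

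For the identification, I would unpack the combinatorial data of \S\ref{intro:results} in the gauge-fixed setting. Here $\hat{r} = r+N$ and $\mathcal{D} = \{\underline{d}^{(1)}, \ldots, \underline{d}^{(r)}, \underline{\theta}^{(1)}, \ldots, \underline{\theta}^{(N)}\}$; consequently $\mathscr{J}$ is in bijection with the pairs $(\underline{\theta}^{(i)}, \underline{\theta}^{(i)})$, so $\lvert\mathscr{J}\rvert = N$ and $\Theta\setminus p_{1}(\mathscr{J}) = \emptyset$, while $p_{2}(\mathscr{J}) = \{\underline{\theta}^{(i)}\}_{i}$. Assuming no accidental divisibility relations among the $\underline{d}^{(\alpha)}$, one has $\mathscr{H} = \emptyset$. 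Substituting into Theorem \ref{thm:smoothX} yields that $X$ is cut out by $r$ hypersurfaces of weights $2\underline{d}^{(\alpha)}$ in $\mathbb{C}^{r+N}/\!\!/(\mathbb{C}^{*})^{s}$ with weights $(\underline{d}^{(1)}, \ldots, \underline{d}^{(r)}, \underline{\theta}^{(1)}, \ldots, \underline{\theta}^{(N)})$, which is precisely the ambient $\mathbb{P}_{(\vec{\underline{d}}, \vec{\underline{\theta}})}$ of $\Xpre$. Using $s_{r+i}(z) = z_{i}$ from gauge fixing, the fiber-product construction of \S\ref{sec:prequotient} produces the explicit equations
\begin{equation*}
    w_{\alpha}^{2} = s_{\alpha}(u_{1}^{2}, \ldots, u_{N}^{2}), \qquad \alpha = 1, \ldots, r,
\end{equation*}
in homogeneous coordinates $(w_{1}, \ldots, w_{r}, u_{1}, \ldots, u_{N})$ on $\mathbb{P}_{(\vec{\underline{d}}, \vec{\underline{\theta}})}$, which are specific members of the linear system defining $X$. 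This realizes $\Xpre = X$ as schemes.

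Smoothness of $\Xpre$ then splits into two sub-steps. First, I would check that $\Xpre$ avoids the toric singular locus of $\mathbb{P}_{(\vec{\underline{d}}, \vec{\underline{\theta}})}$. Each toric singular stratum is carved out by the simultaneous vanishing of a subset of the $w_{\alpha}, u_{i}$ whose complementary sub-torus in $(\mathbb{C}^{*})^{s}$ has nontrivial stabilizer; on such a stratum, the equations above reduce to constraints on the surviving coordinates that are inconsistent for a generic branching section $s_{\alpha}$, as the vanishing of enough $u_{i}$'s forces $s_{\alpha}(u^{2}) = 0$ and hence $w_{\alpha}=0$ for every $\alpha$ whose weight is not supported on the non-vanishing stratum. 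Second, on the smooth part of the ambient, compute the Jacobian of $\{w_{\alpha}^{2} - s_{\alpha}(u^{2})\}_{\alpha}$ explicitly: a point of $\Xpre$ is singular only if, for each $\alpha$, either $w_{\alpha}=0$ (whence $s_{\alpha}(u^{2}) = 0$) or $u_{i}(\partial s_{\alpha}/\partial z_{i})(u^{2}) = 0$ for every $i$. The smoothness of each divisor $D_{\alpha}\subset \mathbb{B}$ and their mutual transversality, both part of the nef-partition data, rule out such a degenerate configuration.

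The principal obstacle lies in the smoothness argument: the sections $s_{\alpha}(u^{2})$ are highly non-generic polynomials of degree $2\underline{d}^{(\alpha)}$, so a direct Bertini argument is unavailable, and the transversality required on $\Xpre$ must be traced back to the transversality data on $\mathbb{B}$. A cleaner, more conceptual route is to exploit the finite morphism $\Xpre \to Y$ of \S\ref{sec:prequotient}: the singularities of $Y$ are abelian quotient singularities supported on the branch divisor and arising from points where the stabilizer of the $\Gamma$-action is non-trivial, and coordinate squaring separates the two sheets meeting at every such orbifold point. The freeness of the stabilizer action on the fiber coordinates of the Galois cover $\Xpre\to Y$ then reduces smoothness of $\Xpre$ to a finite combinatorial verification over the stabilizer sublattices of $\Gamma$, which is the step where I anticipate the bulk of the technical work.
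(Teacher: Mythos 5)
The statement you are proving is left as a conjecture in the paper: the authors verify it only in special cases (the trivial $N=1$ example, and the asserted proposition for gauge-fixed double covers of $\cP^3$), so there is no general proof to compare against, and your attempt has to stand on its own. As written it does not: it is a plan whose decisive steps are either missing or incorrect. First, the identification $\Xpre=X$ is only carried out under your side assumption $\mathscr{H}=\emptyset$. In gauge-fixed cases where some $\underline{d}^{(\beta)}=2\underline{d}^{(\alpha)}$ this fails: for the partition $(2,1^6)$ on $\cP^3$, Theorem \ref{thm:smoothX} outputs $X=\cP^5[4,2]$, while the pre-quotient construction gives a $[4,2,2]$ complete intersection in $\cP^6_{(2,1^6)}$; the two ambients differ and one must additionally eliminate the weight-two variable (which the special equations $w_\alpha^2=s_\alpha(u^2)$ do not obviously permit) before claiming equality. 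Moreover, $\Xpre$ is a highly non-generic member of the linear system defining $X$, so ``$\Xpre=X$ as schemes'' needs qualification; the honest content of the conjecture is that this special member is smooth, hence identifiable with the generic $X$ for the purposes at hand.

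Second, the smoothness argument is where the conjecture actually lives, and your sketch does not close it. The Jacobian criterion you state is wrong: for the system $\{w_\alpha^2-s_\alpha(u^2)\}_{\alpha=1,\dots,r}$, rank deficiency requires a nonzero $\lambda$ with $\lambda_\alpha w_\alpha=0$ for every $\alpha$ and $\sum_\alpha \lambda_\alpha\, u_i\,\partial_i s_\alpha(u^2)=0$ for every $i$, i.e.\ a linear dependence among the ($u$-weighted) gradients of exactly those $s_\alpha$ with $w_\alpha=0$ — not the per-$\alpha$ alternative ``either $w_\alpha=0$ or all $u_i\partial_i s_\alpha=0$'' you wrote. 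This is precisely the point where the simple-normal-crossing/transversality of the whole branch configuration (including the gauge-fixed divisors $\{u_i=0\}$), and not merely smoothness of each $D_\alpha$, must be invoked, and it has to be combined with a genuine analysis of the singular strata of $\mathbb{P}_{(\vec{d},\vec{\theta})}$, which the paper warns can be very singular and which you only assert is avoided. Finally, your ``cleaner route'' through $\Xpre\to Y$ is circular: the claim that the stabilizers of $\Gamma$ act freely on the cover, so that $Y=\Xpre/\Gamma$ has only quotient singularities with smooth total space, is essentially equivalent to the smoothness of $\Xpre$ you are trying to prove. So the proposal leaves the conjecture open, in line with the paper's own treatment, where it is supported only by the example families of \S\ref{sec:mathEx} and the GKZ-level evidence of Theorem \ref{thm:smoothX}.
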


\begin{rmk}
    One expects that the untwisted part of
    the genus-zero orbifold Gromov--Witten invariants of \(Y\)
    can be computed from the genus-zero Gromov--Witten invariants of \(\Xpre\)
    and the knowledge of the group action \(\Gamma\). 
    Computing genus-zero Gromov--Witten invariants of \(\Xpre\)
    is rather standard; one could compute genus-zero (orbifold) GW invariants of 
    \(\mathbb{P}_{(\vec{\underline{d}}, \vec{\underline{\theta}})}\)
    first \cite{Coates_Corti_Iritani_Tseng_2015} and then apply the quantum hyperplane section theorem
    \cite{2010Tseng}.

    We shall also point out that the toric variety 
    \begin{equation*}
        \mathbb{P}_{(\vec{\underline{d}}, \vec{\underline{\theta}})}\slash \Gamma
    \end{equation*}
    is also an orbifold. However, the subvariety
    \begin{equation*}
        \Xpre\slash\Gamma\subset 
        \mathbb{P}_{(\vec{\underline{d}}, \vec{\underline{\theta}})}\slash \Gamma
    \end{equation*}
    is not a complete intersection in general.
\end{rmk}

Clearly, it is possible to deform both immersions away from this special point in the complex structure moduli space, and repeat the argument. Then $\cP_{\Delta}$ embeds as the intersection of $r$ degree-$\underline{d}^{(\alpha)}$ hypersurfaces, and $X$ is cut out by generic algebraic functions.

\subsection{Examples}
\label{sec:mathEx}

We discuss two families of examples:
\begin{itemize}
	\item Double covers of weighted projective spaces, $Y \xrightarrow{ \ 2:1 \ } \cP_{\vec{w}}^{N-1}$;
	\item Double covers of products of projective spaces, $Y \xrightarrow{ \ 2:1 \ } \cP^{\kappa r}\times \cP^{r-1}$.
\end{itemize}
The results in these cases are summarized in \textsc{Table} \ref{tab:list}.

\begin{table}[th]
\centering
\begin{tabular}{|c |c |c | c|}
	\hline
	\textsc{base} & \textsc{nef partition} & $X$ & \textsc{double mirror} \\
	\hline
	$\cP_{\vec{\theta}}^{N-1}$ & $d^{(1)}H+\cdots+ d^{(r)}H$ & $\cP^{r+N-1}_{(\vec{d}, \vec{\theta})} [2d^{(1)}, \dots, 2d^{(r)}]$ & $\begin{matrix}\text{\small yes if }{\scriptstyle \vec{d}=\vec{\theta}=(1^N) } \\ \text{\small no otherwise}\end{matrix}$\\
	$\cP^{\kappa r}\times \cP^{r-1}$ & $\left( \begin{matrix}d^{(1)}_1 H_1+\cdots+ d^{(r)}_1 H_1 \\ d^{(1)}_2 H_2+\cdots+ d^{(r)}_2 H_2\end{matrix} \right)$ & Bl$_{\cP^{\kappa r}} \cP^{n} \left[ \begin{matrix}d^{(1)}_1, \cdots ,d^{(r)}_1 \\ d^{(1)}_2,\cdots ,d^{(r)}_2\end{matrix} \right]$ &  {\small yes} \\
	\hline
\end{tabular}
\caption{For the double cover $Y \ \xrightarrow{ \ 2:1 \ } \ \cB$ we specify the toric base $\cB$ and the part of the nef partition not fixed by \eqref{eq:gaugefixing}. The last column tells whether $\Ync$ and $X$ are double mirrors.}
\label{tab:list}
\end{table}\par

\subsubsection{Double cover of weighted projective spaces}
We consider a Calabi-Yau double cover 
\begin{equation*}
	Y \ \xrightarrow{ \ 2:1 \ } \ \cP^{N-1}_{\vec{\theta}} 
\end{equation*}
of a weighted projective space of weights $\vec{\theta}=(\theta^{(1)}, \dots, \theta^{(N-1)},1) \in \N^{N}$. As in e.g. \cite{Hosono:1993qy}, we take at least one entry of $\vec{\theta}$ equal to 1. In this example, $s=1$, and the nef partition with $\hat{r}=r+N$ is specified by $(d^{(1)},\dots, d^{(r)})$. $X$ is a Calabi--Yau complete intersection of $r$ hypersurfaces of even degree in a (different) weighted projective space $\cP^{r+N-1}_{(\vec{d}, \vec{\theta})}$, shown in the first line of \textsc{Table} \ref{tab:list}.\par
The most basic example in which we can prove Conjecture \ref{conj:prequotient} falls in this setup. Consider $N=1$ and nef partition $2=1+1$, leading to 
\begin{equation*}
    \cP^0 = \C /\!\!/ \C^{\ast} \cong \mathrm{pt} , \qquad X= \cP^{1}[2] .
\end{equation*}
Then $X$ consists of two (generically disjoint) points, which is a Calabi--Yau double cover of $\cP^0$; the group $\Gamma$ is trivial in this example.\par
Furthermore, for singular threefolds, we have the following
\begin{prop}
    Consider a singular double cover $Y \ \xrightarrow{ \ 2:1 \ } \ \cP^3$ branched over a divisor satisfying the hypotheses of \S\ref{sec:LLY}. Conjecture \ref{conj:prequotient} holds in these cases.
\end{prop}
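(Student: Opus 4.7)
The plan is to prove the proposition by enumerating the (finitely many) classical nef partitions of $-K_{\cP^3} = 4H$ and verifying both assertions of Conjecture \ref{conj:prequotient} case by case. Since every toric divisor on $\cP^3$ is linearly equivalent to $H$, a nef partition $F_1 + \cdots + F_r = -K_{\cP^3}$ corresponds to a partition of the four coordinate hyperplanes into $r$ nonempty subsets, yielding exactly the five cases $\vec{d} = (4), (3,1), (2,2), (2,1,1), (1,1,1,1)$. For each $\vec{d}$, the construction of \S\ref{sec:prequotient} realizes $\Xpre$ as a complete intersection in the weighted projective space $\cP^{r+3}_{(\vec{d}, 1,1,1,1)}$, with homogeneous coordinates $(w_1, \ldots, w_r, y_1, y_2, y_3, y_4)$ of the corresponding weights, cut out by
\begin{equation*}
    w_\alpha^2 - s_\alpha(y_1^2, y_2^2, y_3^2, y_4^2) = 0, \qquad \alpha = 1, \ldots, r,
\end{equation*}
where $s_\alpha \in \mathrm{H}^0(\cP^3, \mathcal{O}(d^{(\alpha)}))$ defines the corresponding piece of the branching locus. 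Matching this description against the first row of Table \ref{tab:list} identifies $\Xpre$ as a member of the family of Calabi--Yau complete intersections defining $X$, settling the identification assertion.

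For smoothness, the first step is to show that $\Xpre$ avoids the singular locus of the ambient: this locus is contained in the coordinate subspace $\{y_1 = y_2 = y_3 = y_4 = 0\}$, but at any such point the defining equations force $w_\alpha^2 = s_\alpha(0) = 0$, since each $s_\alpha$ is homogeneous of positive degree. Hence $w_\alpha = 0$ for all $\alpha$, which contradicts the exclusion of the irrelevant ideal. On the smooth locus of $\cP^{r+3}_{(\vec{d}, 1,1,1,1)}$, the Jacobian criterion applied to the equations above shows that a rank-deficient point $(w, y) \in \Xpre$ must have some $w_\alpha = 0$, and the subcollection of hypersurfaces $\{s_\alpha = 0\} \subset \cP^3$ contributing to the rank drop, restricted to the coordinate stratum $\{z_j = 0 : y_j = 0\}$, must fail to meet transversally at $z := (y_1^2, \ldots, y_4^2)$.

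A standard Bertini-type argument applied stratum by stratum to the very ample line bundles $\mathcal{O}_{\cP^3}(d^{(\alpha)})$ then rules out this pathology for generic sections: for each $I \subseteq \{1,2,3,4\}$, the restrictions $s_\alpha|_{\cP^{|I|-1}}$ either have empty common zero set on the stratum, when $r > |I| - 1$, or meet transversally along their common vanishing locus. The main obstacle, in my view, is the uniform stratum-by-stratum dimension count; the $\vec{d} = (2,2)$ case is the most delicate since the ambient $\cP^5_{(2,2,1,1,1,1)}$ has a one-dimensional singular locus and one must separately check that the two quadric sections have no shared singular behavior on any coordinate subspace of $\cP^3$. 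A short case analysis using $r \leq 4$ and $\dim \cP^3 = 3$ makes the codimension budget sufficient in all five cases, completing the proof.
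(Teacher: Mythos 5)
The paper states this proposition without giving a proof, so there is no argument of the authors to compare against line by line; your case-by-case verification is a reasonable way to supply one, and its skeleton is sound: the five partitions $(4),(3,1),(2,2),(2,1,1),(1^4)$ of $-K_{\cP^3}=4H$ are indeed the only gauge-fixed cases, the fiber-product/squaring construction of \S\ref{sec:prequotient} does realize $\Xpre$ as $\{w_\alpha^2=s_\alpha(y_1^2,\dots,y_4^2)\}\subset\cP^{r+3}_{(\vec d,1^4)}$ of multidegree $2\vec d$, matching the first row of Table \ref{tab:list}, the ambient singular locus sits in $\{y=0\}$ and is avoided since $s_\alpha(0)=0$, and the Jacobian of the affine cone can only drop rank when some $w_\alpha=0$, which translates precisely into a non-transverse incidence of the relevant $s_\alpha$'s on the coordinate stratum determined by $\{j: y_j=0\}$; the codimension counts do close in all five cases.

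Three refinements are needed to make this airtight. First, the ``empty or transverse'' dichotomy must be indexed by the subcollection $A\subseteq\{1,\dots,r\}$ of sections actually entering the rank drop (those with $\lambda_\alpha\neq 0$, hence $w_\alpha=0$ and $s_\alpha(z)=0$), not by $r$ versus $|I|-1$; emptiness is governed by $|A|$ and the degrees (e.g.\ two generic conics restricted to a coordinate $\cP^1$ already have empty common zero set even though $|A|=|I|$). Second, smoothness of each individual $\Div(s_\alpha)$, which is all that \S\ref{sec:LLY} literally assumes, is not sufficient: in the $(2,2)$ case a tangency of the two quadrics at a point of the open torus orbit produces a genuine singular point of $\Xpre$. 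The transversality you need is exactly the simple-normal-crossing hypothesis on $\sum_\alpha F_\alpha$ stated at the start of \S\ref{sec:prequotient} (or, equivalently, your Bertini genericity), and it is cleaner to invoke that hypothesis directly, since it then covers every allowed branch divisor rather than only generic ones. Third, since the defining equations $w_\alpha^2-s_\alpha(y^2)$ are a non-generic member of the linear systems $|2d^{(\alpha)}|$, the identification ``$\Xpre=X$'' should be read, as the paper does, as the statement that $\Xpre$ is a smooth member of the complete-intersection family $\cP^{r+3}_{(\vec d,1^4)}[2d^{(1)},\dots,2d^{(r)}]$ defining $X$; equality with a generic member holds only up to deformation of complex structure, which is all the $A$-period/GKZ statements of Conjecture \ref{conj1} require.
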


\subsubsection{Double covers of projective spaces branched over lines}
A special instance of the example above is the double cover of $\cP^{N-1}$ with branching locus $D_{\mathrm{br}}$ consisting of $2N$ hyperplanes, 
\begin{equation*}
	D_{\mathrm{br}}= \underbrace{H+ \cdots +H}_{2N}.
\end{equation*}
From this, we get a pair $(\Ync,X)$, where $X=\cP^{2N-1}[2, \dots, 2]$.\par
This setup was considered in \cite{Caldararu:2010ljp,Sharpe:2012ji}, and it provides an example of the double mirror phenomenon \cite{BorisovLi}. It follows from a result of Kuznetsov \cite{Kuznetsov:08} that there exists an equivalence 
\begin{equation*}
	\DC (\Ync) \xrightarrow{ \ \sim \ } D^b \mathrm{Coh}\left( \cP^{2N-1}[2, \dots, 2] \right) .
\end{equation*}\par
We have the following result, which shows that double mirror pairs are a non-generic phenomenon.
\begin{prop}
	Let $\Ync$ be the non-commutative resolution of the singular double cover $Y \ \xrightarrow{ \ 2:1 \ } \ \cP^{N-1}_{\vec{\theta}} $ branched over $N$ hyperplanes and $r$ hypersurfaces of degree $\vec{d}=(d^{(1)}, \dots, d^{(r)})$. It admits a Clifford double mirror Calabi--Yau complete intersection in a toric variety if and only if $\vec{\theta}=(1,\dots,1)$ and $\vec{d}=(1, \dots, 1)$.
\end{prop}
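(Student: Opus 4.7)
The sufficiency direction is already recorded in the preceding paragraph: when $\vec{\theta}=\vec{d}=(1,\ldots,1)$, the Calabi--Yau condition $\sum_\alpha d^{(\alpha)}=\sum_i \theta^{(i)}$ forces $r=N$, so $Y$ is branched over $2N$ hyperplanes in $\cP^{N-1}$, and Kuznetsov's result \cite{Kuznetsov:08} supplies the Clifford double mirror $X=\cP^{2N-1}[2,\ldots,2]$.

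For the necessity, the plan is to invoke the Borisov--Li criterion stated in \S\ref{sec:NCmath}. Assume that a Clifford double mirror $X$ exists. The necessary condition $\hat{r}=2N$, recorded in the remark closing \S\ref{sec:NCmath}, together with $\hat{r}=r+N$ from gauge fixing, immediately yields $r=N$. Under this condition the candidate double mirror, identified via the GKZ match of Theorem \ref{thm:smoothX} (cf.~Table \ref{tab:list}), is a complete intersection in the ambient toric variety $\cP^{2N-1}_{(\vec{d},\vec{\theta})}$, which has $2N$ one-cones $\{\nu_j\}_{j=1}^{2N}$ in a rank-$(2N-1)$ lattice $\bM_X$ subject to the unique (up to scaling) linear relation
\[
\sum_{\alpha=1}^N d^{(\alpha)} \nu_\alpha + \sum_{i=1}^N \theta^{(i)} \nu_{N+i} = 0 .
\]

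Next, the plan is to write out the Borisov--Li relation $\sum_{\alpha=1}^{N} \tilde{\nu}_\alpha = \tfrac{1}{2} \sum_{j=1}^{2N} \tilde{\nu}_{N+j}$ in $\Z^{N}\oplus \bM_X$ and project it onto the $\bM_X$-component. Since the first $N$ coordinates of each $\tilde{\nu}_\alpha$ are standard basis vectors while the $\bM_X$-component of $\tilde{\nu}_{N+j}$ is just $\nu_j$, the projection yields $\sum_{j=1}^{2N}\nu_j=0$. This is a second linear relation among the one-cones, which by uniqueness up to scaling must be proportional to the first: $(1,\ldots,1)\propto(\vec{d},\vec{\theta})$, forcing all entries to equal a common positive integer. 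The standard normalization of weighted projective space weights (at least one entry equal to $1$, as in \cite{Hosono:1993qy}) then forces this common value to be $1$, i.e.~$\vec{\theta}=\vec{d}=(1,\ldots,1)$.

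The principal obstacle is establishing the correct toric-data dictionary between $\Ync$ and its candidate double mirror $X$ — namely translating the nef partition of $-2K_{\cP^{N-1}_{\vec{\theta}}}$ that defines $\Ync$ into the Batyrev--Borisov nef partition of $-K$ of the ambient of $X$, and verifying that $r_X=N$ and $N_X=2N$ are the right Borisov--Li parameters. The complementary $\Z^N$-projection of the Borisov--Li relation, which forces $\lvert \mathcal{I}_\alpha\rvert =2$ for each piece, serves as a consistency check but is not needed for the weight conclusion. Once the dictionary is in place, the remainder reduces to elementary linear algebra in $\bM_X$.
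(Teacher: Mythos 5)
Your proof is correct and follows essentially the same route as the paper: the sufficiency is the known Kuznetsov/Borisov--Li case of $2N$ hyperplanes, and the necessity is exactly the paper's ``direct computation on the complete intersection side,'' which you make explicit by projecting the Borisov--Li relation $\sum_{\alpha}\tilde{\nu}_{\alpha}=\tfrac12\sum_{j}\tilde{\nu}_{r+j}$ onto $\bM$ to get $\sum_j \nu_j=0$ and comparing with the unique weight relation of $\cP^{2N-1}_{(\vec{d},\vec{\theta})}$. The paper leaves this check implicit, so your write-up is just a more detailed version of the same argument.
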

\begin{proof}
	One direction is known. 
	The converse implication is shown by direct computation. Starting on the complete intersection side of the correspondence, one checks that a weighted projective space fails to satisfy the assumptions of \cite{BorisovLi} unless all weights are equal to 1.
\end{proof}\par

\subsubsection{Double cover of products of projective spaces}
For $\Ync$ the non-commutative resolution of the Calabi--Yau double cover 
\begin{equation*}
	Y \ \xrightarrow{ \ 2:1 \ } \ \cP^{\kappa r} \times \cP^{r-1} ,
\end{equation*}
our prescription yields $X$ a smooth Calabi--Yau complete intersection in the blow-up $\mathrm{Bl}_{\cP^{\kappa r}} \cP^{(\kappa +2)r-1}$ of the projective space $ \cP^{(\kappa +2)r -1}$ along $\cP^{\kappa r}$.\par
We now proceed to prove a more general statement, that implies that the pair $(\Ync, X)$ just described is a double mirror pair.\par
Let $s ,r \in \N_{\ge 2}$ with $2r >s$, and $\left\{ \kappa_a \in \N , \ a=1, \dots, s -1\right\}$. We discuss the pair $(\Ync, X)$, where the former is the non-commutative resolution of a singular double cover of a product of $\cP^{\kappa_a r}$ and $\cP^{r-1}$; and $X$ is a certain blow-up of $\cP^n$ along $s-1$ loci.\par
To lighten the notation, we write $m_a := \kappa_a r + 1$ for all $a=1, \dots, s-1$ and $m_s:=2r-s$. Besides, we define $n:= \sum_{a=1}^{s} m_a$ and the set of indices $\mathfrak{M}_a$ of cardinality $\lvert \mathfrak{M}_a \rvert =m_a$ as 
\begin{align*}
	\mathfrak{M}_1 &:= \left\{ 1, \dots, m_1 \right\} , \\
	\mathfrak{M}_2 &:= \left\{ m_1+1, \dots, m_1+m_2 \right\} , \\
	\vdots & \\
	\mathfrak{M}_s &:= \left\{ n-m_s+1, \dots, n \right\} .
\end{align*}
Furthermore, we introduce the following lattice vectors in $\Z^n$:
\begin{align*}
	\nu_i &= (\dots, 0,\underbrace{1}_{i^{\text{th}}} ,0, \dots ) , \qquad i=1, \dots, n , \\
	\nu_{n+1} &= (-1,-1 \dots, -1) \\
	\nu_{n+2} &= (\underbrace{1, \dots, 1}_{m_1}, 0, \dots , 0) \\
	\nu_{n+3} &= (\underbrace{0, \dots, 0}_{m_1},\underbrace{1, \dots, 1}_{m_2}, 0, \dots , 0) \\
	\vdots & \\
	\nu_{n+s} &= (0, \dots, 0,\underbrace{1, \dots, 1}_{m_{s-1}}, \underbrace{0, \dots , 0}_{m_s}) \\
\end{align*}
and embed them into $\Z^r \oplus \Z^n$ as in \eqref{eq:tildenuI}. By construction, 
\begin{equation}
\label{eq:BLblowupnu}
	\sum_{i \in \mathfrak{M}_a} \nu_{i} = \nu_{n+1+a} , \qquad \forall a=1, \dots, s 
\end{equation}
and moreover 
\begin{equation}
\label{eq:doublemirrorcond}
	\sum_{i=n+s+1-2r}^{n+s} \nu_i = \sum_{i=n-m_s+1}^{n} \nu_i  + \nu_{n+1} + \sum_{i=n+2}^{n+s} \nu_i =0 .
\end{equation}\par
In practice, nef partition on is characterized as a way of writing 
\begin{equation*}
	\left\{ n+1, \dots, n+s \right\} \cup \bigsqcup_{a=1}^{s} \mathfrak{M}_a = \bigsqcup_{\alpha=1}^{r} \mathcal{I}_{\alpha} .
\end{equation*}
We require the indices are distributed as follows. Let us start with the collection of indices $\mathfrak{M}_s \sqcup \left\{ n+1, \dots, n+s \right\} $, which contains $m_s +s=2r$ elements; we distribute them pairwise into each set $\mathcal{I}_{\alpha}$. Next, for each $a=1, \dots, s-1$, it is possible to distribute the $\kappa_a r+1$ indices in each $\mathfrak{M}_a$ according to the rule: $\kappa_a+1$ indices go into the set $\mathcal{I}_{\alpha}$ that contains $(n+1+a)$; and into each one of the remaining $(r-1)$ sets $\mathcal{I}_{\alpha} $ for which $(n+1+a) \notin \mathcal{I}_{\alpha}$, we place exactly $\kappa_a$ indices.\par 
In this way, $\sum_{\alpha=1}^r \lvert \mathcal{I}_{\alpha}\rvert = 2r + \sum_{a=1}^{s-1} (\kappa_a r+1)$ by construction, and the Calabi--Yau condition $n+s=\sum_{\alpha=1}^r \lvert \mathcal{I}_{\alpha}\rvert$ simplifies into 
\begin{equation}
\label{eq:CYBLkappa}
	n+1= r \left( 2+\sum_{a=1}^{s-1} \kappa_a \right) .
\end{equation}
The collection of coefficients $\left\{ \kappa_a , \ a=1, \dots, s -1\right\}$ is required to satisfy \eqref{eq:CYBLkappa}, for fixed $r,n,s \in \N$.\par
From this derivation we have
\begin{prop}
	The above construction describes a Calabi--Yau $X$, which is the complete intersection of $r$ hypersurfaces in the blow-up \(\widetilde{\mathbb{P}}^{n}\) of $\cP^n$ along the loci $\cP^{\kappa_a r}$ for $a=1, \dots, s-1$. The $\alpha^{\text{th}}$ hypersurface is defined as the zero-locus of a section $f_{\alpha}$ of degree 
	\begin{equation*}
		d_a^{(\alpha)} = \begin{cases} \kappa_a & \text{ if } 1 \le a \le s-1 \text{ and } n+1+a \notin \mathcal{I}_{\alpha} \\ \kappa_a +1 & \text{ if } 1 \le a \le s-1 \text{ and } n+1+a \in \mathcal{I}_{\alpha} \\ 2 & \text{ if } a=s \end{cases}
	\end{equation*}
	under the action of $\C^{\ast}_a \subset (\C^{\ast})^s$, where the $\C^{\ast}_a$-action on the blow-up is induced by the torus action on the $a^{\text{th}}$ exceptional divisor for $a=1,\dots, s-1$, and by the torus action on the base $\cP^n$ for $a=s$.
\end{prop}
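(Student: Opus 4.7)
The plan is to identify $X$ and its ambient from the combinatorial data $(\nu_I,\mathcal{I}_\alpha)$ via the Batyrev--Borisov construction recalled in \S\ref{sec:toric}. The proof falls into three steps, followed by a remark on the most delicate ingredient.

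I would first identify the toric variety $\widetilde{\mathbb{P}}^n$ whose fan has one-cones generated by $\nu_1,\ldots,\nu_{n+s}$. The subcollection $\nu_1,\ldots,\nu_{n+1}$ generates the standard fan of $\mathbb{P}^n$, and by \eqref{eq:BLblowupnu} each additional ray $\nu_{n+1+a}$ equals $\sum_{i\in\mathfrak{M}_a}\nu_i$ for $a=1,\ldots,s-1$, so adjoining it performs a star subdivision of the simplicial cone $\mathrm{Cone}\{\nu_i : i\in\mathfrak{M}_a\}$. Geometrically this is the blow-up of the toric stratum cut out by $\{z_i=0 : i\in\mathfrak{M}_a\}$, which is a linear subspace of $\mathbb{P}^n$. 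Since the centers are smooth toric subvarieties of a smooth toric ambient, the resulting $\widetilde{\mathbb{P}}^n$ is smooth.

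Next, I would verify the Calabi--Yau condition \eqref{eq:CYBLkappa} by a direct count. The distribution rules for the nef partition give $|\mathcal{I}_\alpha|=2+\sum_{a=1}^{s-1}(\kappa_a+\mathbb{1}[n+1+a\in\mathcal{I}_\alpha])$, and summing over $\alpha=1,\ldots,r$, while using that each exceptional index $n+1+a$ belongs to exactly one $\mathcal{I}_\alpha$, yields $\sum_\alpha|\mathcal{I}_\alpha|=2r+r\sum_a\kappa_a+(s-1)$. Equating this with the total count $n+s$ of rays of $\widetilde{\mathbb{P}}^n$ recovers \eqref{eq:CYBLkappa}. The Batyrev--Borisov theorem \cite[Proposition~4.15]{Batyrev:1994pg}, recalled in \S\ref{sec:toric}, then produces a Calabi--Yau complete intersection $X\subset\widetilde{\mathbb{P}}^n$ of dimension $n-r$; smoothness for generic sections follows from Bertini's theorem in the smooth ambient.

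The final step is to read off the degrees $d_a^{(\alpha)}$ from the lattice of relations $\mathbb{L}$ using \eqref{eq:CYnefcharges}. The relations \eqref{eq:BLblowupnu} for $a=1,\ldots,s-1$ together with the pair identity \eqref{eq:doublemirrorcond} provide a natural basis $\{\vec{\ell}_a\}_{a=1}^s$ of $\mathbb{L}$ adapted to the blow-up decomposition: for $a\le s-1$, $\vec{\ell}_a$ is supported on $\mathfrak{M}_a\cup\{n+1+a\}$ with dual $\mathbb{C}^*_a$-action scaling the $a$-th exceptional divisor, and $\vec{\ell}_s$ is supported on $\mathfrak{M}_s\cup\{n+1,\ldots,n+s\}$ with dual $\mathbb{C}^*_s$-action descending to the base $\mathbb{P}^n$. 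Plugging into \eqref{eq:CYnefcharges} and applying the distribution rules produces the piecewise formula for $d_a^{(\alpha)}$, with $d_s^{(\alpha)}=2$ reflecting that each $\mathcal{I}_\alpha$ receives exactly two indices from the pair part. I expect this last step to be the main obstacle, since the sign conventions and normalization of $\{\vec{\ell}_a\}$ must be chosen consistently with the geometric identification of each $\mathbb{C}^*_a$ with the torus action on the $a$-th exceptional divisor (or on the base $\mathbb{P}^n$ for $a=s$), so that the contribution of the exceptional ray $n+1+a$ to $\mathcal{I}_\alpha$ manifests as the additive correction distinguishing the two branches of the formula.
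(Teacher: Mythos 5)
Your steps (i) and (ii) follow the same route the paper implicitly takes (the paper offers no separate argument; the proposition is presented as a read-off of the preceding construction), and they are fine as far as they go. The genuine problem is step (iii), exactly the point you flag and then leave to ``sign conventions'': it cannot be repaired that way. With the adapted basis you describe, the only relation supported on $\mathfrak{M}_a\cup\{n+1+a\}$ is, up to sign, the vector with entries $+1$ on $\mathfrak{M}_a$ and $-1$ at $n+1+a$. Plugging this into \eqref{eq:CYnefcharges} gives $d_a^{(\alpha)}=\#\bigl(\mathcal{I}_\alpha\cap\mathfrak{M}_a\bigr)-\varepsilon_{a\alpha}$, where $\varepsilon_{a\alpha}\in\{0,1\}$ records whether $n+1+a\in\mathcal{I}_\alpha$; by the distribution rule the two terms cancel, so $d_a^{(\alpha)}=\kappa_a$ for \emph{every} $\alpha$, and the case split $\kappa_a$ versus $\kappa_a+1$ never appears (while $d_s^{(\alpha)}=2$ does come out correctly). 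No renormalization or sign flip of the basis can produce the $\alpha$-dependent table: summing the degrees in the statement over $\alpha$ gives $r\kappa_a+1$, whereas for any $\vec{\ell}\in\mathbb{L}$ the total $\sum_I \ell^{(I)}$ is an integer combination of the basis sums $\kappa_b r$ and $2r$, hence divisible by $r$. So no choice of basis of $\mathbb{L}$, i.e.\ no splitting of $(\mathbb{C}^{\ast})^s$, reproduces the stated degrees through \eqref{eq:CYnefcharges}. What your computation actually establishes is the $\alpha$-independent class $\sum_{i\in\mathcal{I}_\alpha}[D_i]=(2+\sum_a\kappa_a)H-\sum_a\kappa_a E_a$, i.e.\ degrees $(\kappa_1,\dots,\kappa_{s-1},2)$ in your basis --- which is, incidentally, what the paper's own Calabrese remark records, since there both sections lie in $\lvert 3H-E\rvert$. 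As written, then, your proof does not deliver the degree formula in the statement; it delivers the uniform one, and the discrepancy is substantive, not notational.

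A second, smaller gap sits in step (i): the star subdivision at $\nu_{n+1+a}=\sum_{i\in\mathfrak{M}_a}\nu_i$ blows up the orbit closure $\{z_i=0:\, i\in\mathfrak{M}_a\}$, a linear subspace of codimension $m_a$, i.e.\ a $\mathbb{P}^{\,n-m_a}$. You state the center without computing its dimension and never compare it with the $\mathbb{P}^{\kappa_a r}$ appearing in the statement; the two agree only in special cases (e.g.\ $s=r=2$, $\kappa=1$, the Calabrese example), since $n-m_a=\kappa_a r$ forces $n+1=2m_a$. A complete argument must either verify this identification or make explicit that the toric model produced by the rays is the blow-up along $\mathbb{P}^{\,n-m_a}$, so both the ambient description and the degree formula need to be confronted with the statement rather than assumed to match.
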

We introduce a Calabi--Yau double cover of the product of projective spaces
\begin{equation*}
	Y \ \xrightarrow{ \ 2:1 \ } \ \cP^{\kappa_1 r} \times \cdots \times \cP^{\kappa_{s-1} r} \times \cP^{r-1} ,
\end{equation*}
with branching locus determined by the same nef partition as above, and lines. Let $\Ync$ be the non-commutative resolution of this double cover.
\begin{prop}
Let $X$ and $\Ync$ be as above. They form a Clifford double mirror pair.
\end{prop}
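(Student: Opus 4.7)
The strategy is to exhibit a single $(2,2)$ GLSM, built from the $r+N$ vectors $\tilde{\nu}_{I}$ of this section, whose FI-$\theta$ parameter space contains two chambers realizing $X$ and $\Ync$ respectively; the Clifford double mirror property then follows by definition. I would organize the argument into three steps: building the common GLSM, analyzing its chamber decomposition, and matching each chamber to the claimed geometry.

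For the first step, I take the gauge group to be $(\C^{\ast})^{s}$ acting on $r+N$ chiral multiplets whose charges are read off from a basis of the lattice of relations $\widetilde{\mathbb{L}}\cong\mathbb{L}$ among the columns $\tilde{\nu}_{I}$ in \eqref{eq:tildenuI} (cf.~\S\ref{sec:latticeL}). The nef-partition distribution rule stated just before \eqref{eq:CYBLkappa} determines the superpotential $W=\sum_{\alpha=1}^{r}p_{\alpha}f^{(\alpha)}$, with each $f^{(\alpha)}$ of multidegree $\underline{d}^{(\alpha)}$. Absence of gauge anomalies and the Calabi--Yau condition follow from \eqref{eq:CYBLkappa} together with the vanishing identity \eqref{eq:doublemirrorcond}, which ensures that the $2r$ fields exchanged across the wall in the secondary fan pair up correctly.

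For the second step I would analyze the secondary-fan chambers. In the geometric chamber, the FI parameters push the $\tilde{\nu}_{r+i}$-type coordinates into their large-volume phase; the fields $p_{\alpha}$ enforce $f^{(\alpha)}=0$, and the IR target is the smooth Calabi--Yau complete intersection $X$ of the preceding Proposition inside $\widetilde{\cP}^{n}=\mathrm{Bl}_{\cP^{\kappa_{a}r}}\cP^{n}$. In the opposite hybrid chamber, the identities \eqref{eq:BLblowupnu} allow the $s-1$ exceptional-divisor coordinates of $\widetilde{\cP}^{n}$ to be traded for the sums over the blocks $\mathfrak{M}_{a}$, leaving a residual toric target equivalent to $\cB=\cP^{\kappa_{1}r}\times\cdots\times\cP^{\kappa_{s-1}r}\times\cP^{r-1}$, dressed with the $\Z_{2}$ gerbe encoded by the factor-of-$2$ charge rescaling in Table~\ref{tab:GLSMdoublenew}. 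The surviving superpotential reorganizes as $W=\sum_{\alpha=1}^{\hat{r}}z_{\alpha}^{2}s_{\alpha}(\phi)$, matching the hybrid model of \S\ref{sec:GLSMnc} whose category of $B$-branes realizes $\DC(\Ync)$. Conclusion of the proof then follows from the definition of Clifford double mirror pair.

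The main obstacle lies in this second step: because $\dim\widetilde{\cP}^{n}$ exceeds $\dim\cB$ by $\sum_{a<s}\kappa_{a}r$, one must use \eqref{eq:BLblowupnu} carefully to track which fields become massive in the hybrid chamber and confirm that the integration-out produces exactly the intended $\Z_{2}$ gerbe over $\cB$ and the correct branching sections $s_{\alpha}$. Once this combinatorial bookkeeping is performed, the identification of the residual hybrid target with the GLSM of \S\ref{sec:GLSMnc} is automatic by matching charges and superpotential. Alternatively, one may view the verification as an instance of the Borisov--Li machinery in the extended lattice $\Z^{r}\oplus\bM$, with the vanishing sum \eqref{eq:doublemirrorcond} providing the combinatorial core of the argument.
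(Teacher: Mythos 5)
Your route (a single GLSM whose two chambers realize $X$ and $\Ync$) is genuinely different from the paper's, which never analyzes phases: the paper simply verifies the Borisov--Li lattice criterion by writing $\mathrm{deg}^{\vee}=\sum_{\alpha=1}^{r}\tilde{\nu}_{\alpha}=\frac12\sum_{i=n+s+1-2r}^{n+s}\tilde{\nu}_{r+i}$ (the $\Z^{n}$-part being \eqref{eq:doublemirrorcond}, the $\Z^{r}$-part the pairwise distribution of those $2r$ indices), passes to the quotient lattice $\overline{\mathbf{N}}$ to check that the surviving rays reproduce $\cP^{r-1}\times\prod_{a<s}\cP^{\kappa_a r}=\cB$, and then cites \cite{BorisovLi} for the double-mirror/derived-equivalence statement. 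In principle a careful two-phase GLSM argument could also meet the paper's definition of a Clifford double mirror pair, but as written your proof has a concrete gap exactly at the step you flag as the ``main obstacle,'' and your description of the non-geometric chamber is wrong in detail. In that chamber the $\cP^{r-1}$ factor of the base $\cB$ must come from the $r$ fields $p_{\alpha}$ acquiring the role of homogeneous coordinates (as in the prototypical $\cP^{7}[2,2,2,2]$ example), while the $2r$ coordinates indexed by $\mathfrak{M}_{s}\cup\{n+1,\dots,n+s\}$ -- i.e.\ precisely the vectors entering \eqref{eq:doublemirrorcond} -- are the variables on which the superpotential is quadratic and which become the Clifford fiber directions. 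You instead drop the $p_{\alpha}$ from the hybrid analysis and try to assemble $\cB$ from the $\phi$'s alone by ``trading'' the exceptional coordinates via \eqref{eq:BLblowupnu}; the $\phi$-sector contains no $\cP^{r-1}$, so this cannot produce $\cB$.

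Second, the claim that the identification with the model of \S\ref{sec:GLSMnc} is ``automatic by matching charges and superpotential'' does not hold: the GLSM of Table~\ref{tab:GLSMdoublenew} for this double cover has a different number of fields than your complete-intersection GLSM (in the gauge-fixed case $N+\hat r=2N+r$ versus $n+s+r$), so the matching requires a nontrivial IR equivalence (Kn\"orrer-periodicity-type manipulations of the quadratic fiber variables and an identification of the unbroken $\Z_2$ producing the gerbe), and, more importantly, identifying the hybrid-phase B-brane category with $\DC(\Ync)=D^{b}(\cB,\mC_0\mathrm{-mod})$ is precisely the content one must prove or cite (this is what the appeal to \cite{BorisovLi}, or the Clifford-algebra results quoted in \S\ref{sec:GLSMnc}, supplies); asserting it by comparison of tables begs the question. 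A smaller slip: you invoke \eqref{eq:doublemirrorcond} for anomaly freedom/the Calabi--Yau condition, but that identity is the double-mirror relation among the $2r$ distinguished vectors; the Calabi--Yau/anomaly condition is the vanishing of the total charge coming from the nef partition, encoded in \eqref{eq:CYBLkappa}. Your closing remark that one could instead run the Borisov--Li machinery on $\Z^{r}\oplus\bM$ with \eqref{eq:doublemirrorcond} as the combinatorial core is exactly the paper's proof; to make your GLSM argument stand on its own you would need to carry out the chamber analysis with the correct base/fiber split and supply (or cite) the categorical identification of the hybrid phase.
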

\begin{proof}
	It suffices to note that the lattice vector 
	\begin{equation*}
		\mathrm{deg}^{\vee} =  (\underbrace{1, \dots, 1}_{r}; \underbrace{0, \dots, 0}_{n}) \in \Z^r \oplus \Z^n
	\end{equation*}
	can be written in two different ways, as 
	\begin{equation*}
		\sum_{\alpha=1}^{r} \tilde{\nu}_{\alpha} = \mathrm{deg}^{\vee} = \frac{1}{2}\sum_{i=n+s+1-2r}^{n+s} \tilde{\nu}_{r+i} .
	\end{equation*}
	The $\Z^n$-part of the second equality is simply \eqref{eq:doublemirrorcond}, whereas the $\Z^r$-part follows by construction, since the $2r$ indices $\left\{ n+s+1-2r, \dots, n+s \right\}$ are distributed pairwise in the $\mathcal{I}_{\alpha}$.\par
	At this stage we pass to the lattice 
	\begin{equation*}
		\overline{\mathbf{N}} := (\Z^r \oplus \Z^n )/ \left(  \sum_{i=n+s+1-2r}^{n+s} \Z \tilde{\nu}_{r+i} + \Z \mathrm{deg}^{\vee} \right) .
	\end{equation*}
	It is generated by the vectors $\left\{\tilde{\nu}_{I}, \ I=1, \dots, n+1+s-r \right\}$, subject to the relations 
	\begin{align*}
		\sum_{\alpha=1}^{r}\tilde{\nu}_{\alpha} & \equiv 0  \\
		\sum_{i \in \mathfrak{M}_a} \tilde{\nu}_{r+i} & =  \tilde{\nu}_{n+1+r+a} + \kappa_a \mathrm{deg}^{\vee} \equiv 0 
	\end{align*}
	where, in the relations in the second line, we have used \eqref{eq:BLblowupnu} and the non-generic choice of nef partition.\par
	From these relations together with the fact that $\mathfrak{M}_a \cap \mathfrak{M}_b = \emptyset$ if $b \ne a$, we deduce that the $r$ vectors $\left\{ \tilde{\nu}_{\alpha}, \ \alpha=1, \dots, r \right\}$ produce the toric diagram of $\cP^{r-1}$, and the $m_a$ vectors $\left\{\tilde{\nu}_{r+i} , \ i \in \mathfrak{M}_a \right\}$ produce the toric diagram of $\cP^{m_a-1} = \cP^{\kappa_a r}$, $\forall a=1, \dots, s-1$.\par
	To establish the derived equivalence between $X$ and $\Ync$, we apply \cite{BorisovLi}.
\end{proof}
\begin{rmk}This example generalizes a result from \cite{Calabrese:2014}, which is recovered setting $s=2,r=2,\kappa=1$. In this specialization, $Y \xrightarrow{ \ 2:1 \ }\cP^2 \times \cP^1$ is branched over a bi-degree $(3,2)$ hypersurface and lines, and $X = \left\{ f_1 =0=f_2 \right\} \subset \mathrm{Bl}_{\cP^2} \cP^5$, for two sections $f_1,f_2 \in \Gamma \left( \varpi^{\ast} \mathcal{O}_{\cP^5} (-3) (-S) \right)$, where $\mathrm{Bl}_{\cP^2} \cP^5 \xrightarrow{ \ \varpi \ } \cP^5$ denotes the blow-down map and $S$ is the exceptional divisor. This particular example was also considered in \cite[\S9.3]{BorisovLi} and \cite[\S6.2]{Katz:2022lyl} as an instance of the double mirror phenomenon.
\end{rmk}

\section{GLSM proposal}
\label{sec:mainGLSM}

The aim of this section is to substantiate Conjecture \ref{conj1} using the techniques of \S\ref{sec:GLSMHS}. 
The main result of this section is Theorem \ref{thm:smoothX}, which states that for each $\Ync$ there exists a Calabi--Yau 
complete intersection $X$ whose $A$-periods are annihilated by the same GKZ system.\par
This result holds for any choice of branching divisor, and reduces to the setup of Conjecture \ref{conj1} 
when it is chosen to satisfy the additional hypotheses of \S\ref{sec:LLY}. The derivation is based on the GLSM realization of the $A$-periods under consideration.\par 
Explicit examples are spelled out in \S\ref{sec:GLSMEx}.

We begin by recalling the result on the GKZ systems and non-commutative resolutions, when the branching locus is chosen according to \S\ref{sec:LLY}.
\begin{prop}[{\cite[Theorem~4.12]{Lee:2023piu}}]
\label{prop:GLSMAYnc}
	Assume $\hat{r}=N+r$ with $0 < r \le N$, and that the branching locus of the double cover $Y \longrightarrow \cB_{\nabla}$ satisfies \eqref{eq:gaugefixing}. For every $\Br \in \mathrm{Obj} \left( \DC (\Ync) \right)$, consider the $A$-period $\pz_{\Ync} (\Br; x)$ in \eqref{eq:Anc}. It satisfies the GKZ system \eqref{eq:GKZCICYBox}-\eqref{eq:GKZCICYEuler} with  
	\begin{equation*}
		\beta = \left( \begin{matrix} -1/2 \\ \vdots \\ -1/2 \\ 0 \\ \vdots \\ 0\end{matrix} \right) \in \mathbb{Q}^{\hat{r}+n} .
	\end{equation*}
\end{prop}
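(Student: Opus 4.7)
The plan is to show that the integrand of \eqref{eq:Znc}, once expressed as a function of the redundant variables $\{x_{I}\}$ via \eqref{eq:xtotnc}, is annihilated pointwise by the GKZ box and Euler operators acting in $x$. Since differentiation in $x$ commutes with the contour integration along $\gamma_{t}$, this propagates to the claimed annihilation of $\pz_{\Ync}(\Br; x)$.

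The first step is to expose the $x$-dependence of the integrand. Solving \eqref{eq:xtotnc} for $t_{a}$ and substituting into the exponential $e^{\ii\langle t,\sigma\rangle}$ of \eqref{eq:Znc}, and then absorbing the prefactor $\prod_{\alpha}x_{\alpha}^{-1/2}$ of \eqref{eq:Anc}, the integrand takes the schematic form $\prod_{I=1}^{\hat{r}+N}x_{I}^{a_{I}(\sigma)}\,G(\sigma)\,f_{\Br}(\sigma)$, with $G$ collecting the $x$-independent Gamma factors and with exponents $a_{I}(\sigma)$ that are affine linear in $\sigma$. Crucially, the constant term of $a_{I}$ equals $-\tfrac{1}{2}$ for $I=1,\ldots,\hat{r}$ (the $z_{\alpha}$ sector) and $0$ for $I=\hat{r}+1,\ldots,\hat{r}+N$ (the $\phi_{i}$ sector), directly reproducing the claimed $\beta$-vector.

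For the box operators, the computation relies on the identity $\partial_{x_{I}}^{k}\,x_{I}^{a_{I}(\sigma)} = a_{I}(\sigma)(a_{I}(\sigma)-1)\cdots(a_{I}(\sigma)-k+1)\,x_{I}^{a_{I}(\sigma)-k}$. Expanding $\Box_{\tilde{\ell}}$ on the monomial, the hypothesis $\tilde{\ell}\in\ker(\mathsf{A})$ forces the residual $x$-monomials on the two sides of the difference to coincide after \eqref{eq:xtotnc} is imposed, and the resulting Pochhammer discrepancy in $a_{I}(\sigma)$ is absorbed by the functional equation $\Gamma(z+1)=z\,\Gamma(z)$ applied to the $z_{\alpha}$- and $\phi_{i}$-Gamma factors, together with a translation of $\gamma_{t}$ along a lattice direction compatible with admissibility. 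For the Euler operators, $\sum_{I}[\mathsf{A}]_{I,j}\,x_{I}\partial_{x_{I}}$ acts diagonally on the monomial with eigenvalue $\sum_{I}[\mathsf{A}]_{I,j}\,a_{I}(\sigma)$. The $\sigma$-linear contribution vanishes as a consequence of the nef-partition identities \eqref{eq:CYnefcharges} together with the gauge-fixing hypothesis \eqref{eq:gaugefixing}, which pin down the structure of the columns $\tilde{\nu}_{I}$ of $\mathsf{A}$; the surviving $\sigma$-independent residue then coincides with $\beta_{j}$ by the observation at the end of the previous paragraph.

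The main obstacle will be to control the $\log(4)$ twist in \eqref{eq:xtotnc} and the $\varepsilon$-regularization simultaneously: one must check that the limit $\varepsilon\to 0^{+}$ commutes with differentiation in $x$, and that the half-integer offsets in the Gamma arguments are preserved under the contour shifts invoked in the box-operator step. A subsidiary technical point is that the brane factor $f_{\Br}$, which by \eqref{eq:fBsum} is a finite sum of integer translations of $\sigma$ inside the integrand, is compatible with both arguments: such translations leave the box operators invariant and change the Euler eigenvalues only by integer contributions that cancel against the corresponding shifts in the Gamma-function arguments, so they do not alter the conclusion.
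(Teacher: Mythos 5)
The paper does not actually reprove this statement: it cites \cite[\S4.3]{Lee:2023piu} for the gauge-fixed computation and defers the extension with the redundant variables $x_{r+1},\dots,x_{r+N}$ to the general machinery of Theorem \ref{thm:smoothX} and Appendix \ref{app:CICYperiod}. Your plan --- direct verification on the Mellin--Barnes integrand --- is in the spirit of the cited computation, and your Euler-operator step is essentially sound: the eigenvalue of $\sum_I[\mathsf{A}]_{I,j}\,x_I\partial_{x_I}$ on $\prod_I x_I^{a_I(\sigma)}$ has vanishing $\sigma$-linear part because of the lattice relation $\sum_i\theta_a^{(i)}\nu_i=0$ (for the $\mathbf{M}$-block rows) and because of \eqref{eq:CYnefcharges} together with \eqref{eq:gaugefixing} (for the first $\hat{r}$ rows), while the constant part is $-1/2$ resp.\ $0$ thanks to the prefactor in \eqref{eq:Anc}. (A small correction: $f_{\Br}(\sigma)$ is $x$-independent once the integrand is put in monomial form, so it cannot shift Euler eigenvalues; there is nothing there to ``cancel''.)

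The genuine gap is in the box-operator step. Acting with the two halves of $\Box_{\tilde\ell}$ on $\prod_I x_I^{a_I(\sigma)}$ produces $\prod_I x_I^{a_I(\sigma)-\tilde\ell_I^{+}}$ and $\prod_I x_I^{a_I(\sigma)-\tilde\ell_I^{-}}$, which are different functions of the \emph{independent} variables $x_I$; they are not identified by ``imposing \eqref{eq:xtotnc}'', since \eqref{eq:xtotnc} only records how the period descends from a function of $t$ and plays no role inside the GKZ system, whose operators differentiate all $\hat{r}+N$ variables separately. The correct mechanism is the substitution $\sigma\mapsto\sigma-\ii c$ in one of the two integrals, with $c\in\Z^{s}$ the coordinate vector of $\tilde\ell$ in the basis of $\ker\mathsf{A}$ (this uses $\widetilde{\mathbb{L}}\cong\mathbb{L}$); this matches the monomials exactly and, via $\Gamma(z+1)=z\,\Gamma(z)$, reproduces the Pochhammer factors, with $f_{\Br}$ inert because the $q_k$ are integral. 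But you must then bring the translated contour $\gamma_t+\ii c$ back to $\gamma_t$, and a translation by an imaginary lattice vector generically crosses the pole hyperplanes $\langle\underline{\theta}^{(i)},\sigma\rangle=\ii(\varepsilon+n/2)$ and $\langle\underline{d}^{(\alpha)},\sigma\rangle=-\ii(n+\tfrac12-\varepsilon)$, so the difference of the two integrals is a priori a sum of residues rather than zero. You flag contour shifts as delicate but never resolve this, and it is precisely where the naive argument fails. The standard repairs are either to close the contour in the chamber $e^{-t}\in\mathfrak{C}$ and verify the box relations term by term on the resulting residue ($\Gamma$-)series, or to proceed as in \S\ref{app:GKZCICY}, rewriting the period as an integral supported on the locus \eqref{eq:ZhatBCICYconstraint}, where the box operators annihilate it manifestly and no contour manipulation is needed; without one of these (or a pole-counting argument tailored to the admissible contour), the proof is incomplete.
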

\begin{proof}This was proven in \cite[\S4.3]{Lee:2023piu} by direct computation (with $x_{i,j}$ therein being $x_{\hat{r}+i}$ here, and $x_{\alpha,0}$ therein identified with $x_{\alpha}$ here for $\alpha=1, \dots, r$). In the present statement, we introduce redundant parameters $x_{r+1}, \dots, x_{r+N}$ associated to the fixed part \eqref{eq:gaugefixing}, and work in the larger lattice $\Z^{\hat{r}} \oplus \mathbf{M}$, as explained in \S\ref{sec:doublecover}. We omit the proof of the extension of the result to include these redundant parameters, as it will be recovered later as a particular case.
\end{proof}

We generalize the statement of Proposition \ref{prop:GLSMAYnc} by relaxing the condition on the branching locus, and claim the following.\par
	Consider a branched double cover $Y \longrightarrow \cB_{\nabla}$ of a toric Fano variety, and let $\Ync$ be its non-commutative resolution, as in \S\ref{sec:GLSMnc}.
	There exists a smooth Calabi--Yau complete intersection $X$, such that the $A$-periods of $\Ync$ and $X$ satisfy the same GKZ system, upon a certain identification of variables.\par
The proof, together with a constructive prescription to read off $X$, is in \S\ref{sec:GLSMGKZderivation}. 

\begin{rmk}The existence of such a smooth Calabi--Yau complete intersection $X$ was observed in examples in \cite{Katz:2023zan}, for double covers of $\cP^3$. $X$ was dubbed the `smooth cousin' of $Y$. Here we show that the `smooth cousin' is a universal phenomenon; furthermore, we provide the expression for the smooth cousin of the most general branched double covers of toric Fano varieties.
\end{rmk}

\subsection{GKZ systems, non-commutative resolutions, and smooth complete intersections}
\label{sec:GLSMGKZderivation}
We study the partition function \eqref{eq:Znc}, which we rewrite here:
\begin{equation}
\label{eq:ZYncGLSM}
	\mz_{\Ync} \left( \Br; t \right) = \int_{\gamma_t} \dd^s \sigma~e^{\ii \langle t, \sigma\rangle } ~\prod_{i=1}^{N} \Gamma \left( \ii 2 \langle \underline{\theta}^{(i)}, \sigma \rangle \right) \prod_{\alpha=1}^{\hat{r}} \Gamma \left( \frac{1}{2} - \ii \langle\underline{d}^{(\alpha)}, \sigma \rangle \right) ~f_{\Br} (\sigma) .
\end{equation}
As explained in \S\ref{sec:GLSMHS}, we have set $\varepsilon \to 0^{+}$ without loss of generality, and the admissible contour $\gamma_t$ is chosen to run along the real locus in $\mathfrak{t}_{\C}$, but avoiding $\sigma^{a}=0$ for all $a=1, \dots, s$.\par
The corresponding $A$-period $\pz_{\Ync} \left( \Br; x \right)$ is defined in \eqref{eq:Anc}.\par
\begin{rmk}
    Imposing the condition in \S\ref{sec:LLY}, the integrand becomes 
\begin{equation*}
\begin{aligned}
    & \prod_{i=1}^{N} \Gamma \left( \ii 2 \langle \underline{\theta}^{(i)}, \sigma \rangle \right) \prod_{\alpha=1}^{\hat{r}} \Gamma \left( \frac{1}{2} - \ii \langle\underline{d}^{(\alpha)}, \sigma \rangle \right) \\ 
    \stackrel{\text{gauge fixed}}{=} \ & \prod_{i=1}^{N} \Gamma \left( \ii 2 \langle \underline{\theta}^{(i)}, \sigma \rangle \right) \Gamma \left( \frac{1}{2} - \ii \langle\underline{\theta}^{(i)}, \sigma \rangle \right) \prod_{\alpha=1}^{r} \Gamma \left( \frac{1}{2} - \ii \langle\underline{d}^{(\alpha)}, \sigma \rangle \right) \\
    =& \left(\frac{\sqrt{\pi}}{2} \right)^N \prod_{i=1}^{N} 2^{ \ii 2 \langle \underline{\theta}^{(i)}, \sigma \rangle} \frac{\Gamma \left( \ii \langle \underline{\theta}^{(i)}, \sigma \rangle \right)}{\cosh  \left( \pi \langle\underline{\theta}^{(i)}, \sigma \rangle \right)} \prod_{\alpha=1}^{r} \Gamma \left( \frac{1}{2} - \ii \langle\underline{d}^{(\alpha)}, \sigma \rangle \right) ,
\end{aligned}
\end{equation*}
where we have used the properties of the Gamma function to pass to the second line. Throughout this section, we \emph{do not} impose these simplifications, allowing for the most general non-commutative resolutions. This not only generalizes the product of Gamma functions in the integrand, but also the functions $f_{\Br} (\sigma)$.
\end{rmk}

\subsubsection{Integral manipulations of the \texorpdfstring{$A$}{A}-period}
\label{subsec:manipulations}
To reduce clutter, we introduce the shorthand notation
\begin{equation}
\label{eq:tncshift}
	\tnc := t + 2 \log (2) \left( \sum_{\alpha=1}^{\hat{r}} \underline{d}^{(\alpha)} \right) = t + 4\log (2) \left( \sum_{i=1}^{N} \underline{\theta}^{(i)} \right)
\end{equation}
where the second equality stems from the Calabi--Yau condition.\par
We define the set $\mathcal{J} \subseteq \left\{ 1, \dots, N \right\}$ to be the collection of indices for which $\underline{\theta}^{(i)}= \underline{d}^{(\alpha)}$, for some $\alpha \in \left\{1, \dots, \hat{r} \right\}$, without repetitions. That is, one starts with $\underline{\theta}^{(i)}$ and scans through $\alpha \in \left\{1, \dots, \hat{r} \right\}$ to check whether $\exists \alpha_i$ such that $\underline{\theta}^{(i)}= \underline{d}^{(\alpha_i)}$. If so, then $\{ i\} \subseteq \mathcal{J}$, and for all $i^{\prime} \ne i$, one scans through $\alpha \in \left\{1, \dots, \hat{r} \right\} \setminus \alpha_i$. This guarantees that each $\underline{d}^{(\alpha)}$ is paired with at most one $\underline{\theta}^{(i)}$ for $i \in \mathcal{J}$.\footnote{Under the hypotheses of \S\ref{sec:LLY}, we have $\mathcal{J}=\left\{ 1, \dots, N \right\}$, but this is not necessarily the case in general, and $\mathcal{J}$ is any subset of $\left\{ 1, \dots, N \right\}$, possibly empty.}\par

\begin{lem}
Let $\mathcal{J} \subseteq \left\{ 1, \dots, N \right\}$ defined as above. There exist two sets $\mathfrak{H},\mathfrak{W} \subseteq \left\{ 1, \dots, \hat{r} \right\}$ with $\lvert \mathfrak{H}\rvert =\lvert \mathfrak{W}\rvert - \lvert \mathcal{J} \rvert $ such that \eqref{eq:ZYncGLSM} equals 
\begin{equation}
\begin{aligned}
	\mz_{\Ync} \left( \Br; t \right) &= \pi^{\frac{\hat{r}}{2}}  \int_{\gamma_t} \dd^s \sigma~ e^{\ii \langle \tnc, \sigma\rangle } ~ \prod_{i \in \mathcal{J} } \left( \frac{\pi}{\sin \left( 2 \pi \ii \langle \underline{\theta}^{(i)}, \sigma \rangle\right)} \right) ~f_{\Br} (\sigma) \\
		&\times \prod_{\substack{ 1 \le i \le N \\ i \notin \mathcal{J} }} \Gamma \left( \ii 2 \langle \underline{\theta}^{(i)}, \sigma \rangle \right) \prod_{\alpha \in \mathfrak{W}} \frac{1}{\Gamma \left( 1 - \ii \langle\underline{d}^{(\alpha)}, \sigma \rangle \right)} \prod_{\alpha\in \mathfrak{H} } \Gamma \left( 1 - \ii 2 \langle\underline{d}^{(\alpha)}, \sigma \rangle \right) .
\end{aligned}
\label{eq:ZYnclemma1}
\end{equation}
\end{lem}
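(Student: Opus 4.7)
The plan is to manipulate each of the factors $\Gamma\bigl(\tfrac{1}{2} - \ii \langle \underline{d}^{(\alpha)}, \sigma\rangle\bigr)$ in \eqref{eq:ZYncGLSM} via the Legendre duplication formula, and then combine the result with the $\Gamma\bigl(\ii 2 \langle \underline{\theta}^{(i)}, \sigma\rangle\bigr)$ factors for $i \in \mathcal{J}$ using the Euler reflection identity. Concretely, the duplication formula, in the form
\begin{equation*}
\Gamma\!\left(\tfrac{1}{2} - w\right) \;=\; \frac{2^{2w}\sqrt{\pi}\,\Gamma(1-2w)}{\Gamma(1-w)} ,
\end{equation*}
will be applied with $w = \ii \langle \underline{d}^{(\alpha)}, \sigma\rangle$ uniformly to every $\alpha \in \{1,\dots,\hat{r}\}$.

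First I would carry out this duplication on all $\hat{r}$ factors. This produces three kinds of contributions: a global prefactor $\pi^{\hat{r}/2}$, exponential factors $\prod_{\alpha} 2^{\ii 2 \langle \underline{d}^{(\alpha)}, \sigma\rangle} = \exp\bigl(\ii \langle 2\log(2)\sum_{\alpha}\underline{d}^{(\alpha)}, \sigma\rangle\bigr)$, which combine with $e^{\ii \langle t, \sigma\rangle}$ to give $e^{\ii \langle \tnc, \sigma\rangle}$ by the definition \eqref{eq:tncshift}, and finally the ratios $\Gamma(1-\ii 2\langle \underline{d}^{(\alpha)}, \sigma\rangle)/\Gamma(1-\ii \langle \underline{d}^{(\alpha)}, \sigma\rangle)$ for each $\alpha$.

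Next, for each $i \in \mathcal{J}$ there is a uniquely paired index $\alpha_i \in \{1,\dots,\hat{r}\}$ with $\underline{d}^{(\alpha_i)} = \underline{\theta}^{(i)}$. I would then match the unchanged $\Gamma\bigl(\ii 2 \langle \underline{\theta}^{(i)}, \sigma\rangle\bigr)$ with the newly-produced $\Gamma\bigl(1 - \ii 2 \langle \underline{d}^{(\alpha_i)}, \sigma\rangle\bigr)$ and apply Euler reflection,
\begin{equation*}
\Gamma\bigl(\ii 2 \langle \underline{\theta}^{(i)}, \sigma\rangle\bigr)\,\Gamma\bigl(1 - \ii 2 \langle \underline{\theta}^{(i)}, \sigma\rangle\bigr) \;=\; \frac{\pi}{\sin\bigl(2\pi \ii \langle \underline{\theta}^{(i)}, \sigma\rangle\bigr)} ,
\end{equation*}
yielding exactly the sine factors appearing in \eqref{eq:ZYnclemma1}. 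The remaining factors $\Gamma\bigl(\ii 2 \langle \underline{\theta}^{(i)}, \sigma\rangle\bigr)$ for $i \notin \mathcal{J}$ are untouched, and the surviving numerator Gamma functions are those indexed by $\mathfrak{H} := \{1,\dots,\hat{r}\} \setminus \{\alpha_i : i \in \mathcal{J}\}$.

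Finally, reading off the resulting expression one identifies $\mathfrak{W} := \{1,\dots,\hat{r}\}$ in the denominator, with $|\mathfrak{W}| = \hat{r}$ and $|\mathfrak{H}| = \hat{r} - |\mathcal{J}|$, so that the cardinality constraint $|\mathfrak{H}| = |\mathfrak{W}| - |\mathcal{J}|$ is automatic. The bookkeeping is the only delicate point: one must check that the assignment $i \mapsto \alpha_i$ is well-defined and injective, which is guaranteed by the non-repetition clause in the definition of $\mathcal{J}$. There is no analytic obstacle since no poles are crossed in the manipulations: duplication and reflection are identities of meromorphic functions, and the admissibility of $\gamma_t$ is unaffected because the hyperplane arrangement $\mathcal{H}$ coincides with the poles of the transformed integrand.
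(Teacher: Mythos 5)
Your proof is correct as a proof of the lemma stated, and the first two steps (Legendre duplication on all $\hat{r}$ factors, then Euler reflection on the $|\mathcal{J}|$ paired terms) are identical to what the paper does. The bookkeeping you mention — that the assignment $i \mapsto \alpha_i$ is injective — is guaranteed by the non-repetition clause in the construction of $\mathcal{J}$, as you note, and your choice $\mathfrak{W} = \{1,\dots,\hat{r}\}$, $\mathfrak{H} = \{1,\dots,\hat{r}\}\setminus\{\alpha_i : i\in\mathcal{J}\}$ does satisfy the cardinality constraint $|\mathfrak{H}| = |\mathfrak{W}| - |\mathcal{J}|$ and reproduces \eqref{eq:ZYnclemma1}.

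The paper, however, does one further reduction that you skip: it looks for pairs $(\alpha,\beta)$ with $\alpha\notin\mathfrak{J}$, $\beta\in\{1,\dots,\hat{r}\}$ and $2\underline{d}^{(\alpha)}=\underline{d}^{(\beta)}$, cancels the corresponding factor $\Gamma(1-2\ii\langle\underline{d}^{(\alpha)},\sigma\rangle)$ against $\Gamma(1-\ii\langle\underline{d}^{(\beta)},\sigma\rangle)^{-1}$, and only then declares $\mathfrak{H}$ and $\mathfrak{W}$ to be what remains. This extra step preserves the cardinality constraint (one element leaves each set), so both your choice and the paper's are valid witnesses for the existence statement. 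But the paper's refined $\mathfrak{H},\mathfrak{W}$ are precisely the sets that are fed into Definition~\ref{notation:Xsc} and Theorem~\ref{thm:smoothX} to read off the smallest ambient weighted projective space for $X$; using your unreduced sets would yield a valid but redundant presentation of $X$ with extra coordinates and extra hypersurfaces. So your proof establishes the lemma, but you should be aware that downstream arguments tacitly assume the reduced form.
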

\begin{proof}
Begin with \eqref{eq:ZYncGLSM}. To every Gamma function in the integrand with half-integer shift in the argument, we apply the Legendre duplication formula, in the form:
\begin{align}\label{eq:dupliGamma}
	\Gamma \left( \frac{1}{2}- \ii z \right) &= \sqrt{\pi} e^{2 \ii z \log (2) } \frac{  \Gamma \left( 1 - \ii 2 z \right) }{   \Gamma \left( 1 - \ii z  \right)} .
\end{align}
We thus rewrite 
\begin{equation}
\label{eq:ZYncduplic}
	\mz_{\Ync} \left( \Br; t \right) = \pi^{\frac{\hat{r}}{2}} \int_{\gamma_t} \dd^s \sigma~e^{\ii \langle \tnc, \sigma\rangle } ~\prod_{i=1}^{N} \Gamma \left( \ii 2 \langle \underline{\theta}^{(i)}, \sigma \rangle \right) \prod_{\alpha=1}^{\hat{r}} \frac{\Gamma \left( 1 - \ii 2 \langle\underline{d}^{(\alpha)}, \sigma \rangle \right) }{\Gamma \left( 1 - \ii \langle\underline{d}^{(\alpha)}, \sigma \rangle \right)}   ~f_{\Br} (\sigma) ,
\end{equation}
where the exponential terms produced by \eqref{eq:dupliGamma} have been reabsorbed in $\tnc$ by the definition \eqref{eq:tncshift}.\par
To further simplify the expression, define the two sets of integers:
\begin{itemize}
\item $\mathcal{J} \subseteq \left\{ 1, \dots, N \right\}$ the collection of indices defined above;
\item $\mathfrak{J} \subseteq \left\{ 1, \dots, \hat{r} \right\}$ the collection of indices $\alpha$ for which $\exists i \in \mathcal{J}$ such that $\underline{d}^{(\alpha)} = \underline{\theta}^{(i)}$.
\end{itemize}
With this notation, the product of Gamma functions in the numerator is 
\begin{align*}
	&\prod_{i=1}^{N} \Gamma \left( \ii 2 \langle \underline{\theta}^{(i)}, \sigma \rangle \right) \prod_{\alpha =1}^{\hat{r}} \Gamma \left( 1 - \ii 2 \langle\underline{d}^{(\alpha)}, \sigma \rangle \right) \\
 = &\prod_{\substack{ 1 \le i \le N \\ i \notin \mathcal{J} }} \Gamma \left( \ii 2 \langle \underline{\theta}^{(i)}, \sigma \rangle \right) \prod_{\substack{ 1 \le \alpha \le \hat{r} \\ \alpha \notin \mathfrak{J}} } \Gamma \left( 1 - \ii 2 \langle\underline{d}^{(\alpha)}, \sigma \rangle \right) ~ \prod_{i \in \mathcal{J} } \left[ \Gamma \left( \ii 2 \langle \underline{\theta}^{(i)}, \sigma \rangle \right) \Gamma \left( 1 - \ii 2 \langle\underline{\theta}^{(i)}, \sigma \rangle \right) \right] \\
	= &\prod_{\substack{ 1 \le i \le N \\ i \notin \mathcal{J} }} \Gamma \left( \ii 2 \langle \underline{\theta}^{(i)}, \sigma \rangle \right)\prod_{\substack{ 1 \le \alpha \le \hat{r} \\ \alpha \notin \mathfrak{J}} } \Gamma \left( 1 - \ii 2 \langle\underline{d}^{(\alpha)}, \sigma \rangle \right) ~ \prod_{i \in \mathcal{J} } \frac{\pi}{\sin \left( 2 \pi \ii \langle \underline{\theta}^{(i)}, \sigma \rangle\right)} .
\end{align*}
Thus, the product of Gamma functions in the integrand of \eqref{eq:ZYncduplic} becomes 
\begin{equation*}
\begin{aligned}
    & \prod_{i=1}^{N} \Gamma \left( \ii 2 \langle \underline{\theta}^{(i)}, \sigma \rangle \right) \prod_{\alpha=1}^{\hat{r}} \frac{\Gamma \left( 1 - \ii 2 \langle\underline{d}^{(\alpha)}, \sigma \rangle \right) }{\Gamma \left( 1 - \ii \langle\underline{d}^{(\alpha)}, \sigma \rangle \right)}  \\
    & \qquad \qquad = \prod_{\substack{ 1 \le i \le N \\ i \notin \mathcal{J} }} \Gamma \left( \ii 2 \langle \underline{\theta}^{(i)}, \sigma \rangle \right) ~ \prod_{i \in \mathcal{J} } \frac{\pi}{\sin \left( 2 \pi \ii \langle \underline{\theta}^{(i)}, \sigma \rangle\right)} \\
    & \qquad \qquad \times \prod_{\substack{ 1 \le \alpha \le \hat{r} \\ \alpha \notin \mathfrak{J}} } \Gamma \left( 1 - \ii 2 \langle\underline{d}^{(\alpha)}, \sigma \rangle \right) ~ \prod_{\alpha=1}^{\hat{r}} \frac{1}{\Gamma \left( 1 - \ii \langle\underline{d}^{(\alpha)}, \sigma \rangle \right)} .
\end{aligned}
\end{equation*}
It may happen that there exist pairs $(\alpha, \beta)$, with $\alpha \in \left\{1, \dots, \hat{r} \right\}\setminus \mathfrak{J}$ and $\beta \in \left\{1, \dots, \hat{r} \right\}$, for which $2d^{(\alpha)}=d^{(\beta)}$. If this is the case, it leads to further simplifications in the ratio of Gamma functions. After canceling out such pairs, we have 
\begin{equation*}
\begin{aligned}
	&\prod_{\substack{ 1 \le \alpha \le \hat{r} \\ \alpha \notin \mathfrak{J}}} \Gamma \left( 1 - \ii 2 \langle\underline{d}^{(\alpha)}, \sigma \rangle \right) \prod_{\alpha=1}^{\hat{r}}\frac{1}{\Gamma \left( 1 - \ii \langle\underline{d}^{(\alpha)}, \sigma \rangle \right)} \\
	& \qquad \qquad = \prod_{\alpha \in \mathfrak{H}} \Gamma \left( 1 - \ii 2 \langle\underline{d}^{(\alpha)}, \sigma \rangle \right) \prod_{\alpha \in \mathfrak{W}} \frac{1}{\Gamma \left( 1 - \ii \langle\underline{d}^{(\alpha)}, \sigma \rangle \right)}
\end{aligned}
\end{equation*}
for two sets of indices
\begin{equation*}
	\mathfrak{H},\mathfrak{W} \subseteq \left\{ 1, \dots, \hat{r} \right\} , \qquad 1 \le  \lvert \mathfrak{W} \rvert \le \hat{r} , \quad \lvert \mathfrak{H}\rvert =  \lvert \mathfrak{W} \rvert \setminus \lvert \mathfrak{J} \rvert .
\end{equation*}
Explicitly:
\begin{equation*}
\begin{aligned}
    \mathfrak{H} & =  \left( \{1, \dots, \hat{r} \} \setminus \mathfrak{J} \right) \setminus \left\{ \alpha \ : \ 2\underline{d}^{(\alpha)} = \underline{d}^{(\beta)} \ \text{ for some } 1 \le \beta \le \hat{r} \right\} , \\
    \mathfrak{W} & = \{1, \dots, \hat{r} \}  \setminus  \left\{ \beta \ : \ 2\underline{d}^{(\alpha)} = \underline{d}^{(\beta)} \ \text{ for some } 1 \le \alpha \le \hat{r}, \ \alpha \notin \mathfrak{J} \right\}  ,
\end{aligned}
\end{equation*}
counted without repetitions. Putting the pieces together, we arrive at 
\begin{align*}
	& \prod_{i=1}^{N} \Gamma \left( \ii 2 \langle \underline{\theta}^{(i)}, \sigma \rangle \right) \prod_{\alpha=1}^{\hat{r}} \frac{\Gamma \left( 1 - \ii 2 \langle\underline{d}^{(\alpha)}, \sigma \rangle \right) }{\Gamma \left( 1 - \ii \langle\underline{d}^{(\alpha)}, \sigma \rangle \right)} \\
 & \qquad \qquad = \prod_{\substack{ 1 \le i \le N \\ i \notin \mathcal{J} }} \Gamma \left( \ii 2 \langle \underline{\theta}^{(i)}, \sigma \rangle \right) \prod_{\alpha \in \mathfrak{W}} \frac{1}{\Gamma \left( 1 - \ii \langle\underline{d}^{(\alpha)}, \sigma \rangle \right)}\\
	& \qquad \qquad \times \prod_{\alpha\in \mathfrak{H} } \Gamma \left( 1 - \ii 2 \langle\underline{d}^{(\alpha)}, \sigma \rangle \right)  \prod_{i \in \mathcal{J} } \frac{\pi}{\sin \left( 2 \pi \ii \langle \underline{\theta}^{(i)}, \sigma \rangle\right)} .
\end{align*}
Plugging this equality inside the integral \eqref{eq:ZYncduplic} proves the lemma.
\end{proof}

\subsubsection{Explicit Calabi--Yau complete intersection} We provide a prescription to read off an explicit Calabi--Yau complete intersection geometry whose $A$-periods are annihilated by the same GKZ system that annihilates the $A$-periods of $\Ync$.\par
To state the main result of this section, we introduce some notation.
\begin{defin}\label{notation:Xsc}
    Let the sets $\mathcal{J}, \mathfrak{W}, \mathfrak{H}$ be as in \S\ref{subsec:manipulations}. 
    \begin{itemize}
        \item Take the indexed list of vectors
            \begin{equation}
            \label{eq:weightXgen}
    		  \left\{ 2\theta_a^{(i)} , \ i \in \left\{ 1, \dots , N \right\} \setminus \mathcal{J} \right\} \sqcup \left\{  d_a^{(\alpha)} , \ \alpha \in \mathfrak{W} \right\} , \qquad \forall a=1, \dots, s  ;
	        \end{equation}
            and define an ordering on it. We introduce an index $\iota =1, \dots, N + \lvert \mathfrak{W} \rvert - \lvert \mathcal{J} \rvert$. The $\iota^{\text{th}}$ entry of the list will be denoted $\ell_a ^{(\iota)}$, and will belong to the $\iota^{\text{th}}$ new weight vector. Using $\lvert \mathfrak{H}\rvert =\lvert \mathfrak{W}\rvert - \lvert \mathcal{J} \rvert $, there is a total of $N+\lvert \mathfrak{H}\rvert$ weight vectors $\underline{\ell}^{(\iota)}$.
        \item The rescaled degrees 
            \begin{equation}
            \label{eq:degreeXgen}
	   	       2d_a^{(\alpha)} ,  \qquad \alpha \in \mathfrak{H} , \ a=1, \dots, s ,
	        \end{equation}
            will characterize the new nef partitions.
    \end{itemize}
\end{defin}
Furthermore, we consider a map
\begin{equation*}
\begin{aligned}
    \C^{\hat{r}+N} \ & \to \ \C^{N+2 \lvert \mathfrak{H}\rvert} \\
    x_I \ & \mapsto \ c_I ,
\end{aligned}
\end{equation*}
satisfying 
\begin{equation}
\label{eq:xImapstocI}
\begin{aligned}
    x_{\alpha} & \mapsto c_{\alpha}^2 \qquad & \forall \alpha \in \mathfrak{H} , \\
    x_{\alpha} & \mapsto 1 \qquad & \forall \alpha \notin \mathfrak{H} , \\
    x_{\hat{r}+i} & \mapsto c_{2 \lvert \mathfrak{H}\rvert +i } ^2 \qquad & \forall i \notin \mathcal{J} ,
\end{aligned}
\end{equation}
and sending the remaining coordinates $\{ x_{\hat{r}+i}\}_{i \in \mathcal{J}}$ to products of $c_{\lvert \mathfrak{H}\rvert + \iota}$ consistently with 
\begin{equation}
\label{eq:xtocsign}
    \prod_{i=1}^{N} \left( - \frac{1}{4} x_{\hat{r}+i} \right)^{\theta_a^{(i)}} \mapsto \prod_{\iota} ( \epsilon_{\iota} c_{\lvert \mathfrak{H}\rvert + \iota } )^{\ell_a^{(\iota)}} ,
\end{equation}
where the sign $\epsilon_{\iota}=+1$ if the weight $\underline{\ell}^{(\iota)}$ comes from $ \underline{d}^{(\alpha)}$, and $\epsilon_{\iota}=-1$ if it comes from $2\underline{\theta}^{(i)}$.

\begin{thm}\label{thm:smoothX}
The notation is as in Definition \ref{notation:Xsc}. Let $X$ be the Calabi--Yau complete intersection in a toric variety determined, via the quotient construction reviewed in \S\ref{sec:latticeL}, by 
        \begin{itemize}
            \item the weight vectors $\underline{\ell}^{(\iota)}$ from \eqref{eq:weightXgen};
            \item the nef partition specified by $2\underline{d}^{(\alpha)}$ from \eqref{eq:degreeXgen}.
        \end{itemize}
Furthermore, let the moduli $\zeta_a$ of $X$ be related to the moduli $e^{-t_a}$ of $\Ync$ through 
\begin{equation*}
	\zeta_a= \exp \left\{ - t_a  - 4 \log (2) \sum_{i=1}^{N} \theta_a^{(i)} \right\} .
\end{equation*}
Consider the GKZ system \eqref{eq:GKZCICYBox}-\eqref{eq:GKZCICYEuler} associated to such $X$, and 
\begin{equation*}
\beta_X = \left( \begin{matrix} -1 \\ \vdots \\ -1 \\ 0 \\ \vdots \\ 0\end{matrix} \right).
\end{equation*}
Then
\begin{enumerate}[(1)]
    \item The $A$-periods of $X$ are annihilated by this GKZ system.
    \item The $A$-periods of $\Ync$ are annihilated by the same GKZ system, after the map \eqref{eq:xImapstocI}.
\end{enumerate}
\end{thm}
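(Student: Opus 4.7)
Part~(1) of the theorem is essentially standard. The variety $X$ is by construction a Calabi--Yau complete intersection in a toric variety, whose $A$-periods are computed from the hemisphere partition function of a standard CICY-type GLSM. The Calabi--Yau condition for $X$, namely $\sum_{\iota}\underline{\ell}^{(\iota)}=\sum_{\alpha\in\mathfrak{H}}2\underline{d}^{(\alpha)}$, is a short combinatorial check using the Calabi--Yau condition $\sum_{i}2\underline{\theta}^{(i)}=\sum_{\alpha}\underline{d}^{(\alpha)}$ for $Y$ together with the defining properties of $\mathcal{J}$, $\mathfrak{H}$, $\mathfrak{W}$ (pairs of indices removed in passing from the original weights to the new ones cancel symmetrically). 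The GKZ annihilation with parameter $\beta_X$ then follows by the classical arguments of \cite{Hosono:1995bm}, or equivalently by specialization of Proposition~\ref{prop:GLSMAYnc} to the purely CICY setting.

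For Part~(2), my plan is to start from the form \eqref{eq:ZYnclemma1} obtained in the preceding lemma and apply Euler's reflection formula $\Gamma(z)\Gamma(1-z)=\pi/\sin(\pi z)$ in two complementary ways. First, I rewrite $\pi/\sin(2\pi i \langle\underline{\theta}^{(i)},\sigma\rangle)=\Gamma(2i\langle\underline{\theta}^{(i)},\sigma\rangle)\Gamma(1-2i\langle\underline{\theta}^{(i)},\sigma\rangle)$ for each $i\in\mathcal{J}$: combined with the $i\notin\mathcal{J}$ product, this restores the full $\prod_{i=1}^{N}\Gamma(2i\langle\underline{\theta}^{(i)},\sigma\rangle)$. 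Second, I rewrite $1/\Gamma(1-i\langle\underline{d}^{(\alpha)},\sigma\rangle)=\sin(\pi i\langle\underline{d}^{(\alpha)},\sigma\rangle)\Gamma(i\langle\underline{d}^{(\alpha)},\sigma\rangle)/\pi$ for each $\alpha\in\mathfrak{W}$: this produces the required $\Gamma(i\langle\underline{d}^{(\alpha)},\sigma\rangle)$ times an entire $\sin$ factor. The resulting $\Gamma$-content matches exactly that of the hemisphere partition function of the CICY $X$, up to entire sine factors that do not alter the pole structure of the integrand.

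Next, I apply the substitution \eqref{eq:xImapstocI}--\eqref{eq:xtocsign} to view $\pz_{\Ync}(\Br;c)$ as a function of the CICY moduli $c_I$. The substitution is engineered so that the exponential $e^{i\langle\tnc,\sigma\rangle}$ becomes a monomial in the $c_I$ whose exponents are precisely the columns of $\mathsf{A}_X$; the $c_{\alpha}^{2}$ assignment for $\alpha\in\mathfrak{H}$ cancels the $1/\sqrt{x_{\alpha}}$ normalization in \eqref{eq:Anc}, and the signs $\epsilon_{\iota}$ absorb the signs appearing in \eqref{eq:xtotnc}. I would then verify the GKZ annihilation by direct differentiation under the integral sign: the Euler operators are controlled by a shift of $\sigma$ combined with the Calabi--Yau condition fixing $\beta_X$; each box operator $\Box_{\tilde{\ell}}$ with $\tilde{\ell}\in\mathrm{Ker}(\mathsf{A}_X)$ translates, via the recursion $\Gamma(z+1)=z\Gamma(z)$, into a shift of the Gamma arguments by integer amounts, which precisely cancels thanks to the kernel condition.

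The main technical obstacle is the combinatorial bookkeeping, namely propagating $\mathcal{J}$, $\mathfrak{H}$, $\mathfrak{W}$ and the signs $\epsilon_{\iota}$ consistently through the manipulations, and in particular checking that the entire $\sin$ factors introduced by the reflection in the $\mathfrak{W}$-piece are compatible with the box-operator shifts. I expect this to be the crux of the argument: one needs to verify that each factor $\sin(\pi i\langle\underline{d}^{(\alpha)},\sigma\rangle)$ transforms under the $\sigma$-shifts implied by the box operator action with signs that cancel in pairs between the $\mathfrak{W}$- and $\mathfrak{H}$-factors, and that the pairings of $\underline{\theta}^{(i)}$ with $\underline{d}^{(\alpha_i)}$ encoded in $\mathcal{J}$ are matched correctly on both sides of the substitution. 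With that in hand, the overall argument extends the proof of Proposition~\ref{prop:GLSMAYnc} (from \cite[Theorem~4.12]{Lee:2023piu}) from the gauge-fixed setting to the fully general one.
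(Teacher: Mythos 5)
Your overall outline (start from \eqref{eq:ZYnclemma1}, massage the integrand into the hemisphere integrand of the CICY $X$, match moduli via \eqref{eq:xImapstocI}--\eqref{eq:xtocsign}, then invoke part (1)) is the same as the paper's, and your reduction of part (1) to the known GKZ statement for toric complete intersections is acceptable in spirit (the paper proves it self-containedly in Appendix \ref{app:GKZCICY}, Theorem \ref{thm:CICYGKZ}; it is not a ``specialization'' of Proposition \ref{prop:GLSMAYnc}, which concerns the $\beta=-1/2$ double-cover case). The genuine gap is in part (2): the heart of the paper's proof is Lemma \ref{lemma:GKZlemma}, which asserts that (i) replacing $\Gamma(\ii\langle\underline{\ell},\sigma\rangle)$ by $1/\Gamma(1-\ii\langle\underline{\ell},\sigma\rangle)$ and (ii) inserting factors $\pi/\sin(\ii\pi\langle\underline{\ell},\sigma\rangle)$ leave the annihilating GKZ system unchanged (up to a sign shift of $t$). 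You never prove this. Your justification --- that the leftover factors are ``entire sine factors that do not alter the pole structure'' --- is both inaccurate and beside the point: the $\pi/\sin$ factors attached to $i\in\mathcal{J}$ are meromorphic with extra poles (your first reflection-formula step merely re-expands them into the very Gamma pair they came from, so the Gamma content does \emph{not} match that of $X$, which omits $i\in\mathcal{J}$ altogether), and even for the genuinely entire $\sin$ factors the issue is not poles: the modified integrand is a \emph{different function}, and what must be shown is that it is still annihilated by the same box operators. You correctly identify the sign/shift bookkeeping under the box-operator $\sigma$-shifts as ``the crux'', but then only state that you ``expect'' it to work; moreover your proposed mechanism (signs cancelling in pairs between the $\mathfrak{W}$- and $\mathfrak{H}$-factors) is not how the paper resolves it --- there the sign from modification (i) is absorbed into the identification of variables through $\epsilon_{\iota}$ in \eqref{eq:xtocsign}.

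For comparison, the paper establishes Lemma \ref{lemma:GKZlemma} in \S\ref{app:GKZlemma} by two different devices: for (i) it reruns the box-operator derivation using the Hankel-contour representation $1/\Gamma(1-z)=\int_{\mathscr{H}}\frac{\dd y}{y}e^{-y+z\log(-y)}$, showing the same $\delta$-function constraints (hence the same box operators) arise, with only the moduli identification shifted by $\ii\pi\underline{\ell}$; for (ii) it interprets the insertion $\pi/\sin=\Gamma(\ii\langle\underline{\ell},\sigma\rangle)\Gamma(1-\ii\langle\underline{\ell},\sigma\rangle)$ as the $A$-period of an auxiliary complete intersection $X_0$ with one extra coordinate and one extra hypersurface of degree $\underline{\ell}$, and then uses independence of $A$-periods from the complex structure to degenerate the new hypersurface to $\{\phi_0=0\}$, identifying $X_0$ with $X$ and the redundant GKZ variables with $1$. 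Until you supply an argument of comparable substance for both modifications (or a careful direct verification of the contour-shift/sign analysis you sketch, including what happens to the $\pi/\sin$ factors indexed by $\mathcal{J}$ under shifts generated by the \emph{new} weight lattice), your proof of part (2) is incomplete.
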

We note that, if $D_{\mathrm{br}}$ is smooth, the variety $X$ output by Theorem \ref{thm:smoothX} is deformation-equivalent to $Y$.\par
    Inserting the replacement \eqref{eq:xImapstocI} in $\pz_{\Ync} (\Br; x)$, we obtain the $A$-period 
    \begin{equation*}
        \pz_{\Ync} (\Br; x (c) ) =  \left( \prod_{\alpha \in \mathfrak{H}} \frac{1}{c_{\alpha}} \right) \mz_{\Ync} (\Br; t) 
    \end{equation*}
    with $e^{-t}$ in the right-hand side replaced in terms of the variables $\{ c_I \}$ associated to $X$. The prefactor is such that the GKZ system in the variables $x$ has $\beta_{\Ync} = \left( \frac{1}{2} , \cdots , \frac{1}{2} , 0 \cdots , 0\right)^{\top}$, while the GKZ system in the variables $c$ has $\beta_X = \left( 1 , \cdots , 1 , 0 \cdots , 0\right)^{\top}$. Observe that $\beta_{\Ync} $ and $\beta_{X}$ in general belong to lattices of different dimensions; the map \eqref{eq:xImapstocI} automatically implements the transformation of the $\beta$.\par
    
We now state a technical result, from which Theorem \ref{thm:smoothX} follows. The proof of the claim is given in \S\ref{app:GKZlemma}.
\begin{lem}\label{lemma:GKZlemma}
	Let $X$ be a Calabi--Yau complete intersection in a toric variety. 
	\begin{enumerate}[(1)]
    \item If $\underline{\ell} \in \mathfrak{t}_{\C}^{\vee} \cap \Z^s$ is a weight of the toric variety, the replacement 
		\begin{equation*}
			 \Gamma \left( \ii \langle \underline{\ell}, \sigma \rangle \right) \mapsto \frac{1}{ \Gamma \left( 1- \ii \langle \underline{\ell}, \sigma \rangle \right)} 
		\end{equation*}
		in the integrand of the $A$-period leads to the same GKZ system, but modifies the identification of parameters shifting $t \mapsto t+\ii \pi \underline{\ell}$ on the right-hand side of \eqref{eq:CICYGKZvar}.
		\item For any $\underline{\ell} \in \mathfrak{t}_{\C}^{\vee} \cap \Z^s$, inserting $\frac{\pi}{\sin \left( \ii \pi \langle\underline{\ell}, \sigma \rangle\right)}$ in the integrand of the $A$-period leads to the same GKZ system.
	\end{enumerate}
\end{lem}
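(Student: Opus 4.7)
The plan is to analyze how each manipulation affects the functional structure of the integrand relative to the GKZ box and Euler operators. The $A$-period is a contour integral over $\sigma$, with $c$-dependence entering only through the exponential $e^{\ii\langle t,\sigma\rangle}$ (via $c_I = e^{-t_I}$ up to constants), while the Gamma and trigonometric factors depend on $\sigma$ alone. The Box operators $\Box_{\tilde{\ell}}$ and Euler operators $\mathsf{E}_j-\beta_j$ differentiate only in $c$, and the standard proof that the $A$-period is annihilated by them rests on the functional equation $\Gamma(z+1)=z\Gamma(z)$ combined with the translation-invariance of the $c$-monomial factors along $\tilde{\ell}\in\mathrm{Ker}(\mathsf{A}_X)$. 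My strategy is to show that both proposed manipulations leave this mechanism intact, up to the stated shift in parameters.

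For part (1), I would apply the reflection formula $\Gamma(z)\Gamma(1-z)=\pi/\sin(\pi z)$ to rewrite
\begin{equation*}
\frac{1}{\Gamma\bigl(1-\ii\langle\underline{\ell},\sigma\rangle\bigr)} \ = \ \frac{\sin\bigl(\pi\ii\langle\underline{\ell},\sigma\rangle\bigr)}{\pi}\,\Gamma\bigl(\ii\langle\underline{\ell},\sigma\rangle\bigr) \ = \ \frac{e^{-\pi\langle\underline{\ell},\sigma\rangle}-e^{\pi\langle\underline{\ell},\sigma\rangle}}{2\pi\ii}\,\Gamma\bigl(\ii\langle\underline{\ell},\sigma\rangle\bigr).
\end{equation*}
Thus the substitution multiplies the old integrand by a linear combination of exponentials $e^{\mp\pi\langle\underline{\ell},\sigma\rangle}$, each of which can be absorbed into the factor $e^{\ii\langle t,\sigma\rangle}$ by a shift $t\mapsto t\pm\ii\pi\underline{\ell}$. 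The new $A$-period is therefore a linear combination of the original $A$-period evaluated at two shifted values of $t$; since the GKZ operators are polynomial in $c$ and in derivatives, they annihilate each shifted version and hence any linear combination, yielding the equality of the GKZ systems together with the advertised shift $t\mapsto t+\ii\pi\underline{\ell}$ on each branch.

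For part (2), I would use the reflection formula in the complementary form $\pi/\sin(\pi\ii\langle\underline{\ell},\sigma\rangle)=\Gamma(\ii\langle\underline{\ell},\sigma\rangle)\,\Gamma(1-\ii\langle\underline{\ell},\sigma\rangle)$, which exhibits the insertion as a $c$-independent product of two Gammas with opposite-sign arguments. Because the factor commutes with differentiation in $c$, the Euler operators act on the modified integrand exactly as on the original. For the Box operators I would either verify directly that the quasi-periodicity $\pi/\sin(\pi(z+1))=-\pi/\sin(\pi z)$ ensures the two summands in $\Box_{\tilde{\ell}}=\prod(\partial_{c})^{\tilde{\ell}^{+}}-\prod(\partial_{c})^{\tilde{\ell}^{-}}$ continue to cancel after the functional-equation shifts that implement the box identity, or, alternatively, evaluate the integral as a residue sum at the poles of the Gammas and $\sin^{-1}$ and check that each residue — a constant multiple of $\prod c_I^{\lambda_I}$ with $\lambda$ on the GKZ exponent lattice — solves the same GKZ system, since the new factor contributes only $c$-independent rational-in-$\sigma$ corrections.

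The main obstacle I anticipate is bookkeeping for the admissible contour $\gamma_t$ of Definition \ref{def:gammat}. In part (1), the shifts $t\mapsto t\pm\ii\pi\underline{\ell}$ may cross one of the hyperplanes in $\mathcal{H}$, so I would need to verify that the contour can be deformed through such crossings without picking up anomalous residues. In part (2), the inserted factor contributes a new family of simple poles at $\ii\langle\underline{\ell},\sigma\rangle\in\Z$, which I would have to show can be separated from the existing Gamma-pole lattice in a way compatible with the admissibility conditions; any residues picked up during the deformation must themselves be verified to solve the same GKZ system, a point that reduces to the $c$-independence of the inserted factor but requires care in matching residue weights to the GKZ exponent lattice.
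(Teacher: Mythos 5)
Your route is genuinely different from the paper's, and its two halves fare differently. For part (1) you use the reflection formula to write the replaced factor as $\tfrac{1}{2\pi\ii}\left(e^{-\pi\langle\underline{\ell},\sigma\rangle}-e^{\pi\langle\underline{\ell},\sigma\rangle}\right)\Gamma(\ii\langle\underline{\ell},\sigma\rangle)$ and absorb the exponentials into $t\mapsto t\pm\ii\pi\underline{\ell}$; the paper instead represents $1/\Gamma(1-z)$ by a Hankel-contour integral (see \eqref{eq:invgamma}) and reruns the delta-function localization of Appendix \ref{app:GKZCICY}, so the same constraints \eqref{eq:ZhatBCICYconstraint}, hence the same box operators, appear verbatim and the only change is the sign flip in the identification \eqref{eq:modGKZrelCI}. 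Your decomposition can be completed, but the pivotal sentence ``since the GKZ operators are polynomial in $c$ \dots they annihilate each shifted version'' is not a justification of the step it needs: in the $c$-variables a shift $t\mapsto t\pm\ii\pi\underline{\ell}$ is realized by rotating the argument of the coordinate $c_{r+i_0}$ carrying the weight $\underline{\ell}$ by $\pi$, and the corresponding substitution $c_{r+i_0}\mapsto -c_{r+i_0}$ does \emph{not} preserve $\Box_{\tilde{\ell}}$ whenever $\tilde{\ell}_{r+i_0}$ is odd. What saves you is that the two pieces are two branches of the multivalued $A$-period (they differ by the monodromy $c_{r+i_0}\mapsto e^{2\pi\ii}c_{r+i_0}$), and analytic continuations of solutions of the holonomic GKZ system along paths avoiding the singular locus are again solutions --- equivalently, one reruns the Appendix \ref{app:GKZCICY} derivation at the shifted $t$. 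You must say this, and also carry out the contour/convergence bookkeeping you only flag (each split integral converges and equals the continuation of the original partition function to the shifted $t$); as written this half is under-justified rather than wrong.

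For part (2) there is a genuine gap. The mechanism by which a box operator annihilates the Mellin--Barnes integrand is a shift $\sigma\mapsto\sigma+\ii\epsilon$ with $\epsilon\in\Z^{s}$ the coordinate vector of $\tilde{\ell}\in\widetilde{\mathbb{L}}$ in the chosen basis, after which the Gamma functional equations absorb the Pochhammer factors. Under this shift the inserted factor transforms as $\pi/\sin(\ii\pi\langle\underline{\ell},\sigma\rangle)\mapsto(-1)^{\langle\underline{\ell},\epsilon\rangle}\,\pi/\sin(\ii\pi\langle\underline{\ell},\sigma\rangle)$, so the quasi-periodicity you invoke does not ``ensure'' cancellation --- it flips the relative sign of the two summands of $\Box_{\tilde{\ell}}$ whenever $\langle\underline{\ell},\epsilon\rangle$ is odd, which is permitted since the lemma allows arbitrary $\underline{\ell}\in\Z^{s}$; to rescue the identity you would have to analyze the new $\sin^{-1}$ poles crossed in the shift, which you do not do. Your fallback via residues is also insufficient as stated: the poles of $\sin^{-1}$ collide with the Gamma poles, producing higher-order poles and hence logarithmic terms, so the residues are not monomials $\prod_I c_I^{\lambda_I}$ on an exponent lattice. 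The paper sidesteps all of this geometrically: it writes $\pi/\sin(\ii\pi\langle\underline{\ell},\sigma\rangle)=\Gamma(\ii\langle\underline{\ell},\sigma\rangle)\Gamma(1-\ii\langle\underline{\ell},\sigma\rangle)$, recognizes the modified integrand as the $A$-period of an auxiliary complete intersection $X_0$ with one extra coordinate of weight $\underline{\ell}$ and one extra hypersurface of degree $\underline{\ell}$, uses deformation invariance of $A$-periods to specialize that hypersurface to $\{\phi_0=0\}$, hence identifies $X_0$ with $X$, and treats the two extra GKZ variables as redundancy fixed to $1$. Without either that geometric device or a careful treatment of the sign and the crossed poles, your argument for point (2) does not establish the statement.
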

\begin{proof}It is clear that neither modification affects the Euler operators. For the analysis of the box operators, see \S\ref{app:GKZlemma}.\end{proof}
\begin{proof}[Proof of Theorem \ref{thm:smoothX}] 
\begin{enumerate}[(1)]
    \item The first statement is unrelated to double covers, and we defer its proof to \S\ref{app:GKZCICY}, cf. Theorem \ref{thm:CICYGKZ} therein.
    \item Applying Lemma \ref{lemma:GKZlemma} to \eqref{eq:ZYnclemma1}, there exists $\Br^{\prime} \in \MF_{U(1)^s} (W)$, with $\pi_{\ast} \Br^{\prime}  \in D^b \mathrm{Coh} (X)$, such that 
\begin{equation*}
\begin{aligned}
	\mz_{X} \left( \Br^{\prime}; t \right) &= \int_{\gamma_t} \dd^s \sigma~ e^{\ii \langle \tnc , \sigma\rangle } ~f_{\Br^{\prime}} \left( \sigma \right) \\
		\times & \prod_{\substack{ 1 \le i \le N \\ i \notin \mathcal{J} }} \Gamma \left( \ii \langle 2 \underline{\theta}^{(i)}, \sigma \rangle \right) \prod_{\alpha \in \mathfrak{W}} \Gamma \left( \ii \langle \underline{d}^{(\alpha)}, \sigma \rangle \right) \prod_{\alpha\in \mathfrak{H} } \Gamma \left( 1 - \ii \langle  2 \underline{d}^{(\alpha)}, \sigma \rangle \right) 
\end{aligned}
\end{equation*}
leads to an $A$-period annihilated by the same GKZ system. Note that the change of sign due to point (1) of Lemma \ref{lemma:GKZlemma} is accounted for by the sign $\epsilon_{\iota}$ in \eqref{eq:xtocsign}.\par
With the notation of Definition \ref{notation:Xsc}, the latter integral reads
\begin{equation*}
\begin{aligned}
	\mz_{X} \left( \Br^{\prime}; t \right) &= \int_{\gamma_t} \dd^s \sigma~ e^{\ii \langle \tnc , \sigma\rangle } ~\prod_{\iota =1}^{N + \lvert \mathfrak{W} \rvert - \lvert \mathcal{J} \rvert} \Gamma \left( \ii \langle \underline{\ell}^{(\iota)}, \sigma \rangle \right)  \prod_{\alpha\in \mathfrak{H} } \Gamma \left( 1 - \ii \langle 2\underline{d}^{(\alpha)}, \sigma \rangle \right)  ~f_{\Br^{\prime}} \left( \sigma \right) .
\end{aligned}
\end{equation*}
This expression equals the partition function of the Calabi--Yau complete intersection $X$ described in the statement of the theorem.
\end{enumerate}
\end{proof}

\begin{rmk}\label{rem:divide2}
    For certain situations, expression \eqref{eq:ZYnclemma1} may be further reduced. This is the case when, for a given $a$, there exist an integer $k_a \in \N$ such that 
\begin{equation}
\label{eq:kdividesld}
	k_a \vert 2\theta_a^{(i)} \quad \forall i =1, \dots, N , \qquad \text{ and } \qquad k_a \vert d_a^{(\alpha)} \quad \forall \alpha=1, \dots, \hat{r} .
\end{equation}
Typically, one demands that the greatest common divisor of all $\theta_a ^{(i)}$ for given $a$ is $1$. Therefore, scenario \eqref{eq:kdividesld} would only be possible with $k_a \in \left\{1,2\right\} $, and $k_a=2$ only if $d_a^{(\alpha)} \in 2 \N$ for all $\alpha =1, \dots, \hat{r}$. Therefore, for every $a=1, \dots, s$, one can define
	\begin{equation*}
		k_a := \begin{cases} 2 & \text{ if } d_a^{(\alpha)} \in 2 \N \quad \forall \alpha=1, \dots, \hat{r} \\ 1 & \text{ otherwise} \end{cases}
	\end{equation*}
    and rescale 
    \begin{equation*}
        2 \theta^{(i)}_a \mapsto \frac{2}{k_a} \theta^{(i)}_a , \qquad d^{(\alpha)}_a \mapsto \frac{1}{k_a} d^{(\alpha)}_a .
    \end{equation*}
    If $\widetilde{X}$ and $X_2$ are, respectively, the Calabi--Yau before and after rescaling, they yield the same GKZ system. However, the relation between the variables $\{c_I\}$ and $\{\zeta_a\}$ in $X_2$ would differ from the one for a given $X$ that does not arise from rescaling, since the former is inherited from $\widetilde{X}$.
\end{rmk}

To conclude, we cast the result of Theorem \ref{thm:smoothX} into the notation used in \S\ref{intro:results}. We have the indexed lists 
\begin{equation*}
\Theta:=\{ \underline{\theta}^{(1)},\ldots,\underline{\theta}^{(N)}\},\qquad \mathcal{D}:=\{\underline{d}^{(1)},\ldots,\underline{d}^{(\hat{r})}\} ,
\end{equation*}
as in \S\ref{intro:results}. $\mathscr{J}$ is the list of pairs $\left( \underline{\theta}^{(i)}, \underline{d}^{(\alpha)} \right) $ where $i \in \mathcal{J}$ and $\alpha \in \mathfrak{J}$ are the paired-up indices. Then, the lists of vectors defining $X$ in \S\ref{intro:results} are nothing but:
\begin{align*}
    \Theta \setminus p_1 (\mathscr{J}) &=\{ \underline{\theta}^{(i)} , \ i =1, \dots, N , \ i \notin \mathcal{J}\} , \\
    \mathcal{D} \setminus \pi_1 (\mathscr{H}) &=\{ \underline{d}^{(\alpha)} , \ \alpha \in \mathfrak{W}\} , \\
    \mathcal{D} \setminus \left( p_2 (\mathscr{J})  \cup \pi_2 (\mathscr{H})\right) &=\{ \underline{d}^{(\alpha)} , \ \alpha \in \mathfrak{H}\} .
\end{align*}\par
To demystify the notation, here are three simple threefold examples. 
\begin{itemize}
    \item If $Y \longrightarrow \cP^3$ is a double cover branched over a cubic hypersurface and five hyperplanes, $\Theta \cong ( 1,1,1,1)$ and $\mathcal{D} \cong (3,1,1,1,1,1)$. Then $p_1 (\mathscr{J})= \Theta$ and $p_2 (\mathscr{J})$ consists of the last four entries of $\mathcal{D}$. Hence, $\alpha \in \mathfrak{W}$ runs over all the entries in $\mathcal{D}$ while $\alpha \in \mathfrak{H}$ runs only over the first two entries in $\mathcal{D}$. $(\mathcal{D} \setminus p_2 (\mathscr{J})) \cong (3,2)$ in this case. We take the GIT quotient $\C^{0+6} /\!\!/ \C^{*}$ with weights $\mathcal{D}$, and intersect with two hypersurfaces of degrees $2 (\mathcal{D} \setminus p_2 (\mathscr{J})) \cong (6,2)$.\par
    In conclusion, $X$ in this example is the complete intersection of a degree-6 and a degree-2 hypersurfaces in the weighted projective space $\cP^5_{(3,1^5)}$.
    \item On the other hand, if $Y \longrightarrow \cP^3$ is branched over a cubic and a generic quintic hypersurface, $\Theta$ is as above but $\mathcal{D} \cong (5,3)$, whence there are no simplifications; $\mathscr{J}= \emptyset $, and both $\mathfrak{W}$ and $\mathfrak{H}$ label all the entries in $\mathcal{D}$. The GIT quotient $\C^{4+2} /\!\!/ \C^{*}$ is taken with weights $2 \Theta \sqcup \mathcal{D} \cong \{ (2,2,2,2), (5,3) \}$ and is intersected with two hypersurfaces of degrees $2\mathcal{D} \cong (10,6)$.\par
    In conclusion, $X$ in this example is the complete intersection of a degree-6 and a degree-10 hypersurfaces in the weighted projective space $\cP^5_{(5,3,2^4)}$.
    \item As an example of Picard rank 2, consider the double cover $Y \longrightarrow \cP^1 \times \cP^2$ with branching locus characterized by the matrix 
\begin{equation*}
    \mathcal{D} = \left( \begin{matrix} 2 & 1 & 1 & 0 \\ 1 & 3 & 1 & 1 \end{matrix} \right) .
\end{equation*}
The set of vectors $\Theta$ can also be represented as a matrix. In the present case, in which the base is a direct product, it is 
\begin{equation*}
    \Theta = \left( \begin{matrix} 1 & 1 & 0 &0 & 0 \\ 0 & 0 & 1 & 1 & 1\end{matrix} \right) .
\end{equation*}
In this case $\lvert \mathscr{J}\rvert =1$, because the last column of $\mathcal{D}$ can be paired up with the weight of an affine coordinate of the base, in the last column of $\Theta$. Therefore, 
\begin{equation*}
    \Theta \setminus p_1 (\mathscr{J}) \cong  \left( \begin{matrix} 1 & 1 & 0 &0 \\ 0 & 0 & 1 & 1 \end{matrix} \right) , \qquad \mathcal{D} \setminus p_2 (\mathscr{J}) \cong  \left( \begin{matrix} 2 & 1 & 1  \\ 1 & 3 & 1  \end{matrix} \right) .
\end{equation*}
The smooth $X$ sits inside the GIT quotient $\C^{4+4} /\!\!/ (\mathbb{C}^{*})^2$, where the torus acts with weights 
\begin{equation*}
    2 \left( \Theta \setminus p_1 (\mathscr{J}) \right) \sqcup \mathcal{D}  \cong \left( \begin{matrix} 2 & 2 & 0 & 0 & 2 & 1 & 1 & 0 \\ 0 & 0 & 2 & 2 & 1 & 3 & 1 & 1 \end{matrix} \right) ,
\end{equation*}
and $X$ is cut out by three sections of bi-degree
\begin{equation*}
   2 \left(  \mathcal{D} \setminus p_2 (\mathscr{J}) \right) \cong  \left( \begin{matrix} 4 & 2 & 2 \\ 2 & 6 & 2 \end{matrix} \right) .
\end{equation*}
\end{itemize}

\subsection{Examples}
\label{sec:GLSMEx}

\subsubsection{One-modulus family}
Consider the situation $s=1$, and $\dim \Ync = N-1 = \dim X$. This case corresponds to the non-commutative resolution of 
\begin{equation*}
    Y \ \xrightarrow{ \ 2:1 \ } \ \cP^{N-1}_{\vec{\theta}} ,    
\end{equation*}
a branched double cover of a weighted projective space, and has only one modulus $\zeta$, which we take to be in a neighborhood of $\zeta =0$. We further assume for simplicity that $\theta^{(N)}=1$ \cite{Hosono:1993qy}.\par
Write $\Theta := \zeta \frac{\partial \ }{\partial \zeta}$. The box operators derived in \S\ref{sec:GLSMGKZderivation} are equivalent to
\begin{equation*}
    \prod_{\iota } \prod_{k_{\iota}=0}^{\ell^{(\iota)} -1} \left( \ell^{(\iota)} \Theta - k_{\iota} \right) \ - \ \zeta \prod_{\alpha \in \mathfrak{H}} \prod_{k_{\alpha}=0}^{2d^{(\alpha)} -1} \left( -2d^{(\alpha)} \Theta - k_{\alpha} \right) .
\end{equation*}
Factoring this expression one identifies a Picard--Fuchs operator annihilating the $A$-periods of $X$.\par
To compute the holomorphic $A$-periods in this family of examples, we pick any matrix factorization $\mathcal{O}_{\mathrm{pt}}$ such that 
\begin{equation*}
    f_{\mathcal{O}_{\mathrm{pt}}} (\sigma) = \prod_{i=1}^{N}\left( 1+e^{2\pi \theta^{(i)}\sigma } \right) \prod_{i=1}^{N-1}\left( 1-e^{2\pi \theta^{(i)}\sigma } \right) ,
\end{equation*}
The notation $\mathcal{O}_{\mathrm{pt}} $ emphasizes that this matrix factorization is the analogue of a skyscraper sheaf on $\Ync$. We now proceed to show that this choice gives holomorphic $A$-periods, and evaluate them explicitly.\par
We start from expression \eqref{eq:ZYncduplic} specialized to the present case, which reads 
\begin{equation*}
\begin{aligned}
    \mz_{\Ync} \left( \mathcal{O}_{\mathrm{pt}}\right) = \pi^{\frac{\hat{r}}{2}} C_{\Ync} \int_{\gamma} & \zeta^{- \ii \sigma} \prod_{i=1}^{N} \Gamma \left( 2 \ii \theta^{(i)} \sigma\right) \prod_{\alpha=1}^{\hat{r}} \frac{ \Gamma \left( 1-2 \ii d^{(\alpha)} \sigma \right) }{\Gamma \left( 1-\ii d^{(\alpha)} \sigma \right)} \\
    &\times e^{2\pi \sigma \sum_{i=1}^{N-1} \theta^{(i)} } \left( 1+e^{2 \pi \sigma} \right) \prod_{i=1 }^{N-1} \left( -2 \sinh \left( 2 \pi \theta^{(i)} \sigma \right) \right) .
\end{aligned}
\end{equation*}
Here $C_{\Ync} \ne 0$ is a normalization constant, independent of $\Br$, which we introduce to normalize the holomorphic $A$-period to be $1+O(q)$.\par
\begin{figure}[t]
    \centering
    \includegraphics[width=0.4\linewidth]{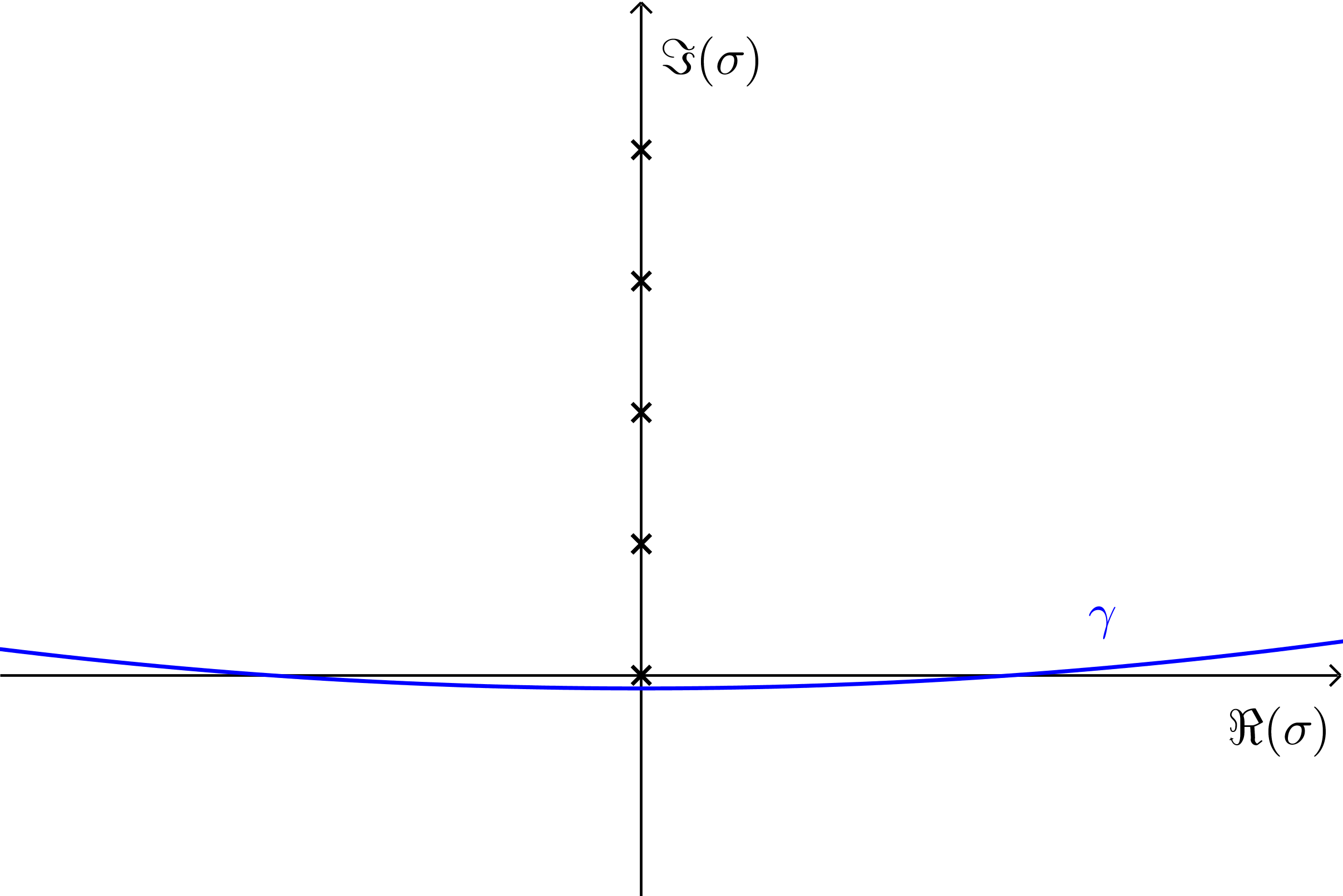}\hspace{0.1\linewidth}\includegraphics[width=0.4\linewidth]{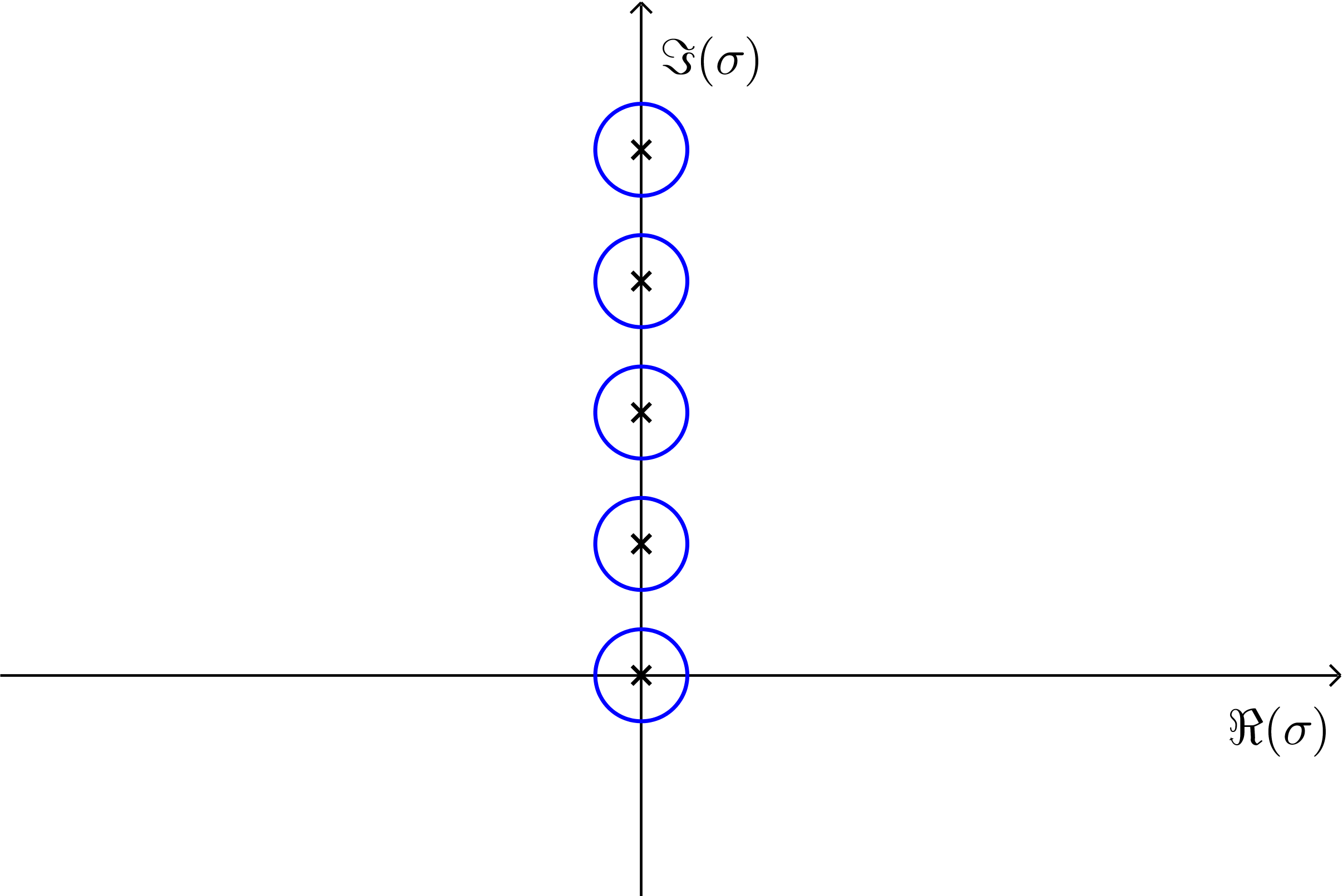}
    \caption{Contour deformation in the one-modulus family. Left: We tilt $\gamma$ in the upper-half plane and close it at $\Im (\sigma) \to + \infty$. Right: We deform the integration cycle to encircle the non-trivial poles.}
    \label{fig:residue-contour}
\end{figure}\par
In the chamber $0<\lvert \zeta \rvert \ll 1$ (equivalently $\Re (t) \gg +1$) it is possible to close the contour $\gamma$ in the upper-half plane. For $\Im (\sigma) \ge 0$ the Gamma functions yield towers of poles at 
\begin{equation*}
    \sigma = \ii \frac{n^{(i)}}{2 \theta^{(i)}} , \qquad n^{(i)} \in \N .
\end{equation*}
However, using 
\begin{equation*}
\begin{aligned}
    \prod_{i=1}^{N} \Gamma \left( 2 \ii \theta^{(i)} \sigma\right) & \prod_{i=1 }^{N-1} \left( -2 \sinh \left( 2 \pi \theta^{(i)} \sigma \right) \right) \\
    & = (2\pi \ii )^{N-1} \left( - \ii \frac{\pi}{\sinh \left( 2 \pi \sigma \right) } \right) \prod_{i=1}^{N}\frac{1}{\Gamma \left( 1-2 \ii \theta^{(i)} \sigma\right)} ,
\end{aligned}
\end{equation*}
we see that only the $N^{\text{th}}$ tower of poles contributes. We thus deform the contour to encircle these poles, as shown in Figure \ref{fig:residue-contour}, and change variables $\sigma = \ii \frac{n^{(N)}}{2 } + u$. After simple manipulations we get 
\begin{equation*}
\begin{aligned}
    \mz_{\Ync} \left( \mathcal{O}_{\mathrm{pt}} \right) & = (- \ii) \pi^{\frac{\hat{r}}{2}} (2\pi \ii )^{N}C_{\Ync} \sum_{n^{(N)}=0}^{\infty}  \zeta^{\frac{n^{(N)}}{2}} (-1)^{n^{(N)} \sum_{i=1}^{N} \theta^{(i)} } \\
     & \times \oint \frac{\dd u}{2\pi \ii} \zeta^{- \ii u}  \left( \frac{\pi}{\sinh \left( 2 \pi u \right) } \right) e^{2\pi u \sum_{i=1}^{N} \theta^{(i)} } \left( (-1)^{n^{(N)}} + e^{-2 \pi u} \right) \\
    & \times \prod_{i=1}^{N}\frac{1}{\Gamma \left( 1+ \theta^{(i)} n^{(N)} -2 \ii \theta^{(i)} u\right)}  \prod_{\alpha=1}^{\hat{r}} \frac{ \Gamma \left( 1+ d^{(\alpha)} n^{(N)}-2 \ii d^{(\alpha)} u \right) }{\Gamma \left( 1+ d^{(\alpha)} \frac{ n^{(N)}}{2} -\ii d^{(\alpha)} u \right)}  .
\end{aligned}
\end{equation*}
When $n^{(N)}$ is odd the integral has vanishing residue, hence only $n^{(N)}=2n \in 2 \N$ contribute. Fixing $C_{\Ync} = \frac{- \ii }{\pi^{\frac{\hat{r}}{2}} (2\pi \ii)^{N}}$ and taking the residue at the simple pole, we arrive at 
\begin{equation}
\label{eq:holoperiod}
     \mz_{\Ync} \left( \mathcal{O}_{\mathrm{pt}} \right) = \sum_{n=0}^{\infty} \zeta^{n}  \prod_{i=1}^{N}\frac{1}{\left(2\theta^{(i)} n \right)!}  \prod_{\alpha=1}^{\hat{r}} \frac{ \left( 2d^{(\alpha)} n \right)! }{\left( d^{(\alpha)} n \right)!} .
\end{equation}
As claimed, \eqref{eq:holoperiod} is holomorphic in $\zeta$ and normalized to $\lim_{\zeta \to 0} \mz_{\Ync} \left( \mathcal{O}_{\mathrm{pt}} \right) =1$.

\subsubsection{Double covers of \texorpdfstring{$\mathbb{P}^3$}{P3}}

\begin{table}[th]
\centering
\begin{tabular}{|c |c |c |c | c| c|}
	\hline
	\textsc{base} & \textsc{nef partition} & $\mathrm{gcd}$ & $X$ & \textsc{double mirror} & \eqref{eq:gaugefixing} \\
	\hline
	$\cP^{3}$ & $(1^8)$ & 1 &$\cP^{7} [2,2,2,2]$ & yes & yes \\
	$\cP^{3}$ & $(2, 1^6)$ & 1 & $\cP^{5} [4,2]$ & no & yes \\
	$\cP^{3}$ & $(2^2, 1^4)$ & 1 & $\cP^{5}_{(2^2,1^4)} [4,4]$ & no & yes \\
	$\cP^{3}$ & $(2^3,1^2)$ & 1 & $\cP^{6}_{(2^5,1^2)} [4,4,4]$ & no & no \\
	$\cP^{3}$ & $(2^4)$ & 2 & $\widetilde{\cP^{7}} [4,4,4,4]$ & no & no \\
	$\cP^{3}$ & $(3,1^5)$ & 1 & $\cP^{5}_{(3,1^5)} [6,2]$ & no & yes \\
	$\cP^{3}$ & $(3,2,1^3)$ & 1 & $\cP^{5}_{(3,2^2,1^3)} [6,4]$ & no & no \\
	$\cP^{3}$ & $(3,2^2,1)$ & 1 & $\cP^{7}_{(3,2^5,1)} [6,4,4]$ & no & no \\
	$\cP^{3}$ & $(3^2,1^2)$ & 1 & $\cP_{(3^2,2^2,1^2)}^{5} [6,6]$ & no & no \\
	$\cP^{3}$ & $(3^2,2)$ & 1 & $\cP^{6}_{(3^2,2^5)} [6,6,4]$ & no & no \\
	$\cP^{3}$ & $(4,1^4)$ & 1 & $\cP^{4}_{(4,1^4)} [8]$ & no & yes \\
	$\cP^{3}$ & $(4,2,1^2)$ & 1 & $\cP_{(4,2^3,1^2)}^{5} [8,4]$ & no & no \\
	$\cP^{3}$ & $(4,2^2)$ & 2 & $\widetilde{\cP^{5}} [8,4]$ & no & no \\
	$\cP^{3}$ & $(4,3,1)$ & 1 & $\cP_{(4,3,2^3,1)}^{5} [8,6]$ & no & no \\
	$\cP^{3}$ & $(4^2)$ & 2 & $\cP^{5}_{(4^2,2^4)} [8,8]$ & no & no \\
	$\cP^{3}$ & $(5, 1^3)$ & 1 & $\cP^{4}_{(5,2,1^3)} [10]$ & no & no \\
	$\cP^{3}$ & $(5,2,1)$ & 1 & $\cP^{5}_{(5,2^4,1)} [10,4]$ & no & no \\
	$\cP^{3}$ & $(5,3)$ & 1 & $\cP^{5}_{(5,3,2^4)} [10,6]$ & no & no \\
	$\cP^{3}$ & $(6,1^2)$ & 1 & $\cP^{4}_{(6,2^2,1^2)} [12]$ & no & no \\	
	$\cP^{3}$ & $(6,2)$ & 2 & $\cP^{5}_{(6,2^5)} [12,4]$ & no & no \\
	$\cP^{3}$ & $(7,1)$ & 1 & $\cP^{4}_{(7,2^3,1)} [14]$ & no & no \\
	$\cP^{3}$ & $(8)$ & 2 & $\cP^{4}_{(8,2^4)} [16]$ & no & no \\
	\hline
\end{tabular}
\caption{List of Calabi--Yau threefold pairs $(\Ync, X)$ when $Y \xrightarrow{ \ 2:1 \ } \cP^3$. The nef partion is expressed as a partition of $8$. The column `gcd' indicates whether the partition is divisible by 2. When gcd is 2, all weights of $X$ are even, and $\widetilde{\cP^{\bullet}}$ indicates that the weights of $\cP^{\bullet}$ are doubled. The last column indicates whether or not the nef partition is compatible with \eqref{eq:gaugefixing}.}
\label{tab:listCY3}
\end{table}\par

We collect in \textsc{Table} \ref{tab:listCY3} all instances of Calabi--Yau threefold pairs when the toric base is the projective space $\cP^3$.\par
\begin{itemize}
\item The nef partition $(1^8)$ leads to a double mirror \cite{Caldararu:2010ljp}. It is the only instance of double mirror in the list.
\item The nef partition $(8)$ leads to a smooth branching locus. In this case, $Y,\Ync$ and $\cP^4_{(4,1^4)} [8]$ describe the same smooth geometry.
\item The five partitions $(1^8)$, $(2,1^6)$, $(2^2,1^4)$, $(3,1^5)$, $(4,1^4)$, corresponding to partitions of 4 completed with $1^4$, satisfy the hypotheses of \S\ref{sec:LLY}. In these cases, $\hat{r}=4+r$, with $r$ the length of the partition of 4. They have $k=1$ and $\left\{ 1, 2,3, 4 \right\}\setminus \mathcal{J}=\emptyset$.
\item The partitions $(2^4)$, $(4,2^2)$, $(4^2)$, $(6,2)$, $(8)$ have even components. Therefore, these entries have $k=2$ and $\mathcal{J}= \emptyset$. These five partitions yield $X$ which has all weights doubled from the five partitions in the previous bullet point. The GKZ is the same for these $X$ and the ones in the previous bullet point.
\item The remaining nef partitions all have $k=1$, and $X$ is a weighted projective space with weights given by the nef partition, and filled in with $4-\lvert \mathcal{J}\rvert$ weights equal to $2$.
\end{itemize}
All holomorphic $A$-periods \eqref{eq:holoperiod} are listed in \textsc{Table} \ref{tab:listPi0}. We have verified case by case that they match the holomorphic $A$-periods of the corresponding $X$.\par

\begin{table}[th]
\centering
\begin{tabular}{|c |c |}
	\hline
	\textsc{partition} & \textsc{holomorphic $A$-period}  \\
	\hline
	$(1^8)$ &  $\sum_{n=0}^{\infty} \zeta^n \left( \frac{ (2n)!}{(n!)^2}\right)^4 $  \\
	$(2, 1^6)$ & $\sum_{n=0}^{\infty} \zeta^n  \frac{(4n)! (2n)!}{(n!)^6}$ \\
	$(2^2, 1^4)$ & $\sum_{n=0}^{\infty} \zeta^n \left(  \frac{(4n)!}{(2n)! (n!)^2} \right)^2 $  \\
	$(2^3,1^2)$ & $\sum_{n=0}^{\infty} \zeta^n  \frac{((4n)!)^3}{((2n)!)^5 (n!)^2} $ \\
	$(3,1^5)$ & $\sum_{n=0}^{\infty} \zeta^n  \frac{(6n)! (2n)!}{(3n)! (n!)^5}$ \\
	$(3,2,1^3)$ & $\sum_{n=0}^{\infty} \zeta^n  \frac{(6n)! (4n)!}{(3n)! ((2n)!)^2 (n!)^3}$ \\
	$(3,2^2,1)$ & $\sum_{n=0}^{\infty} \zeta^n  \frac{(6n)! ((4n)!)^2}{(3n)! ((2n)!)^5 n!}$ \\
	$(3^2,1^2)$ & $\sum_{n=0}^{\infty} \zeta^n \left(  \frac{(6n)!}{(3n)! (2n)! n!} \right)^2 $ \\
	$(3^2,2)$ & $\sum_{n=0}^{\infty} \zeta^n  \frac{((6n)!)^2 (4n)!}{((3n)!)^2 ((2n)!)^5 }$ \\
	$(4,1^4)$ & $\sum_{n=0}^{\infty} \zeta^n  \frac{(8n)!}{(4n)! (n!)^4}$ \\
	$(4,2,1^2)$ & $\sum_{n=0}^{\infty} \zeta^n  \frac{(8n)!}{((2n)!)^3 (n!)^2}$ \\
	$(4,3,1)$ & $\sum_{n=0}^{\infty} \zeta^n  \frac{(8n)! (6n)!}{(4n)! (3n)! ((2n)!)^3 (n!)}$ \\
	$(5, 1^3)$ & $\sum_{n=0}^{\infty} \zeta^n  \frac{(10n)!}{(5n)!(2n)! (n!)^3}$ \\
	$(5,2,1)$ & $\sum_{n=0}^{\infty} \zeta^n  \frac{(10n)! (4n)!}{(5n)! ((2n)!)^4 n!}$ \\
	$(5,3)$ & $\sum_{n=0}^{\infty} \zeta^n  \frac{(10n)! (6n)!}{(5n)! (3n)! ((2n)!)^4 }$ \\
	$(6,1^2)$ & $\sum_{n=0}^{\infty} \zeta^n  \frac{(12n)!}{(6n)! ((2n)!)^2 (n!)^2}$ \\	
	$(7,1)$ & $\sum_{n=0}^{\infty} \zeta^n  \frac{(14n)!}{(7n)! ((2n)!)^3 n!}$ \\
	\hline
\end{tabular}
\caption{Holomorphic $A$-periods of double covers of $\cP^3$.}
\label{tab:listPi0}
\end{table}\par

Furthermore, for every $X$ appearing in \textsc{Table} \ref{tab:listCY3} we compute the topological numbers, and compare with the computation of the $A$-periods, finding perfect agreement. More precisely, to each $X$ we apply the following algorithm:
\begin{enumerate}
    \item\label{step:chern} We compute the Chern classes $c_2 (X), c_3 (X)$ using the adjunction formula. Letting $J$ denote the pullback to $X$ of the hyperplane class of the ambient weighted projective space, we schematically write $c_2 (X) = c_2 J \wedge J $, $c_2 \in \Z$. We also write $\chi (X) = \int_X c_3 (X) =c_3 J^3$, where $J^3 := \int_X J \wedge J \wedge J$ is the triple intersection number and $c_3 \in \Z$.
    \item We fix a basis of $K_0 (X)$ given by $([\mathcal{O}_X], [\mathcal{O}_{\mathrm{D}}], [\mathcal{O}_{\mathrm{C}}], [\mathcal{O}_{\mathrm{pt}}])$ corresponding, respectively, to the structure sheaf, divisor class, curve class, and skyscraper sheaf. We compute the corresponding $A$-periods.
    \item We fix the overall normalization by demanding that $\mz (\mathcal{O}_{\mathrm{pt}}; \tnc) = 1 + O(\zeta)$.
    \item  In this normalization, we write 
    \begin{equation*}
        \frac{\log (\zeta) }{2\pi \ii }= \varphi - \varphi_0 
    \end{equation*}
    for a constant $\varphi_0 \in \Z$, fixed by demanding that $\mz (\mathcal{O}_{X}; \tnc)$ does not contain a quadratic piece in $\varphi$.
    \item In this normalization, the $A$-periods should satisfy (see \cite[\S 3.2]{Lin:2024fpz} for a review) 
    \begin{equation*}
    \begin{aligned}
        \mz_X (\mathcal{O}_{X}) &= \frac{J^3}{3!}\varphi^3 + \frac{c_2 J^3}{24}\varphi + \frac{\ii \zeta (3)}{8 \pi^3}  \chi (X) \ + O(\zeta), \\
        \mz_X (\mathcal{O}_{\mathrm{D}}) &= \frac{J^3}{2}\varphi^2 - \frac{J^3}{2}\varphi +\frac{c_2 +4}{24} J^3 \ + O(\zeta), \\
        \mz_X (\mathcal{O}_{\mathrm{C}}) &= \varphi + 1 \ + O(\zeta), \\
        \mz_X (\mathcal{O}_{\mathrm{pt}}) &= 1 \ + O(\zeta). 
    \end{aligned}
    \end{equation*}
    We read off the topological numbers $c_2, J^3, \chi (X)$ from $\mz (\mathcal{O}_{X})$ and compare them with the ones computed in step \eqref{step:chern}. The results for $\mz (\mathcal{O}_{\mathrm{D}})$ and $\mz (\mathcal{O}_{\mathrm{C}})$ are used as additional consistency checks. 
\end{enumerate}
The results are reported in \textsc{Table} \ref{tab:Xnumbers}. The consistency of the calculation showcases that the complete intersections $X$ output by Theorem \ref{thm:smoothX} are well-behaved.\par

\begin{rmk}[Appearances in other works]
    The results in \textsc{Table} \ref{tab:listCY3} for the partitions $(1^8)$, $(2^4)$, $(3,1^5)$, $(3^2,1^2)$, $(4,2^2)$ and $(5,1^3)$ agree with the findings of \cite[Table 3]{Katz:2023zan}. Other works studied singular degeneration of some $X$, then taking a non-commutative resolution of the latter: $\cP^{4}_{(4,1^4)} [8]$ and $\cP^5_{(3^2,2^2,1^2)}[6,6]$ in \cite{Schimannek:2025cok} and $\cP^5_{(3,2^2,1^3)} [6,4]$ in \cite{Knapp:2025hnf}.
\end{rmk}

\begin{table}[th]
\centering
\begin{tabular}{|c |c |c |c | c|}
	\hline
	$X$ & $c_2$ & $c_3$ & $J^3$ & $\mz_X (\mathcal{O}_{X})$ \\
	\hline
	$\cP^{7} [2,2,2,2]$ & 4 & $-8$ & 16 & $\frac{8}{3} \varphi^3 + \frac{8}{3} \varphi -128 \frac{\ii \zeta (3)}{(2\pi\ii)^3}$ \\
	$\cP^{5} [4,2]$ & 7 & $-22$ & 8 & $\frac{4}{3} \varphi^3 + \frac{7}{3} \varphi - 176\frac{\ii \zeta (3)}{(2\pi\ii)^3}$ \\
	$\cP^{5}_{(2^2,1^4)} [4,4]$ &  10 & $-36$ & 4 & $\frac{2}{3} \varphi^3 + \frac{5}{3} \varphi - 144\frac{\ii \zeta (3)}{(2\pi\ii)^3}$  \\
	$\cP^{6}_{(2^5,1^2)} [4,4,4]$ & 13 & $-50$ & 2 & $\frac{1}{3} \varphi^3 + \frac{13}{12} \varphi - 100\frac{\ii \zeta (3)}{(2\pi\ii)^3}$ \\
	$\cP^{5}_{(3,1^5)} [6,2]$ &  13 & $-64$ & 4 & $\frac{2}{3} \varphi^3 + \frac{13}{6} \varphi - 256\frac{\ii \zeta (3)}{(2\pi\ii)^3}$  \\
	$\cP^{5}_{(3,2^2,1^3)} [6,4]$ & 16 & $-78$ & 2 & $\frac{1}{3} \varphi^3 + \frac{4}{3} \varphi - 156\frac{\ii \zeta (3)}{(2\pi\ii)^3}$  \\
	$\cP^{7}_{(3,2^5,1)} [6,4,4]$ & 19 & $-92$ & 1 & $\frac{1}{6} \varphi^3 + \frac{19}{24} \varphi - 92\frac{\ii \zeta (3)}{(2\pi\ii)^3}$  \\
	$\cP_{(3^2,2^2,1^2)}^{5} [6,6]$ & 22 & $-120$ & 1 & $\frac{1}{6} \varphi^3 + \frac{11}{12} \varphi - 120\frac{\ii \zeta (3)}{(2\pi\ii)^3}$  \\
	$\cP^{6}_{(3^2,2^5)} [6,6,4]$ & 25 & $-134$ & 1 & $\frac{1}{6} \varphi^3 + \frac{25}{24} \varphi - 134\frac{\ii \zeta (3)}{(2\pi\ii)^3}$ \\
	$\cP^{4}_{(4,1^4)} [8]$ & 22 & $-148$ & 2 & $\frac{1}{3} \varphi^3 + \frac{11}{6} \varphi - 296\frac{\ii \zeta (3)}{(2\pi\ii)^3}$  \\
	$\cP_{(4,2^3,1^2)}^{5} [8,4]$ & 25 & $-162$ & 1 & $\frac{1}{6} \varphi^3 + \frac{25}{24} \varphi - 162\frac{\ii \zeta (3)}{(2\pi\ii)^3}$  \\
	$\cP_{(4,3,2^3,1)}^{5} [8,6]$ & 31 & $-204$ & 1 & $\frac{1}{6} \varphi^3 + \frac{31}{24} \varphi - 204\frac{\ii \zeta (3)}{(2\pi\ii)^3}$ \\
	$\cP^{4}_{(5,2,1^3)} [10]$ & 34 & $-288$ & 1 & $\frac{1}{6} \varphi^3 + \frac{17}{12} \varphi - 288\frac{\ii \zeta (3)}{(2\pi\ii)^3}$ \\
	$\cP^{5}_{(5,2^4,1)} [10,4]$ & 37 & $-302$ & 1 & $\frac{1}{6} \varphi^3 + \frac{37}{24} \varphi - 302\frac{\ii \zeta (3)}{(2\pi\ii)^3}$ \\
	$\cP^{5}_{(5,3,2^4)} [10,6]$ & 43 & $-344$ & 1 & $\frac{1}{6} \varphi^3 + \frac{43}{24} \varphi - 344\frac{\ii \zeta (3)}{(2\pi\ii)^3}$  \\
	$\cP^{4}_{(6,2^2,1^2)} [12]$ & 49 & $-498$ & 1 & $\frac{1}{6} \varphi^3 + \frac{49}{24} \varphi - 498\frac{\ii \zeta (3)}{(2\pi\ii)^3}$ \\	
	$\cP^{4}_{(7,2^3,1)} [14]$ & 67 & $-792$ & 1 & $\frac{1}{6} \varphi^3 + \frac{67}{24} \varphi - 792\frac{\ii \zeta (3)}{(2\pi\ii)^3}$  \\
	\hline
\end{tabular}
\caption{Topological data of the Calabi--Yau complete intersections appearing in \textsc{Table} \ref{tab:listCY3}, and the leading term in the $A$-period of the structure sheaf.}
\label{tab:Xnumbers}
\end{table}\par

%\clearpage
\begin{appendix}
\section{Summary of notation}
\label{app:Notation}
We list the algebraic varieties involved in the main text:
\begin{itemize}
	\item $\cP_{\nabla}$ is a toric variety;
    \item $\cB_{\nabla}$ is a Fano MPCP desingularization of $\cP_{\nabla}$;
	\item $X$ is a Calabi--Yau complete intersection in a toric variety;
	\item $Y$ is a branched double cover of a toric Fano variety;
	\item $\Ync$ is a non-commutative resolution of $Y$;
	\item $\Xpre$ is the pre-quotient of $Y$.
\end{itemize}
Part of the geometric data is characterized by integers:
\begin{itemize}
	\item $N$ is the number of primitive 1-cones generating the fan of $\nabla$;
	\item $n=\dim \cB_{\nabla}$;
	\item $s$ is the dimension of the algebraic torus acting on $\cB_{\nabla}$;
	\item $r$ is the length of the nef partition $\mathcal{I}_1 \sqcup \cdots \sqcup \mathcal{I}_r $ (complete intersections).
    \item $\hat{r}$ is the length of the general nef partition $\mathcal{I}_1 \sqcup \cdots \sqcup \mathcal{I}_{\hat{r}} $ (double covers).
    \item $\hat{r}=r+N$ if the general nef partition allows for gauge fixing.
\end{itemize}
We adhere to the following notation for the indices:
\begin{itemize}
	\item $i \in \left\{ 1, \dots, N\right\}$;
	\item $a \in \left\{ 1, \dots, s\right\}$;
	\item $\alpha \in \left\{ 1, \dots, \hat{r}\right\}$;
	\item $I \in \left\{1, \dots, N+r \right\}$, with the first $r$ entries referring to the $\alpha$-index and the latter $N$ entries referring to the $i$-index.
\end{itemize}
The weights under the action of the $a^{\text{th}}$ $\C^{\ast}$-factor in $(\C^{\ast})^s$ are:
\begin{itemize}
	\item $\ell_a ^{(i)}$ is the weight of the $i^\text{th}$ homogeneous coordinate (ambient);
	\item $\theta_a ^{(i)}$ is the weight of the $i^\text{th}$ homogeneous coordinate (base);
	\item $d_a ^{(\alpha)}$ is the weight of the $\alpha^\text{th}$ section.
\end{itemize}
These weights are collected into arrays of $s$ entries $\underline{\ell}^{(i)}$, $\underline{\theta}^{(i)}$, and $\underline{d}^{(\alpha)}$, respectively. On the other hand, $\vec{\ell}_a$ denotes the array of $N$ entries containing the weights of all homogeneous coordinates under the action of the $a^{\text{th}}$ $\C^{\ast}$, for fixed $a$.\par
$\underline{\ell}^{(i)}$ and $\underline{\theta}^{(i)}$ play the same role, but we use $\underline{\theta}^{(i)}$ to emphasize a toric Fano base of double covers is considered.\par
\begin{itemize}
    \item $\cP^{n}_{\vec{w}} [d^{(1)}, \dots, d^{(r)}]$ denotes the complete intersection of $r$ hypersurfaces of degrees $d^{(\alpha)}$ in the $n$-dimensional weighted projective space of weight vector $\vec{w} \in \N^{n+1}$.
\end{itemize}
Moreover, we use $\ii:= \sqrt{-1}$ to avoid confusion with the index $i$, and $\dd$ for the differential.\par

\section{\texorpdfstring{$A$}{A}-periods of Calabi--Yau complete intersections}
\label{app:CICYperiod}

This appendix is devoted to the analysis of Calabi--Yau complete intersections in toric varieties, using GLSMs.\par
In \S\ref{sec:CICYsetup} we review the generalities of the GLSM and the $A$-period. These are well known and extensively studied, thus we shall be brief. In \S\ref{app:AperiodVGLSM} we introduce the $A$-periods of non-compact Calabi-Yau varieties closely related to the complete intersections of interest.\par 
In \S\ref{app:GKZCICY} we devise a method for systematically reading off the GKZ system that annihilates the $A$-periods of the Calabi-Yau complete intersection, from their integral representation. Closely related computations appeared previously in \cite{Gu:2020ana}, albeit without analysis of the GKZ system. The analysis in \S\ref{app:GKZCICY} is used to substantiate the results in the main text.

\subsection{GLSMs and \texorpdfstring{$A$}{A}-periods for Calabi--Yau complete intersections}
\label{sec:CICYsetup}

The setup is as in \S\ref{sec:toric}. We briefly recall the notation: $X$ is a Calabi--Yau complete intersection in a toric ambient space of complex dimension $n$. The number of affine coordinates is $N$, with weights $\ell^{(i)}_a$ under the action of $\C^{\ast} _a \subset (\C^{\ast})^s$, $i=1, \dots, N$; whence $N=n+s$. The Calabi--Yau $X$ is defined as the complete intersection of $r$ polynomials $f^{(\alpha)}$ of weights $d_a^{(\alpha)}$ under $\C^{\ast} _a$.\par

\subsubsection{GLSMs for complete intersections} 
It is possible to realize this geometry by a GLSM \cite{Witten:1993yc,Morrison:1994fr}, in the chamber $\Re (t_a) \gg +1$ for all $a=1, \dots, s$. Fields and charges of the GLSMs are summarized in \textsc{Table} \ref{tab:GLSMcharges}.\par

\begin{table}[th]
\centering
\begin{tabular}{l |c |c | c}
	field & \# & $U(1)_a$ gauge charge & $U(1)_V$ R-charge \\
	\hline
	$\phi_i$ & $i=1, \dots, N$ & $\ell_a ^{(i)}$ & $2\varepsilon$ \\
	$P_{\alpha}$ & $\alpha=1, \dots, r$ & $- d_a^{(\alpha)}$ & $2-2\varepsilon$\\
	\hline
\end{tabular}
\caption{Field content and charges of the GLSM for Calabi--Yau complete intersections in toric varieties.}
\label{tab:GLSMcharges}
\end{table}\par
For later convenience, we introduce the notation
\begin{equation}
\label{eq:tildeLCICY}
	\tilde{\ell}_a ^{(I)} := \begin{cases} - d_a ^{(\alpha=I)}  & \ 1 \le I \le r \\ \ell_a ^{(i=I-r)} &  \ r+1 \le I \le r+N . \end{cases}
\end{equation}
The superpotential reads 
\begin{equation*}
	W_X = \sum_{\alpha=1}^{r} P_{\alpha} f^{(\alpha)} (\phi) .
\end{equation*}

\subsubsection{B-branes and \texorpdfstring{$A$}{A}-periods of complete intersections}
We consider the category of $U(1)^s$-equivariant matrix factorizations of $W_X$, denoted $ \MF_{U(1)^s} (W_X)$.\par
We define the partition function of an object $\Br^{\prime} \in \MF_{U(1)^s} (W_X)$ as:
\begin{equation}
\label{eq:ZCICY}
	\mz_{X} \left( \Br^{\prime} ; t \right) = \int_{\gamma_t} \dd^s \sigma~e^{\ii \langle t, \sigma\rangle} ~\prod_{i=1}^{N} \Gamma \left( \varepsilon + \ii \langle \underline{\ell}^{(i)}, \sigma \rangle \right) \prod_{\alpha=1}^{r} \Gamma \left( 1 - \varepsilon - \ii \langle\underline{d}^{(\alpha)}, \sigma \rangle \right) ~f_{\Br^{\prime}} (\sigma) .
\end{equation}
An admissible contour $\gamma_t$ in this setting is a middle-dimensional, Weyl-invariant cycle in 
\begin{equation*}
	\mathfrak{t}_{\C} \setminus \left( \bigcup_{i=1}^{N} \bigcup_{n^{(i)} \ge 0} \left\{ \langle \underline{\ell}^{(i)}, \sigma \rangle =\ii \left( n^{(i)} +\varepsilon\right) \right\}\cup \bigcup_{\alpha=1}^{r} \bigcup_{n^{(\alpha)} \ge 1} \left\{ \langle \underline{d}^{(\alpha)}, \sigma \rangle = -\ii \left( n^{(\alpha)} -\varepsilon\right) \right\} \right) ,
\end{equation*}
with asymptotic directions such that \eqref{eq:ZCICY} converges.\par
In the chamber $\Re (t_a) \gg +1$, there exists a derived projection functor \cite{Herbst:2008jq,Ballard:2016ncw,HalpernLeistner:2014}
\begin{equation*}
	\pi : \MF_{U(1)^s} (W) \ \longrightarrow \ D^{b} \mathrm{Coh} (X).
\end{equation*}
This allows us to extract the $A$-period of any complex of coherent sheaves $\pi (\Br^{\prime})$ on $X$ from \eqref{eq:ZCICY}.

\subsubsection{GLSMs and \texorpdfstring{$A$}{A}-periods for line bundles}
\label{app:AperiodVGLSM}

The toric data in \S\ref{sec:toric} define a direct sum of line bundles over the toric Fano variety $\cB_{\nabla} $. One then introduces the non-compact variety 
\begin{equation*}
	V := \mathrm{Tot} \left( F_1 \oplus \cdots \oplus F_r \ \longrightarrow \ \cB_{\nabla}  \right) ,
\end{equation*}
which is Calabi--Yau by construction.\par 
It is possible to describe $V$ with a GLSM, whose fields and charges of are summarized in \textsc{Table} \ref{tab:GLSMambient}. The latter GLSM is obtained from the one for $X$ (\textsc{Table} \ref{tab:GLSMcharges}) upon replacing $W_X$ with $W_V \equiv 0$ \cite{Morrison:1994fr}, and adapting the values of the $U(1)_V$ charges accordingly.\par
\begin{table}[th]
\centering
\begin{tabular}{l |c |c | c}
	field & \# & $U(1)_a$ gauge charge & $U(1)_V$ R-charge \\
	\hline
	$\phi_i$ & $i=1, \dots, N$ & $\ell_a ^{(i)}$ & $0$ \\
	$P_{\alpha}$ & $\alpha=1, \dots, r$ & $- d_a^{(\alpha)}$ & $0$\\
	\hline
\end{tabular}
\caption{Field content and charges of the GLSM for line bundles on toric varieties.}
\label{tab:GLSMambient}
\end{table}\par
Being $W_V$ trivial, a trivial matrix factorization exists, denoted as the object $\varnothing \in \MF_{U(1)^s} (0)$, for which $f_{\varnothing} (\sigma)=1$. The corresponding partition function is 
\begin{equation}
\label{eq:Zambient}
	\mz_{V} \left( t \right) := \mz_{V} \left( \varnothing ; t \right) = \int_{\gamma_t} \dd^s \sigma~e^{\ii \langle t, \sigma\rangle} ~\prod_{i=1}^{N} \Gamma \left(\ii \langle \underline{\ell}^{(i)}, \sigma \rangle \right) \prod_{\alpha=1}^{r} \Gamma \left( -\ii \langle\underline{d}^{(\alpha)}, \sigma \rangle \right) .
\end{equation}
The contour $\gamma_t$ runs along the real locus in $\mathfrak{t}_{\C}$, with a small detour avoiding $\sigma^{a}=0$ for every $a=1, \dots, s$.

\subsection{GKZ system for \texorpdfstring{$A$}{A}-periods of Calabi--Yau complete intersections}
\label{app:GKZCICY}
We now present an operational definition of $A$-periods. Consider a 1-parameter family of partition functions of the form
\begin{equation}
\label{eq:Zvaux}
    \mz^{(v)} \left( \Br^{\prime}; t \right)  = \int_{\gamma_t} \dd^s \sigma~e^{\ii \langle t, \sigma\rangle} ~\prod_{i=1}^{N} \Gamma \left(\ii \langle \underline{\ell}^{(i)}, \sigma \rangle \right) \prod_{\alpha=1}^{r} \Gamma \left( v -\ii \langle\underline{d}^{(\alpha)}, \sigma \rangle \right) f_{\Br^{\prime}} (\sigma) ,
\end{equation}
labeled by $0 \le v \le 1$. Introduce a redundant set of variables $\left\{ c_I , \ I=1 \dots, N+r \right\}$, related to the true moduli $e^{-t_a}$ via 
\begin{equation}
\label{eq:CICYGKZvar}
	\prod_{I=1}^{N+r} \left( - c_I \right)^{\tilde{\ell}_a ^{(I)}} = e^{-t_a} , \qquad \forall a=1, \dots, s .
\end{equation}
\begin{defin}
    With the notation as above, we define the $A$-period to be
    \begin{equation*}
        \pz^{(v)} \left( \Br^{\prime}; c \right)  = \left. \left( \prod_{\alpha=1}^{r} c_{\alpha}^{-v} \right) ~\mz^{(v)} \left( \Br^{\prime}; t \right)  \right\rvert_{\text{\eqref{eq:CICYGKZvar}}} ,
    \end{equation*}
    with the right-hand side in terms of \eqref{eq:Zvaux} subject to the replacement \eqref{eq:CICYGKZvar}.
\end{defin}\par
\medskip
We are interested in the following three situations:
\begin{itemize}
    \item Let $X$ be a Calabi--Yau complete intersection in a toric ambient space, as in \S\ref{sec:CICYsetup}, and recall \eqref{eq:tildeLCICY}. It has $v=1$, whence  
    \begin{equation}\label{eq:AperiodCICY}
        \pz_X \left( \Br^{\prime} ; c \right) = \left( \prod_{\alpha=1}^{r} \frac{1}{c_{\alpha}} \right) \mz_{X} \left( \Br^{\prime} ; t \right)
    \end{equation}
    with the right-hand side proportional to \eqref{eq:ZCICY} subject to the replacement \eqref{eq:CICYGKZvar}.
    \item Let $V$ be the non-compact Calabi--Yau constructed in \S\ref{app:AperiodVGLSM}. It has $v=0$, hence $\pz (\varnothing; c)$ is directly obtained plugging \eqref{eq:CICYGKZvar} in \eqref{eq:Zambient}, with unit coefficient.
    \item Let $Y$ be a Calabi--Yau double cover, and $\Ync$ its non-commutative resolution. It has $v=\frac{1}{2}$, and this definition recovers \eqref{eq:Anc}.
\end{itemize}\par

\begin{thm}\label{thm:CICYGKZ}For every $\Br^{\prime} \in D^{b} \mathrm{Coh} (X)$, \eqref{eq:AperiodCICY} satisfies the GKZ system \eqref{eq:GKZCICYBox}-\eqref{eq:GKZCICYEuler}
with 
\begin{equation}
\label{eq:beta=1CICY}
\beta = \left( \begin{matrix} -1 \\ \vdots \\ -1 \\ 0 \\ \vdots \\ 0\end{matrix} \right) \in \mathbb{Q}^{r+n} .
\end{equation}
\end{thm}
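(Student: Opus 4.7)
I would verify the two classes of operators in \eqref{eq:GKZCICYBox}--\eqref{eq:GKZCICYEuler} separately, working directly from the integral representation \eqref{eq:AperiodCICY}. After absorbing the prefactor $\prod_{\alpha=1}^r c_\alpha^{-1}$ into the exponents, the integrand of $\pz_X(\Br';c)$ takes the uniform form $\prod_I(-c_I)^{-u_I(\sigma)}\Gamma(u_I(\sigma))\,f_{\Br'}(\sigma)$, where I set $u_I(\sigma):=\ii\langle \underline{\tilde\ell}^{(I)},\sigma\rangle + \chi_I$ with $\chi_I=1$ for $I\le r$ and $0$ otherwise, and $\underline{\tilde\ell}^{(I)}\in\Z^s$ denotes the column $(\tilde\ell^{(I)}_a)_{a=1}^s$ of the matrix appearing in \eqref{eq:tildeLCICY}. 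The Euler operators reduce to a chain-rule calculation using $\widetilde{\mathbb{L}}=\ker\mathsf{A}_X$; the box operators follow from a contour-shift argument whose feasibility also reduces to membership in $\widetilde{\mathbb{L}}$.

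\textbf{Euler operators.} Because $u_I(\sigma)$ does not depend on $c$, one has $c_I\,\partial/\partial c_I[(-c_I)^{-u_I}] = -u_I\,(-c_I)^{-u_I}$, so the Euler operator pulls down the factor $-\sum_I(\tilde\nu_I)_j\,u_I(\sigma)$ inside the integral. This expression splits as $-\ii\sum_a\sigma^a(\mathsf{A}_X\tilde{\vec\ell}_a)_j - \sum_{I\le r}(\tilde\nu_I)_j$. The first term vanishes since each generating vector $\tilde{\vec\ell}_a$ of $\widetilde{\mathbb{L}}$ lies in $\ker \mathsf{A}_X$, while the second collapses to $-\delta_{j\le r}$ because $(\tilde\nu_\alpha)_j=\delta_{\alpha j}$ for $\alpha,j\le r$ and vanishes for $j>r$. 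This yields $\Eop_j\pz_X = -\delta_{j\le r}\pz_X$, matching $\beta$ in \eqref{eq:beta=1CICY}.

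\textbf{Box operators.} The identity $u\,\Gamma(u)=\Gamma(u+1)$ gives $(\partial/\partial c_I)^k[(-c_I)^{-u_I}\Gamma(u_I)] = (-c_I)^{-u_I-k}\Gamma(u_I+k)$, so the integrand of $\Box_{\tilde\ell}\pz_X$ equals $f_{\Br'}(\sigma)$ times
\begin{equation*}
\prod_I (-c_I)^{-u_I-\tilde\ell^+_I}\Gamma(u_I+\tilde\ell^+_I) \,-\, \prod_I (-c_I)^{-u_I-\tilde\ell^-_I}\Gamma(u_I+\tilde\ell^-_I).
\end{equation*}
The decisive point is that $\tilde\ell\in\widetilde{\mathbb{L}}$ lies in the row span of $[\tilde\ell^{(I)}_a]_{a,I}$, so there is a unique $\vec w\in\R^s$ with $\langle\underline{\tilde\ell}^{(I)},\vec w\rangle = -\tilde\ell_I$ for every $I$. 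The substitution $\sigma\mapsto\sigma+\ii\vec w$ then sends $u_I+\tilde\ell^-_I\mapsto u_I+\tilde\ell^+_I$ and correspondingly $(-c_I)^{-u_I-\tilde\ell^-_I}\mapsto(-c_I)^{-u_I-\tilde\ell^+_I}$, so the two integrands coincide after the shift and their difference integrates to zero.

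\textbf{Brane factor and main obstacle.} The finite sum representation \eqref{eq:fBsum} reduces the general statement to the case $f_{\Br'}(\sigma)=e^{2\pi\langle q,\sigma\rangle}$ with $q\in\Z^s$, since this term is absorbed into the phase by the imaginary shift $t\mapsto t-2\pi\ii q$, which leaves $e^{-t_a}$ (and therefore $c$) invariant; the GKZ operators in $c$ are insensitive to this shift. The main technical obstacle is justifying the $\sigma$-contour deformation by $\ii\vec w$: one must verify that the translated cycle $\gamma_t+\ii\vec w$ remains homotopic to $\gamma_t$ within $\mathfrak{t}_\C\setminus\mathcal{H}$, which is exactly what admissibility (Definition \ref{def:gammat}) in the chamber $\Re(t_a)\gg 1$ is designed to guarantee. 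For shifts that are not small one factors $\vec w$ as a composition of unit translations along lattice rays and argues inductively, absorbing any crossed residues by deforming $\gamma_t$ within its homotopy class.
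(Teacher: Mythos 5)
Your Euler-operator argument is essentially the paper's: the paper isolates it as a lemma (any integral of the form $\int_{\gamma_t}e^{\ii\langle t,\sigma\rangle}g(\sigma)\,\dd^s\sigma$ is annihilated by the Euler operators with $\beta=0$ after the substitution \eqref{eq:CICYGKZvar}, because $\sum_I \tilde{\nu}_I\,\underline{\tilde{\ell}}^{(I)}=0$), and then tracks the prefactor $\prod_\alpha c_\alpha^{-1}$ separately; you do the same computation in one pass, and it is correct. For the box operators your route is genuinely different: the paper first trades $\pz_X$ for the period $\mz_V$ of the non-compact total space via $\bigl(\prod_\alpha c_\alpha^{-1}\bigr)\mz_X(\varnothing;t)=\prod_\alpha \partial_{c_\alpha}\mz_V(t)$, then rewrites $\mz_V$ with the Euler integral for each Gamma factor, integrates out $\sigma$ to obtain delta-function constraints $\prod_I(\tilde y_I)^{\tilde\ell_a^{(I)}}=1$, and reads off the box operators from the support of the integrand; you instead act with $\Box_{\tilde\ell}$ directly on $\pz_X$, use $u\Gamma(u)=\Gamma(u+1)$, and try to identify the two resulting integrals by an imaginary shift of $\sigma$. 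Your identity $(\partial/\partial c_I)^k[(-c_I)^{-u_I}\Gamma(u_I)]=(-c_I)^{-u_I-k}\Gamma(u_I+k)$ is right, and the approach is viable, but as written it has two concrete gaps.

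First, the brane factor. Writing $f_{\Br'}(\sigma)=e^{2\pi\langle q,\sigma\rangle}$ and ``absorbing'' it into $t\mapsto t-2\pi\ii q$ does not remove it from the integrand viewed as a function of $c$: after substituting \eqref{eq:CICYGKZvar} the factor $e^{2\pi\langle q,\sigma\rangle}$ remains explicitly, and under your shift $\sigma\mapsto\sigma+\ii\vec w$ it produces the constant $e^{2\pi\ii\langle q,\vec w\rangle}$, which spoils the cancellation unless $\vec w\in\Z^s$. You only assert $\vec w\in\R^s$. The fix is available but must be stated: since the rows $\tilde{\vec\ell}_a$ are the images of the basis $\{\vec\ell_a\}$ of $\LR$ under the isomorphism $\LR\cong\widetilde{\LR}$, they form a $\Z$-basis of $\widetilde{\LR}=\ker\mathsf{A}_X$, so every $\tilde\ell\in\widetilde{\LR}$ is an \emph{integral} combination of them, $\vec w\in\Z^s$, and then $f_{\Br'}(\sigma+\ii\vec w)=f_{\Br'}(\sigma)$ for any brane factor of the form \eqref{eq:fBsum}. (This is where the paper's route is more robust: the delta-function constraint is insensitive to the shift $t\mapsto t-2\pi\ii q_k$ because $e^{2\pi\ii q_a}=1$.)

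Second, and more seriously, the contour shift is the crux of your box-operator argument and you leave it unproven. Replacing $\int_{\gamma_t}$ of the shifted integrand by $\int_{\gamma_t+\ii\vec w}$ and then deforming back to $\gamma_t$ requires (i) decay of the integrand throughout the swept region (not just on $\gamma_t$; the real exponential $f_{\Br'}$ and the Gamma asymptotics must be controlled in the chamber $\Re(t_a)\gg1$), and (ii) an argument that no poles of $\Gamma(u_I+\tilde\ell^{\pm}_I)$ are crossed. The pole towers of the first-term integrand sit exactly at the heights reached by the shift (e.g.\ for the quintic the shifted contour passes next to the pole of $\Gamma(1+\ii\sigma)^5$ at $\sigma=\ii$), so whether a residue is picked up depends precisely on the side chosen by the $\varepsilon$-prescription defining $\gamma_t$; ``absorbing any crossed residues by deforming $\gamma_t$ within its homotopy class'' is not an argument — a crossed residue contributes and must be shown to vanish or cancel. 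Admissibility (Definition \ref{def:gammat}) guarantees convergence on $\gamma_t$, not homotopy invariance under finite imaginary translations. Either supply this pole/decay analysis (at least for the basis shifts $\vec w=e_a$, then compose), or adopt the paper's mechanism, which sidesteps the contour deformation entirely by passing to $\mz_V$ and encoding the box relations in the support of a constrained integral.
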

The rest of this section contains the proof of Theorem \ref{thm:CICYGKZ}, broken down in various steps for readability.
\begin{enumerate}[(i)]
    \item The fist step consists in relating $\pz_X$ to $\mz_V$. This task is performed in \S\ref{app:sec:mzV}.
    \item The second step derives the box operators that annihilate $\mz_V$, and is performed in \S\ref{app:sec:BoxOp}.
    \item The third step consists in showing that both $\mz_V$ and $\mz_X$ satisfy the Euler operators with $\beta=0$. This task is performed in \S\ref{app:sec:EuOp}.
    \item The last step consists in showing that if $\mz_X$ satisfies the GKZ system with $\beta=0$, then $\pz_X$ satisfies the GKZ system with $\beta$ as in \eqref{eq:beta=1CICY}. This is done in \S\ref{app:sec:prefactor}.
\end{enumerate}
Putting all the steps together completes the proof. We remark that these manipulations are parallel to the ones used in \cite{Hori:2000kt}, and also in \cite{Hori:2013ika} to relate $A$-periods and $B$-periods of mirror pairs. However, here we keep track of the integration contours in a detailed way. Moreover, we put a concentrated focus on the GKZ system annihilating the $A$-periods.\par

\subsubsection{\texorpdfstring{$A$}{A}-periods of a line bundle}
\label{app:sec:mzV}
We begin by setting $\varepsilon \to 0^{+}$ in \eqref{eq:ZCICY}. This can be done without loss of generality, as we will only be interested in a neighborhood of $e^{-t_a}=0$ for all $a=1, \dots, s$.\par
We then write $f_{\Br^{\prime}} (\sigma)$ in the generic form \eqref{eq:fBsum}, so that 
\begin{equation*}
	\mz_{X} \left( \Br^{\prime} ; t \right) =  \sum_{k \in \mathfrak{N}_{\Br^{\prime}}} (-1)^{\mathsf{r}_{k}} \mz_{X} (\varnothing ; t-2 \pi \ii q_k) ,
\end{equation*}
where we have defined
\begin{equation*}
	 \mz_{X} (\varnothing ;t) := \int_{\gamma_t} \dd^s \sigma~e^{\ii \langle t, \sigma\rangle} ~\prod_{i=1}^{N} \Gamma \left(  \ii \langle \underline{\ell}^{(i)}, \sigma \rangle \right) \prod_{\alpha=1}^{r} \Gamma \left( 1 - \ii \langle\underline{d}^{(\alpha)}, \sigma \rangle \right) .
\end{equation*}
The identity 
\begin{align*}
	e^{\ii \langle t, \sigma\rangle} \Gamma \left( 1- \ii \langle\underline{d}^{(\alpha)}, \sigma \rangle \right) & = e^{\ii \langle t, \sigma\rangle}  \left(- \ii \langle\underline{d}^{(\alpha)}, \sigma \rangle \right) \Gamma \left(- \ii\langle\underline{d}^{(\alpha)}, \sigma \rangle \right) \\
	&= - \sum_{a=1}^{s} d_{a}^{(\alpha)} \frac{ \partial \ }{\partial t_a } e^{\ii \langle t, \sigma\rangle} \Gamma \left(- \ii\langle\underline{d}^{(\alpha)}, \sigma \rangle \right) 
\end{align*}
implies that 
\begin{equation}
\label{eq:ZXtoZVrel}
	\mz_{X} (\varnothing ;t) = \prod_{\alpha=1}^{r} \left( - \sum_{a=1}^{s} d_{a}^{(\alpha)} \frac{ \partial \ }{\partial t_a } \right) \mz_{V} \left( t \right) ,
\end{equation}
with the right-hand side given in terms of \eqref{eq:Zambient}.

\subsubsection{Box operators from integral manipulations}
\label{app:sec:BoxOp}

Using the integral representation of the $\Gamma$-function 
\begin{equation}
\label{eq:gammaint}
	\Gamma (z) = \int_0 ^{\infty} \frac{\dd y}{y} e^{-y + z \log (y)} ,
\end{equation}
we write 
\begin{equation*}
\begin{aligned}
	\mz_{V} \left( t \right)  &= \int_{\gamma_t} \dd^s \sigma~ \int_{(0,\infty)^{N}} \left( \prod_{i=1}^{N} e^{-y_i}  \frac{\dd y_i}{y_i} \right) ~ \int_{(0,\infty)^{r}} \left( \prod_{\alpha=1}^{r} e^{-\eta_{\alpha}}  \frac{\dd \eta_{\alpha}}{\eta_{\alpha}} \right) \\
	& \times \exp \left( \ii \langle t + \sum_{i=1}^{N} \underline{\ell}^{(i)} \log (y_i) - \sum_{\alpha=1}^{r} \underline{d}^{(\alpha)} \log (\eta_{\alpha}) , \sigma\rangle  \right) ,
\end{aligned}
\end{equation*}
where the variables $y_i$ (respectively $\eta_{\alpha}$) come from the integral representation of $\Gamma \left(  \ii \langle \underline{\ell}^{(i)}, \sigma \rangle \right)$ (respectively of $\Gamma \left(- \ii \langle\underline{d}^{(\alpha)}, \sigma \rangle \right)$). We collect the two sets of variables into a unique set $\left\{ y_{I}^{\prime}, \ I=1, \dots, N+r\right\}$ with 
\begin{equation}
\label{eq:changeetaytoyI}
	y_{I}^{\prime} = \begin{cases} \eta_{I} , & \text{ if } 1 \le I \le r \\ y_{I-r}  , & \text{ if } r+1 \le I \le r+N \end{cases}
\end{equation}
and use \eqref{eq:tildeLCICY} to write 
\begin{equation*}
	\mz_{V} \left( t \right)  = \int_{\gamma_t} \dd^s \sigma~ \int_{(0,\infty)^{N+r}} \left( \prod_{I=1}^{N+r} e^{-y_I^{\prime}}  \frac{\dd y_I^{\prime}}{y_I^{\prime}} \right) ~ \exp \left( \ii \langle t + \sum_{I=1}^{N+r} \underline{\tilde{\ell}}^{(I)} \log (y_I^{\prime}) , \sigma\rangle  \right) .
\end{equation*}
Integrating over $\sigma$ (and dropping the ${}^{\prime}$) yields 
\begin{equation*}
	\mz_{V} \left( t \right)  = \int_{(0,\infty)^{N+r}} \left( \prod_{I=1}^{N+r} e^{-y_I}  \frac{\dd y_I}{y_I} \right) ~ \prod_{a=1}^{s} \delta \left( t_a + \sum_{I=1}^{N+r} \tilde{\ell}_a^{(I)} \log (y_I) \right) .
\end{equation*}
Scaling the variables $y_I = - c_I \tilde{y}_I$, with the coefficients chosen to satisfy \eqref{eq:CICYGKZvar}, the integral becomes 
\begin{equation}
\label{eq:ZhatBCICYint}
	\mz_{V} \left( t \right)  = \int_{\mathcal{R}_{N+r}} \left( \prod_{I=1}^{N+r} e^{c_I\tilde{y}_I}  \frac{\dd \tilde{y}_I}{\tilde{y}_I} \right) ~ \prod_{a=1}^{s} \delta \left( \log  \left[ \prod_{I=1}^{N+r} (\tilde{y}_I)^{\tilde{\ell}_a^{(I)}} \right] \right) .
\end{equation}
The integration contour is 
\begin{equation*}
	\mathcal{R}_{N+r} := e^{- \ii \Arg (c_1)} \R_{< 0} \times \cdots \times e^{- \ii \Arg (c_{N+r})} \R_{< 0} .
\end{equation*}
Note that shifting $t \mapsto t - \ii 2 \pi q_k$ does not affect the relation \eqref{eq:CICYGKZvar}.\par
\medskip
The $\delta$-distributions enforce constraints 
\begin{equation*}
	 1= \left( \prod_{I \ : \ \tilde{\ell}_a ^{(I)} >0} ( \tilde{y}_I)^{ \tilde{\ell}_a ^{(I)}} \right) \cdot \left( \prod_{I \ : \ \tilde{\ell}_a ^{(I)} <0} ( \tilde{y}_I)^{ \tilde{\ell}_a ^{(I)}} \right)
\end{equation*}
for each $a=1, \dots, s$, which are conveniently rewritten in the form 
\begin{equation}
\label{eq:ZhatBCICYconstraint}
	\prod_{I \ : \ \tilde{\ell}_a ^{(I)} >0} ( \tilde{y}_I)^{ \tilde{\ell}_a ^{(I)}} - \prod_{I \ : \ \tilde{\ell}_a ^{(I)} <0} ( \tilde{y}_I)^{ - \tilde{\ell}_a ^{(I)}} =0 .
\end{equation}
In other words, the integrand is supported on a $(n+s)$-dimensional locus inside $\mathcal{R}_{N+r}$, cut out by \eqref{eq:ZhatBCICYconstraint}.\par
It follows that the differential operators 
\begin{equation*}
	\prod_{I \ : \ \tilde{\ell}_a ^{(I)} >0}  \left( \frac{\partial \ }{\partial c_I} \right)^{ \tilde{\ell}_a ^{(I)}} - \prod_{I \ : \ \tilde{\ell}_a ^{(I)} <0} \left( \frac{\partial \ }{\partial c_I} \right)^{ - \tilde{\ell}_a ^{(I)}  } 
\end{equation*}
annihilate \eqref{eq:ZhatBCICYint}, as a consequence of \eqref{eq:ZhatBCICYconstraint}.

\subsubsection{Euler operators}
\label{app:sec:EuOp}
We now demonstrate that, upon substituting \eqref{eq:CICYGKZvar}, $\mz_V$ is annihilated by the Euler operator \eqref{eq:GKZCICYEuler} with $\beta=0$. In fact, it is easy to prove a more general version of this statement.
\begin{lem}\label{lem:EuOpCICY}
    Let $g: \mathfrak{t}_{\C} \longrightarrow \C$ and consider 
    \begin{equation*}
        \mz [g] = \int_{\gamma_t} \dd^s \sigma~e^{\ii \langle t, \sigma\rangle} ~g(\sigma)
    \end{equation*}
    with $\gamma_t \subset \mathfrak{t}_{\C}$ a half-dimensional cycle such that $g\vert_{\gamma_t}$ has no singularities and the integral converges. With the replacement \eqref{eq:CICYGKZvar}, $\mz [g]$ is annihilated by the Euler operator \eqref{eq:GKZCICYEuler} with $\beta=0$.
\end{lem}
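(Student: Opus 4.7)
The plan is to exploit the fact that, after the substitution \eqref{eq:CICYGKZvar}, the $c_I$-dependence of $\mz[g]$ enters only through the exponential factor $e^{\ii \langle t, \sigma\rangle}$. Consequently the Euler operator $\Eop_j$ commutes with the $\sigma$-integral (the hypotheses on $g\vert_{\gamma_t}$ justify differentiation under the integral sign), so it suffices to verify that $\Eop_j$ annihilates $e^{\ii \langle t, \sigma\rangle}$ for each $j=1,\ldots,r+n$.

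Taking (any branch of the) logarithm of \eqref{eq:CICYGKZvar} gives $-t_a = \sum_{I=1}^{r+N} \tilde{\ell}_a^{(I)} \log(-c_I)$, so by the chain rule
\begin{equation*}
    c_I \frac{\partial}{\partial c_I} = -\sum_{a=1}^{s} \tilde{\ell}_a^{(I)} \frac{\partial}{\partial t_a}
\end{equation*}
as an operator on smooth functions of $t$ pulled back via \eqref{eq:CICYGKZvar}. Substituting this into the definition \eqref{eq:GKZCICYEuler} of $\Eop_j$ and acting on $e^{\ii \langle t, \sigma\rangle}$ produces a linear combination of $\ii \sigma_a\, e^{\ii \langle t, \sigma\rangle}$ whose coefficients are
\begin{equation*}
    -\sum_{I=1}^{r+N} [\mathsf{A}_X]_{I,j} \, \tilde{\ell}_a^{(I)} , \qquad a=1,\ldots,s ,
\end{equation*}
i.e.~the $j$-th components of the images under $\mathsf{A}_X$ of the vectors with entries $\tilde{\ell}_a^{(I)}$ defined in \eqref{eq:tildeLCICY}.

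By construction (cf.~\S\ref{sec:latticeL} and the discussion around \eqref{eq:tildenutildeL}), these vectors form a basis of $\widetilde{\mathbb{L}} = \ker(\mathsf{A}_X)$, so every such coefficient vanishes identically. Hence $\Eop_j e^{\ii \langle t, \sigma\rangle} = 0$ for every $j$, and bringing the operator back under the integral yields $\Eop_j \mz[g] = 0$, which is the claim with $\beta = 0$.

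I expect essentially no obstacle: the argument reduces to a one-line chain-rule computation combined with the defining property of $\widetilde{\mathbb{L}}$. The only mild points to verify are that (a) the choice of branch of $\log(-c_I)$ is immaterial, since only the differentials $\partial t_a/\partial c_I$ enter; and (b) the exchange of derivative and $\sigma$-integral is legitimate, which follows routinely from the standing assumption that $g\vert_{\gamma_t}$ is non-singular and the integral absolutely convergent.
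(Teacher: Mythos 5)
Your proof is correct and follows essentially the same route as the paper: both arguments note that after the substitution \eqref{eq:CICYGKZvar} the $c_I$-dependence sits entirely in $e^{\ii\langle t,\sigma\rangle}$, apply the Euler operator under the integral, and conclude from the fact that the vectors $(\tilde{\ell}_a^{(I)})_I$ lie in $\ker(\mathsf{A}_X)$ (i.e.\ $\sum_I \tilde{\nu}_I\,\tilde{\ell}_a^{(I)}=0$) that the resulting coefficient vanishes. The only cosmetic difference is that you phrase the computation via the chain rule $c_I\partial_{c_I}=-\sum_a\tilde{\ell}_a^{(I)}\partial_{t_a}$, whereas the paper rewrites $e^{\ii\langle t,\sigma\rangle}=\prod_I c_I^{-\ii\langle\underline{\tilde{\ell}}^{(I)},\sigma\rangle}$ and differentiates directly.
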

\begin{proof}
    Plugging \eqref{eq:CICYGKZvar} in the integrand, we have that 
    \begin{equation*}
        \mz [g] = \int_{\gamma_t} \dd^s \sigma~\prod_{I=1}^{N+r} c_I^{-\ii \langle \underline{\tilde{\ell}}^{(I)}, \sigma\rangle} ~g(\sigma) .
    \end{equation*}
    We apply $\sum_{J=1}^{N+r} \tilde{\nu}_{J,j} \left( c_J \frac{\partial \ }{\partial c_J} \right) $ to this expression and obtain 
    \begin{equation*}
    \begin{aligned}
        \sum_{J=1}^{N+r} \tilde{\nu}_{J,j} \left( c_J \frac{\partial \ }{\partial c_J} \right) \mz [g] &= \int_{\gamma_t} \dd^s \sigma~\left[ \sum_{J=1}^{N+r} \tilde{\nu}_{J,j} \left( -\ii \langle \underline{\tilde{\ell}}^{(J)}, \sigma\rangle \right)\prod_{I=1}^{N+r} c_I^{-\ii \langle \underline{\tilde{\ell}}^{(I)}, \sigma\rangle} \right] ~g(\sigma) \\
        &=  -\ii \int_{\gamma_t} \dd^s \sigma~ \langle \left[ \sum_{J=1}^{N+r} \tilde{\nu}_{J,j} \underline{\tilde{\ell}}^{(J)}  \right] , \sigma\rangle \prod_{I=1}^{N+r} c_I^{-\ii \langle \underline{\tilde{\ell}}^{(I)}, \sigma\rangle}  ~g(\sigma) .
    \end{aligned}
    \end{equation*}
    However, by construction (recall \S\ref{sec:CICYsetup}) we have that 
    \begin{equation*}
         \sum_{J=1}^{N+r} \tilde{\nu}_{J} \underline{\tilde{\ell}}^{(J)} = 0 ,
    \end{equation*}
    thus proving the claim.
\end{proof}
Let $\mz [g]$ be as in Lemma \ref{lem:EuOpCICY}, and fix $v \in \C$. We immediately have that
\begin{equation*}
    \widehat{\mz}^{(v)}[g] := \left( \prod_{\alpha=1}^{r} c_{\alpha} \right)^{-v} \mz [g]
\end{equation*}
is annihilated by the Euler operator \eqref{eq:GKZCICYEuler} with $\beta_j=-v$ for $j=1, \dots, r$ and, and $\beta_j=0$ for $j>r$. Indeed, 
\begin{equation*}
\begin{aligned}
    \sum_{J=1}^{N+r} \tilde{\nu}_{J,j} \left( c_J \frac{\partial \ }{\partial c_J} \right)  \widehat{\mz}^{(v)}[g] &= \left( \prod_{\alpha=1}^{r} c_{\alpha} \right)^{-v}  \sum_{J=1}^{N+r} \tilde{\nu}_{J,j} \left( c_J \frac{\partial \ }{\partial c_J} \right)  \mz [g] \\
    & \quad + \left[ \sum_{J=1}^{N+r} \tilde{\nu}_{J,j} \left( c_J \frac{\partial \ }{\partial c_J} \right) \left( \prod_{\alpha=1}^{r} c_{\alpha} \right)^{-v} \right] \mz [g] \\
    &= -v  \left[ \sum_{J=1}^{r} \tilde{\nu}_{J,j} \right]  \widehat{\mz}^{(v)}[g] \\
    &= \begin{cases} -v ~ \widehat{\mz}^{(v)}[g] & \ 1 \le j \le r \\ 0 & \ \text{otherwise} . \end{cases}
\end{aligned}
\end{equation*}
The first piece on the right-hand side vanishes by Lemma \ref{lem:EuOpCICY}, and in the last line we have used that $ \tilde{\nu}_{J,j} = \delta_{J,j}$ if $1 \le J \le r$.

\subsubsection{Compact Calabi--Yau from non-compact Calabi--Yau}
\label{app:sec:prefactor}

To conclude the proof, we observe that, from the definition \eqref{eq:CICYGKZvar}, we have 
\begin{equation*}
    \frac{\partial \ }{\partial c_{\alpha}} = \sum_{a=1}^{s} \frac{\partial t_a } {\partial c_{\alpha}} \frac{\partial \ }{\partial t_a } = - \frac{1}{c_{\alpha}} \sum_{a=1}^{s} d_a^{(\alpha)} \frac{\partial \ }{\partial t_a } . 
\end{equation*}
Thus, identity \eqref{eq:ZXtoZVrel} is equivalent to 
\begin{equation*}
	\left( \prod_{\alpha=1}^{r} \frac{1}{c_{\alpha}}\right) \mz_{X} (\varnothing ;t) = \prod_{\alpha=1}^{r} \left(\frac{\partial \ }{\partial c_{\alpha}}  \right) \mz_{V} \left( t \right) .
\end{equation*}
Therefore, if a differential operator annihilates $\mz_{V} $ and commutes with $\frac{\partial \ }{\partial c_{\alpha}}$, then it annihilates $\widehat{\mz}_X$ as defined in \eqref{eq:AperiodCICY}.\par
Applying this statement to the box operators from \eqref{eq:GKZCICYBox}, and combining with the result from \S\ref{app:sec:EuOp}, concludes the proof of Theorem \ref{thm:CICYGKZ}.

\subsection{Modifications preserving the GKZ system}
\label{app:GKZlemma}
This appendix contains the proof of Lemma \ref{lemma:GKZlemma}.\par
We begin with point (1) of the Lemma. Choose an arbitrary subset $\mathfrak{M}_{\downarrow} \subseteq \left\{1, \dots, N \right\} $, and denote for shortness $\mathfrak{M}_{\uparrow} := \left\{1, \dots, N \right\} \setminus \mathfrak{M}_{\downarrow}$. We want to show that the derivation in \S\ref{app:GKZCICY} is unchanged under the replacement 
\begin{equation*}
    \prod_{i=1}^{N} \Gamma \left(  \ii \langle \underline{\ell}^{(i)}, \sigma \rangle \right)  \ \mapsto \ \frac{ \prod_{i \in \mathfrak{M}_{\uparrow} } \Gamma \left(  \ii \langle \underline{\ell}^{(i)}, \sigma \rangle \right)  }{ \prod_{i \in \mathfrak{M}_{\downarrow} } \Gamma \left(  1-\ii \langle \underline{\ell}^{(i)}, \sigma \rangle \right)  } ,
\end{equation*}
possibly up to a shift of $t$ in $\ii \pi \Z^s$ in the identification of parameters \eqref{eq:CICYGKZvar}.\par 
This modification does not affect the relation between $\mz_X$ and $\mz_V$. Therefore, we only have to study the replacement of Gamma functions in \eqref{eq:Zambient}, leading to the consideration of 
\begin{equation*}
	\mz_{V}^{\text{\rm mod}} \left( t \right) := \int_{\gamma_t} \dd^s \sigma~e^{\ii \langle t, \sigma\rangle} ~\frac{ \prod_{i \in \mathfrak{M}_{\uparrow} } \Gamma \left(  \ii \langle \underline{\ell}^{(i)}, \sigma \rangle \right)  }{ \prod_{i \in \mathfrak{M}_{\downarrow} } \Gamma \left(  1-\ii \langle \underline{\ell}^{(i)}, \sigma \rangle \right)  } \prod_{\alpha=1}^{r} \Gamma \left( -\ii \langle\underline{d}^{(\alpha)}, \sigma \rangle \right) .
\end{equation*}

\subsubsection{Integral manipulations}
\begin{figure}
    \centering
    \includegraphics[width=0.8\linewidth]{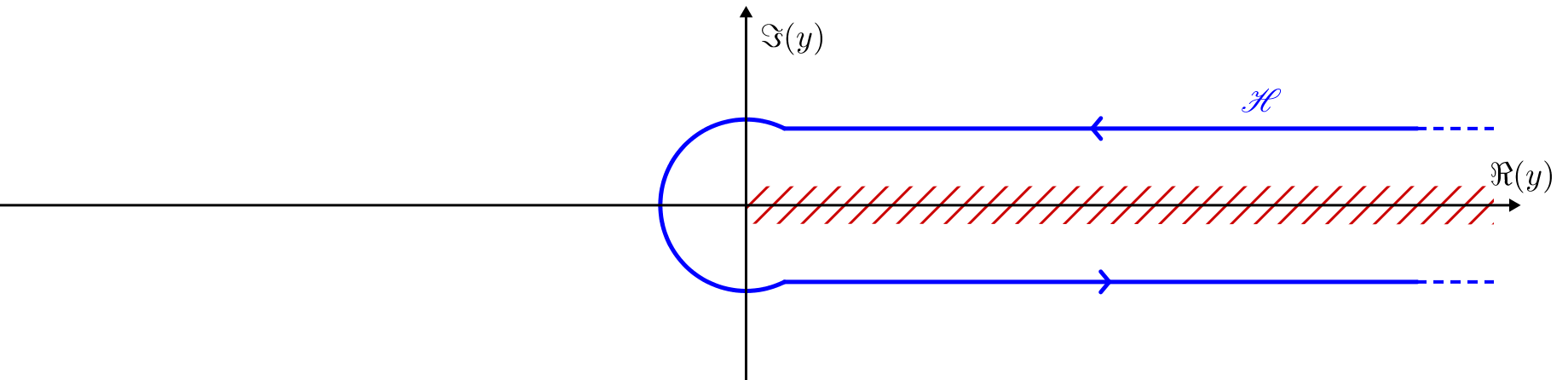}
    \caption{Hankel contour $\mathscr{H}$, in blue. The integrand in \eqref{eq:invgamma} has a branch cut along the positive real axis.}
    \label{fig:Hankel-contour}
\end{figure}

We utilize the integral representation \eqref{eq:gammaint} for all Gamma functions in the numerator. For the Gamma functions in the denominator, we use the integral representation 
\begin{equation}
\label{eq:invgamma}
    \frac{1}{\Gamma (1-z)} = \int_{\mathscr{H}} \frac{\dd y}{y} e^{-y + z \log (-y)} ,
\end{equation}
where the integration cycle is a Hankel contour encircling the positive real line, avoiding the branch cut in $\Re (y) \ge 0$, depicted in Figure \ref{fig:Hankel-contour}. This gives:
\begin{equation*}
\begin{aligned}
	\mz_{V}^{\text{\rm mod}} \left( t \right)  &= \int_{\gamma_t} \dd^s \sigma~ \int_{(0,\infty)^{\lvert \mathfrak{M}_{\uparrow}\rvert } }\int_{\mathscr{H}^{\lvert \mathfrak{M}_{\downarrow}\rvert } } \left( \prod_{i=1}^{N} e^{-y_i}  \frac{\dd y_i}{y_i} \right)~ \int_{(0,\infty)^{r}} \left( \prod_{\alpha=1}^{r} e^{-\eta_{\alpha}}  \frac{\dd \eta_{\alpha}}{\eta_{\alpha}} \right) \\
	& \times \exp \left( \ii \langle t + \ii \pi \sum_{i\in \mathfrak{M}_{\downarrow}} \underline{\ell}^{(i)} + \sum_{i=1}^{N} \underline{\ell}^{(i)} \log (y_i) - \sum_{\alpha=1}^{r} \underline{d}^{(\alpha)} \log (\eta_{\alpha}) , \sigma\rangle  \right) .
\end{aligned}
\end{equation*}
We fix a convention for $\log (-1)$, so that the minus sign in the argument of $\log (-y_i)$ for all $i\in \mathfrak{M}_{\downarrow}$ results in a $(\ii \pi \Z)^s$-shift of $t$.\par
Following \S\ref{app:GKZCICY}, we make the change of variables \eqref{eq:changeetaytoyI} followed by $y_I=-c_I \tilde{y}_I$, now subject to 
\begin{equation}
\label{eq:modGKZrelCI}
    \prod_{\alpha=1}^{r} \left( - c_{\alpha} \right)^{-d_a ^{(\alpha)}} \prod_{i \in \mathfrak{M}_{\uparrow} } \left( - c_{r+i} \right)^{\ell_a ^{(i)}} \prod_{i \in \mathfrak{M}_{\downarrow} } \left( c_{r+i} \right)^{\ell_a ^{(i)}} = e^{-t_a  } , \qquad  a=1, \dots, s .
\end{equation}
This yields 
\begin{equation*}
	\mz_{V}^{\text{\rm mod}} \left( t \right)  = \int_{\mathcal{R}^{\text{\rm mod}}_{N+r}} \left( \prod_{I=1}^{N+r} e^{c_I\tilde{y}_I}  \frac{\dd \tilde{y}_I}{\tilde{y}_I} \right) ~ \prod_{a=1}^{s} \delta \left( \log  \left[ \prod_{I=1}^{N+r} (\tilde{y}_I)^{\tilde{\ell}_a^{(I)}} \right] \right) ,
\end{equation*}
which differs from the unmodified case only in the relation \eqref{eq:modGKZrelCI} and in the integration cycle. The integration contour in this case is 
\begin{equation*}
	\mathcal{R}^{\text{\rm mod}}_{N+r} := \prod_{i=1}^{r} e^{- \ii \Arg (c_i)} \R_{< 0} \prod_{i \in \mathfrak{M}_{\uparrow}} e^{- \ii \Arg (c_{r+i})} \R_{< 0} \prod_{i \in \mathfrak{M}_{\downarrow}} e^{- \ii \Arg (c_{r+i}) - \ii \pi} \mathscr{H}.
\end{equation*}
The derivation of the box operators in not affected by these changes.

\subsubsection{Second statement in the lemma}
We now prove point (2) in Lemma \ref{lemma:GKZlemma}.\par
We observe that inserting 
\begin{equation*}
    \frac{\pi}{ \sin (\ii \pi \langle \underline{\ell}, \sigma \rangle )} = \Gamma \left(\ii \langle \underline{\ell}, \sigma \rangle \right) \Gamma \left( 1-\ii \langle \underline{\ell}, \sigma \rangle \right)
\end{equation*}
in the integrand, is tantamount to consider the $A$-periods of a different Calabi--Yau complete intersection $X_0$, constructed by 
\begin{enumerate}[(i)]
    \item adding a homogeneous coordinate $\phi_{0}$ of weight $\underline{\ell}$ to the toric ambient space, and 
    \item intersecting with an additional hypersurface of multi-degree $\underline{\ell}$.
\end{enumerate}
We then use the invariance of the $A$-periods of $X_0$ under deformations of the complex structure. In particular, the $A$-periods of $X_0$ only depend on the K\"ahler moduli $\zeta_a$, and not on the complex structure moduli. It is thus possible to choose the additional hypersurface from (ii) in a non-generic way, without affecting the $A$-periods. Selecting it to be $\{ \phi_0=0 \}$, we get an isomorphism between $X_0$ (on this positive-codimensional locus of its complex structure moduli space) and $X$.\par
Concerning the toric data, passing to $X_0$ introduces to additional vertices $\tilde{\nu}_0, \tilde{\nu}_{N+r+1}$ and two new variables, namely $c_0,c_{N+r+1}$ associated to the new coordinate $\phi_0$ and the corresponding section $s_0(\phi)=\phi_0$. The GKZ for $X_0$ thus has more redundancy than the GKZ for $X$, which is recovered by fixing $c_0\equiv 1\equiv c_{N+r+1}$.\par
We thus conclude that the GKZ system that annihilates the $A$-periods of $X$ also annihilates the $A$-periods of $X_0$, showing point (2) of Lemma \ref{lemma:GKZlemma}.

\section{GKZ moduli from GLSM moduli}
\label{app:FImoduli}

In this appendix, we clarify the meaning of the moduli $c_I$ and $x_I$ appearing in the GKZ systems (\S\ref{sec:GKZdef}) from the GLSM perspective. In particular, our goal is to show that the GLSM automatically encodes:
\begin{itemize}
    \item The relations \eqref{eq:xtotnc} and \eqref{eq:CICYGKZvar} between the GKZ moduli and the moduli $e^{-t_a}$;
    \item The factors in the definition of the $A$-periods \eqref{eq:Anc} and \eqref{eq:AperiodCICY}.
\end{itemize}\par
We begin with a GLSM describing an affine space $\mathbb{A}:=\C^{N_+}$, where $N_+=N+\hat{r}$ or $N_+=N+r$ in the cases of interest in the main text. There is a natural $(\C^{\ast})^{N_+}$-action on $\mathbb{A}$, with moduli provided by a choice of character 
\begin{equation*}
    (c_{1}, \dots, c_{N_{+}} ) \in \mathrm{Hom} \left( \pi_1 \left( (\C^{\ast})^{N_+} \right), \C^{\ast} \right) \cong  \mathrm{Hom} \left( \Z^{N_+},\C^{\ast} \right) \cong  (\C^{\ast})^{N_+} .
\end{equation*}\par
\begin{table}[th]
\centering
\begin{tabular}{l |c |c | c |c}
	field & \# & $U(1)_I$ charge & $U(1)_a$ gauge charge & $U(1)_V$ R-charge\\
	\hline
	$\phi_I$ & $I=1, \dots, N_+$ & $+1$ & $\tilde{\ell}_a ^{(I)}$ & $2v^{(I)}$\\
	\hline
\end{tabular}
\caption{Field content and charges of the GLSM for affine varieties $\mathbb{A}=\C^{N_+}$ and symplectic quotients $X_+$.}
\label{tab:GLSMaffine}
\end{table}\par
Next, we are interested in a variety $X_{+}$ obtained as the symplectic quotient of $\mathbb{A}$ by a selected $(\C^{\ast})^{s} \subset (\C^{\ast})^{N_+}$, which acts on the $I^{\text{th}}$ affine coordinate with weights $\underline{\ell}^{(I)}$.
More precisely, we define 
\begin{equation*}
    X_+ = \dd W^{-1} (0) \cap \mathbb{A} / (\C^{\ast})^{s} 
\end{equation*}
for a $(\C^{\ast})^{s}$-invariant, holomorphic function $W : \mathbb{A} \longrightarrow \C$. $X_+$ inherits a Calabi--Yau structure if $\sum_{I=1}^{N_+}\tilde{\underline{\ell}}^{(I)} = \underline{0}\in \R^s$. We further demand that $W$ is quasi-homogeneous of weight $2$ under the action of $U(1)_V$, and assign $U(1)_V$ charges $2 v^{(I)}$ accordingly. The GLSM is summarized in \textsc{Table} \ref{tab:GLSMaffine}.\par
\begin{rmk}
    In the physics language, $(\C^{\ast})^{N_+}$ is the maximal torus of the flavor symmetry $GL (N_+, \C)$, and we gauge a rank-$s$ subgroup of it. The equivariant parameters for the $(\C^{\ast})^{N_+}$-action are background twisted chiral fields, and the gauging operation promotes the corresponding combination of background fields to dynamical fields.
\end{rmk}\par
The $(\C^{\ast})^{N_+}$-equivariant $A$-period of $\mathbb{A}$, with equivariant parameters 
$$\tilde{\sigma}=(\tilde{\sigma}^{(1)}, \dots, \tilde{\sigma}^{(N_+)})$$ 
is given by 
\begin{equation*}
    \mz_{\mathbb{A}} (\Br; c) = \prod_{I=1}^{N_+} c_I^{- \ii \tilde{\sigma}^{(I)}} \Gamma \left( \ii \tilde{\sigma}^{(I)} \right) ~f_{\Br} (\tilde{\sigma}) ,
\end{equation*}
for any $\Br$. To obtain the $A$-period of $X_+$, we identify $\tilde{\sigma}^{(I)}= \langle \tilde{\underline{\ell}}^{(I)} , \sigma \rangle - \ii v^{(I)}$ and integrate over $\sigma$. This change yields
\begin{equation*}
    \pz_{X_+} (\Br_+ ; c) = \left( \prod_{I=1}^{N_+} c_I^{-v^{(I)}} \right) \int_{\gamma} \dd \sigma \prod_{I=1}^{N_+} c_I^{- \ii \langle \tilde{\underline{\ell}}^{(I)} , \sigma \rangle} \Gamma \left( v^{(I)} + \ii\langle \tilde{\underline{\ell}}^{(I)} , \sigma \rangle \right) ~f_{\Br_+} (\sigma)
\end{equation*}
where we have denoted $\Br_+ $ any object in $ \mathrm{MF}_{U(1)^s} (W)$ such that 
\begin{equation*}
    f_{\Br_+} (\sigma) = f_{\Br} (\tilde{\sigma})\Big\vert_{\tilde{\sigma}^{(I)} = \langle \tilde{\underline{\ell}}^{(I)} , \sigma \rangle - \ii v^{(I)}}
\end{equation*}
obtained from a suitable choice of matrix factorization $\Br$. Using  
\begin{equation*}
     \prod_{I=1}^{N_+}c_I^{-  \tilde{\ell}_a^{(I)}} = e^{-t_a} , \qquad a=1, \dots, s ,
\end{equation*}
we arrive at the expected form of the partition function of $X_+$, multiplied by a prefactor 
\begin{equation*}
    \pz_{X_+} (\Br_+ ; c) = \left( \prod_{\substack{I=1 \\ I \ : \ v^{(I)} \ne 0 }}^{N_+} c_I^{-v^{(I)}} \right) \int_{\gamma} \dd \sigma e^{\ii \langle t , \sigma \rangle} \prod_{I=1}^{N_+} \Gamma \left( v^{(I)} + \ii\langle \tilde{\underline{\ell}}^{(I)} , \sigma \rangle \right) ~f_{\Br_+} (\sigma) .
\end{equation*}
This expression is consistent with the definition of $A$-period given in \eqref{eq:Anc} and \eqref{eq:AperiodCICY}.\par
In conclusion, tracking the parameters through the symplectic quotient construction, the GLSM automatically yields not only the partition function, but the full $A$-period, equipped with the $c$-dependent prefactor and the maps \eqref{eq:xtotnc} and \eqref{eq:CICYGKZvar} between parameters.

\end{appendix}
\bibliography{GLSM}
\end{document}